\documentclass[11pt]{amsart}
\usepackage{amsthm}
\usepackage{amsmath,amstext,amsthm,amsfonts,amssymb,amsthm}
\usepackage{multicol}
\usepackage{latexsym}
\usepackage{color}
\usepackage{graphicx}
\usepackage{epsf}
\usepackage{epsfig}
\usepackage{epic}
\usepackage{graphics}
\usepackage{verbatim}
\usepackage{color}
\usepackage{hyperref}
\newtheorem{theorem}{Theorem}[section]
\newtheorem{lemma}[theorem]{Lemma}

\theoremstyle{definition}
\newtheorem{definition}[theorem]{Definition}
\newtheorem{proposition}[theorem]{Proposition}

\theoremstyle{remark}
\newtheorem{remark}[theorem]{Remark}
\numberwithin{equation}{section}
\newcommand{\abs}[1]{\lvert#1\rvert}

\newcommand{\HI}{\mathfrak{H}}

\newcommand{\R}{\mathbb{R}}

\newcommand{\D}{\mathcal{D}}
\newcommand{\C}{\mathbb{C}}

\newcommand{\W}{\mathcal{W}_{\eta}}
\newcommand{\B}{\mathcal{B}}

\newcommand{\qu}{\mathbf{q}}

\newcommand{\oqu}{\overline{q}}
\newcommand{\IV}{I_{V_\mathbb{H}^R}}

\textwidth14.75cm 
\textheight22cm
\hoffset-.75cm
\voffset-.25cm
\begin{document}
\title[Continuous Frames]{S-spectrum and Associated Continuous Frames on  Quaternionic Hilbert Spaces}
\author{M. Khokulan$^1$, K. Thirulogasanthar$^2$, B. Muraleetharan$^1$}
\address{$^{1}$ Department of Mathematics and Statistics, University of Jaffna, Thirunelveli, Jaffna, Srilanka. }
\address{$^{2}$ Department of Computer Science and Software Engineering, Concordia University, 1455 de Maisonneuve Blvd. West, Montreal, Quebec, H3G 1M8, Canada.}
\email{mkhokulan@gmail.com, santhar@gmail.com, b.b.muraleetharan@gmail.com}
\subjclass{Primary 42C40, 42C15}
\date{\today}
\begin{abstract}
As needed for the construction of rank $n$ continuous frames on a right quaternionic Hilbert space the so-called S-spectrum of a right quaternionic operator is studied. Using the S-spectrum, as for the case of  complex Hilbert spaces, along the lines of the arguments of  {\em Ann.Phys.}, {\bf 222} (1993), 1-37., various classes of rank $n$ continuous frames and their equivalencies on a right quaternionic Hilbert space are presented.
\end{abstract}
\keywords{S-Spectrum, Quaternions, Quaternion Hilbert spaces, Frames}
\maketitle
\pagestyle{myheadings}
\section{Introduction}
Frames were first introduced by Duffin and Schaeffer in a study of non-harmonic Fourier series \cite{DU}. However, among many others, the pioneering works of Daubechies et al. brought appropriate  attention to frames \cite{D1,D2}. Wavelets and coherent states of quantum optics are specific classes of continuous frames \cite{Alibk}. The study of frames has exploded in recent years, partly because of their applications in digital signal processing\cite{Ole, GR} and other areas of physical and engineering problems. In particular, they are an integral part of time-frequency analysis. It is crucial to find a specific class of frames to fit to a specific physical problem, because there is no universal class of frames that fits to all problems. As technology advances, physicists and engineers will face new problems and thereby our search for tools to solve them will continue.

A separable Hilbert space possesses an orthonormal basis and each vector in the Hilbert space can be uniquely written in terms of this orthonormal basis. Despite orthonormal bases are hard to find, this uniqueness restricted flexibility in applications and pleaded for an alternative. As a result frames entered to replace orthonormal bases. Frames are overcomplete classes of vectors in Hilbert spaces.  Thereby a vector in the Hilbert space can have infinitely many representations in terms of frame vectors. This redundancy of frames is the key to their success in applications. The role of redundancy varies according to the requirements of the application at hand. In fact, redundancy gives greater design flexibility which allows frames to be constructed to fit a particular problem in a manner not possible by a set of linearly independent vectors \cite{Alibk, Ole, D1,A1}.

Hilbert spaces can be defined over the fields $\mathbb{R}$, the set of all real numbers, $\mathbb{C}$, the set of all complex numbers, and $~\mathbb{H},~$ the set of all quaternions only \cite{Ad}. The fields $\mathbb{R}$ and $\mathbb{C}$ are associative and commutative and the theory of functional analysis is a well formed theory over real and complex Hilbert spaces. But the quaternions form a non-commutative associative algebra and this feature highly restricted mathematicians to work out a well-formed theory of functional analysis on quaternionic Hilbert spaces. Further, due to the noncommutativity there are two types of Hilbert spaces on quaternions, called right quaternion Hilbert space and left quaternion Hilbert space. In assisting the study of frames the functional analytic properties of the underlying Hilbert space are essential. In the sequel we shall investigate the necessary functional analytic properties as needed.

To the best of our knowledge a general theory of frames on quaternionic Hilbert spaces is not formulated yet. In this manuscript we shall construct rank $n$ continuous frames on a right quaternionic Hilbert space following the lines of \cite{A1}.  While the complex numbers are two dimensional the quaternions are four dimensional; the increase in the dimension is expected to give greater flexibility in applications. 

\section{Mathematical preliminaries}
We recall few facts about quaternions, quaternionic Hilbert spaces and quaternionic functional calculus which may not be very familiar to the reader. For quaternions and quaternionic Hilbert spaces we refer the reader to \cite{Ad}. We shall use the recently introduced $S$-spectral calculus of quaternionic operators for which we refer to \cite{Ric, Fab, AF, AC}.
\subsection{Quaternions}
Let $\mathbb{H}$ denote the field of quaternions. Its elements are of the form $q=x_0+x_1i+x_2j+x_3k,~$ where $x_0,x_1,x_2$ and $x_3$ are real numbers, and $i,j,k$ are imaginary units such that $i^2=j^2=k^2=-1$, $ij=-ji=k$, $jk=-kj=i$ and $ki=-ik=j$. The quaternionic conjugate of $q$ is defined to be $\overline{q} = x_0 - x_1i - x_2j - x_3k$. Quaternions do not commute in general. However $q$ and $\oqu$ commute, and quaternions commute with real numbers. $|q|^2=q\oqu=\oqu q$ and $\overline{qp}=\overline{p}~\oqu.$ Since the quaternion field is measurable and locally compact we take a measure $d\mu$ on it. For instant $d\mu$ can be taken as a Radon measure or $d\mu=d\lambda d\omega$, where $d\lambda$ is a Lebesgue measure on $\C$ and $d\omega$  is a Harr measure on $SU(2)$. For details we refer the reader to, for example, \cite{Thi1} (page 12).

\subsection{Right Quaternionic Hilbert Space}
Let $V_{\mathbb{H}}^{R}$ be a linear vector space under right multiplication by quaternionic scalars (again $\mathbb{H}$ standing for the field of quaternions).  For $\phi ,\psi ,\omega\in V_{\mathbb{H}}^{R}$ and $q\in \mathbb{H}$, the inner product
$$\langle\cdot\mid\cdot\rangle:V_{\mathbb{H}}^{R}\times V_{\mathbb{H}}^{R}\longrightarrow \mathbb{H}$$
satisfies the following properties
\begin{enumerate}
\item[(i)]
$\overline{\langle\phi \mid \psi \rangle}=\langle \psi \mid\phi \rangle$
\item[(ii)]
$\|\phi\|^{2}=\langle\phi \mid\phi \rangle>0$ unless $\phi =0$, a real norm
\item[(iii)]
$\langle\phi \mid \psi +\omega\rangle=\langle\phi \mid \psi \rangle+\langle\phi \mid \omega\rangle$
\item[(iv)]
$\langle\phi \mid \psi q\rangle=\langle\phi \mid \psi \rangle q$
\item[(v)]
$\langle\phi q\mid \psi \rangle=\overline{q}\langle\phi \mid \psi \rangle$
\end{enumerate}
where $\overline{q}$ stands for the quaternionic conjugate. We assume that the
space $V_{\mathbb{H}}^{R}$ is complete under the norm given above. Then,  together with $\langle\cdot\mid\cdot\rangle$ this defines a right quaternionic Hilbert space, which we shall assume to be separable. Quaternionic Hilbert spaces share most of the standard properties of complex Hilbert spaces. In particular, the Cauchy-Schwartz inequality holds on quaternionic Hilbert spaces as well as the Riesz representation theorem for their duals.  Thus, the Dirac bra-ket notation
can be adapted to quaternionic Hilbert spaces:
$$\mid\phi q\rangle=\mid\phi \rangle q,\hspace{1cm}\langle\phi q\mid=\overline{q}\langle\phi \mid\;, $$
for a right quaternionic Hilbert space, with $\vert\phi \rangle$ denoting the vector $\phi$ and $\langle\phi \vert$ its dual vector. Let $A$ be an operator on a right quaternionic Hilbert space $V_\mathbb{H}^R$ with domain $V_\mathbb{H}^R$. The scalar multiple of $A$ should be written as $q A$ and the action must take the form \cite{Sha, TH}
\begin{equation}\label{act}
(qA)\mid\phi \rangle=(A\mid\phi \rangle)\overline{q}.
\end{equation}
Unless $q$ is real the scalar multiple $qA$ of a right linear operator is not necessarily right linear and several other usual properties of a scalar multiple of an operator may not hold \cite{Ric,Sha, Mu}.  The adjoint $A^{\dagger}$ of $A$ is defined as
\begin{equation}\label{Ad1}
\langle \psi \mid A\phi \rangle=\langle A^{\dagger} \psi \mid\phi \rangle;\quad\text{for all}~~~\phi ,\psi \in V_\mathbb{H}^R.
\end{equation}
An operator $A$ is said to be self-adjoint if $A=A^{\dagger}$. If $\phi \in V_\mathbb{H}^R\smallsetminus\{0\}$, then $|\phi \rangle\langle\phi |$ is a rank one projection operator. For operators $A, B$, by convention, we have 
\begin{equation}\label{Rank}
|A\phi\rangle\langle B\phi |=A|\phi \rangle\langle\phi |B^{\dagger}.
\end{equation}
Let $\D(A)$ denote the domain of $A$. $A$ is said to be right linear if
$$A(\phi q+\psi p)=(A\phi )q+(B\psi )p;\quad\forall\phi ,\psi \in\D(A), q, p\in \mathbb{H}.$$
The set of all right linear operators will be denoted by $\mathcal{L}(V_\mathbb{H}^R)$. For a given $A\in \mathcal{L}(V_\mathbb{H}^R)$, the range and the kernel will be
\begin{eqnarray*}
\mbox{ran}(A)&=&\{\psi \in V_\mathbb{H}^R~|~A\phi =\psi \quad\text{for}~~\phi \in\D(A)\}\\
\ker(A)&=&\{\phi \in\D(A)~|~A\phi =0\}.
\end{eqnarray*}
We call an operator $A\in \mathcal{L}(V_\mathbb{H}^R)$ bounded if
\begin{equation*}
\|A\|=\sup_{\|\phi \|=1}\|A\phi \|<\infty.
\end{equation*}
or equivalently, there exists $K\geq 0$ such that $\|A\phi \|\leq K\|\phi \|$ for $\phi \in\D(A)$. The set of all bounded right linear operators will be denoted by $\B(V_\mathbb{H}^R)$.
The following definition is the same as for the complex operators \cite{Hel}. In terms of this definition, we shall validate some properties of operators, using S-spectrum, as needed.
\begin{proposition}\label{SAD}
Let $A\in\B(V_\mathbb{H}^R)$. Then $A$ is self-adjoint if and only if for each $\phi\in V_{\mathbb{H}}^{R}$, $\langle A\phi\mid\phi\rangle\in\mathbb{R}$. 
\end{proposition}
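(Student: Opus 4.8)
The plan is to prove the two implications separately, with the converse resting on a quaternionic polarization argument that must be set up to respect non-commutativity.

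The forward implication I expect to be immediate. Assuming $A=A^{\dagger}$, the defining relation \eqref{Ad1} with $\psi=\phi$ gives $\langle A\phi\mid\phi\rangle=\langle\phi\mid A\phi\rangle$; by property (i) the right-hand side equals $\overline{\langle A\phi\mid\phi\rangle}$, so $\langle A\phi\mid\phi\rangle$ coincides with its own conjugate and therefore lies in $\mathbb{R}$.

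For the converse, suppose $\langle A\phi\mid\phi\rangle\in\mathbb{R}$ for every $\phi\in V_{\mathbb{H}}^{R}$, and fix $\phi,\psi$. Writing $a=\langle A\phi\mid\psi\rangle$ and $b=\langle A\psi\mid\phi\rangle$, the central step is to test the hypothesis on the vector $\phi+\psi q$ for an arbitrary scalar $q\in\mathbb{H}$. Expanding $\langle A(\phi+\psi q)\mid\phi+\psi q\rangle$ with the help of right linearity, $A(\psi q)=(A\psi)q$, and properties (iv) and (v), the diagonal terms $\langle A\phi\mid\phi\rangle$ and $\overline{q}\langle A\psi\mid\psi\rangle q=|q|^{2}\langle A\psi\mid\psi\rangle$ are real by hypothesis, so realness of the full expression forces
$$ aq+\overline{q}\,b\in\mathbb{R}\qquad\text{for all }q\in\mathbb{H}. $$
I would then extract the relation between $a$ and $b$ algebraically rather than in coordinates: a quaternion is real iff it equals its conjugate, and since $\overline{aq+\overline{q}\,b}=\overline{q}\,\overline{a}+\overline{b}\,q$, the displayed condition rearranges to $(a-\overline{b})\,q=\overline{q}\,\overline{(a-\overline{b})}$ for all $q$. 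Setting $c=a-\overline{b}$, this reads $cq=\overline{cq}$, i.e.\ $cq\in\mathbb{R}$ for every $q$, which upon testing $q=1$ and $q=i$ forces $c=0$. Hence $\langle A\phi\mid\psi\rangle=\overline{\langle A\psi\mid\phi\rangle}=\langle\phi\mid A\psi\rangle$ by property (i). Comparing this with the conjugated adjoint identity $\langle A\phi\mid\psi\rangle=\langle\phi\mid A^{\dagger}\psi\rangle$ read off from \eqref{Ad1} gives $\langle\phi\mid(A-A^{\dagger})\psi\rangle=0$ for all $\phi$; choosing $\phi=(A-A^{\dagger})\psi$ and invoking positivity of the norm (property (ii)) yields $A\psi=A^{\dagger}\psi$ for every $\psi$, so $A=A^{\dagger}$.

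I expect the only genuine obstacle to lie in the converse. In the complex case one simply substitutes $i\psi$ and separates real and imaginary parts, but here one must verify carefully that the operator pulls the scalar through on the correct side, $A(\psi q)=(A\psi)q$, and that properties (iv) and (v) place $q$ to the right and $\overline{q}$ to the left respectively, before confronting the genuinely non-commutative identity $aq+\overline{q}\,b\in\mathbb{R}$. The substitutions $q=1$ and $q=i$ already suffice to kill $c$, but the bookkeeping of left- versus right-multiplication is where the argument could most easily go astray.
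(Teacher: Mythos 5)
Your proof is correct, but it takes a genuinely different route from the paper: the paper gives no argument at all, merely citing \cite{Fa} (where the result is proved for a \emph{left} quaternionic Hilbert space, with the adaptation to $V_{\mathbb{H}}^{R}$ left to the reader) and Proposition 2.17(b) of \cite{Ric} for the sufficiency, so your self-contained polarization argument supplies what the paper omits. The forward direction is the standard conjugation argument and is fine. In the converse, your key computation is exactly the right quaternionic substitute for the complex trick of substituting $i\psi$: expanding $\langle A(\phi+\psi q)\mid\phi+\psi q\rangle$ using right linearity and properties (iv)/(v), the diagonal terms are real (note $\overline{q}\,\langle A\psi\mid\psi\rangle\, q=|q|^{2}\langle A\psi\mid\psi\rangle$ precisely because $\langle A\psi\mid\psi\rangle$ is real by hypothesis and reals are central in $\mathbb{H}$), leaving $aq+\overline{q}\,b\in\mathbb{R}$ for all $q$; and your coordinate-free extraction, setting $c=a-\overline{b}$ so that $cq=\overline{cq}$ for all $q$ and then killing $c$ with $q=1$ and $q=i$, is sound. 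The final passage from $\langle A\phi\mid\psi\rangle=\langle\phi\mid A\psi\rangle$ to $A=A^{\dagger}$, by comparing with $\langle A\phi\mid\psi\rangle=\langle\phi\mid A^{\dagger}\psi\rangle$ and choosing $\phi=(A-A^{\dagger})\psi$, correctly uses additivity in the second slot and positive-definiteness of the norm. What your approach buys is transparency about exactly where non-commutativity enters and how the right-sided conventions resolve it; what the paper's approach buys is brevity, at the cost of deferring both the left-to-right translation and the necessity direction entirely to the literature.
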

\begin{proof}
A proof is given in \cite{Fa} for a left quaternionic Hilbert space $V_{\mathbb{H}}^{L}$, and it can be easily manipulated for $V_{\mathbb{H}}^{R}$. For a proof for the sufficient part one may also see proposition 2.17 (b) in \cite{Ric}. 
\end{proof}
\begin{definition}\label{Def1}
Let $A$ and $B$ be self-adjoint operators on $V_\mathbb{H}^R$. Then $A\leq B$ ($A$ less or equal to $B$) or equivalently $B\geq A$ if $\langle A\phi|\phi\rangle\leq\langle B\phi|\phi\rangle$ for all $\phi\in V_\mathbb{H}^R$. In particular $A$ is called positive if $A\geq 0.$
\end{definition}
\begin{theorem}\cite{Ric}\label{IT1}
Let $A\in\B(V_\mathbb{H}^R)$. If $A\geq 0$ then there exists a unique operator in $\B(V_\mathbb{H}^R)$, indicated by $\sqrt{A}=A^{1/2}$ such that $\sqrt{A}\geq 0$ and $\sqrt{A}\sqrt{A}=A$.
\end{theorem}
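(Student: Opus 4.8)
The plan is to adapt the classical Riesz--Nagy iteration for positive operators, exploiting the fact that every operator produced along the way is a \emph{real}-coefficient polynomial in $A$. Since real scalars are central in $\mathbb{H}$, such polynomials stay right linear, bounded and self-adjoint, and the noncommutativity of $\mathbb{H}$ never enters the algebra of the construction. First I would reduce to a normalized operator: using the Cauchy--Schwartz inequality together with Proposition \ref{SAD} (so that $\langle A\phi\mid\phi\rangle$ is real), one verifies $0\le A\le \|A\|I$, and replacing $A$ by $A/\|A\|$ it suffices to treat the case $0\le A\le I$; the square root of the original operator is recovered as $\|A\|^{1/2}$ times that of the normalized one. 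Put $T=I-A$, so $0\le T\le I$, and note that a positive square root of $A$ will have the form $S=I-Y$, since $S^2=A$ is equivalent to $Y=\tfrac12(T+Y^2)$.

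Accordingly I would define the iteration $Y_0=0$, $Y_{n+1}=\tfrac12(T+Y_n^2)$, and carry out the standard Riesz--Nagy induction. All $Y_n$ are polynomials in $T$, hence commute, and each difference $Y_{n+1}-Y_n=\tfrac12(Y_n+Y_{n-1})(Y_n-Y_{n-1})$ is a polynomial in $T$ with nonnegative real coefficients; evaluating such a polynomial on the positive contraction $T$ gives $0\le Y_n\le Y_{n+1}$, while a parallel estimate yields $Y_n\le I$. Thus $(Y_n)$ is an increasing, order-bounded sequence of self-adjoint operators, and I would invoke the quaternionic monotone convergence theorem to obtain strong convergence to a self-adjoint $Y$ with $0\le Y\le I$. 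Strong convergence and uniform boundedness force $Y_n^2\to Y^2$ strongly, so passing to the limit in the recursion gives $Y=\tfrac12(T+Y^2)$; hence $S=I-Y$ satisfies $S\ge 0$ and $S^2=A$, which (after rescaling) establishes existence.

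For uniqueness, suppose $S,S'\ge 0$ with $S^2=S'^2=A$. Since $S$ is a strong limit of real polynomials in $A$, it commutes with $A$ and with every operator commuting with $A$; because $S'A=S'S'^2=S'^2S'=AS'$, the operators $S$ and $S'$ commute. Fixing $\phi$ and setting $\psi=(S-S')\phi$, I would compute $\langle S\psi\mid\psi\rangle+\langle S'\psi\mid\psi\rangle=\langle(S+S')\psi\mid\psi\rangle=\langle(S^2-S'^2)\phi\mid\psi\rangle=0$, where the last equality uses $(S+S')(S-S')=S^2-S'^2=0$. As both summands are nonnegative reals, each vanishes, and writing $\langle S\psi\mid\psi\rangle=\|\sqrt{S}\,\psi\|^2$ via the existence part applied to $S$ forces $S\psi=0$, and likewise $S'\psi=0$. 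Therefore $(S-S')\psi=0$, i.e. $\|(S-S')\phi\|^2=\langle(S-S')^2\phi\mid\phi\rangle=0$, so $(S-S')\phi=0$ for every $\phi$ and $S=S'$.

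I expect the main obstacle to be analytic rather than algebraic: one must secure the monotone convergence theorem for increasing, order-bounded sequences of self-adjoint operators on $V_{\mathbb{H}}^{R}$, whose usual proof rests on the generalized Cauchy--Schwartz inequality $\lvert\langle C\phi\mid\psi\rangle\rvert^2\le\langle C\phi\mid\phi\rangle\,\langle C\psi\mid\psi\rangle$ for positive $C$. Establishing this with a quaternion-valued inner product requires checking that $(\phi,\psi)\mapsto\langle C\phi\mid\psi\rangle$ behaves as a positive form under the right-quaternionic scalar substitutions of the classical argument. Proposition \ref{SAD} guarantees that the diagonal values $\langle C\phi\mid\phi\rangle$ are real and nonnegative, which is exactly what makes the substitution go through, but the careful bookkeeping of the noncommutative scalars is the delicate point; once it is in place, the rest of the construction proceeds as above.
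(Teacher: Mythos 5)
Your proposal is correct in its essentials, but note that the paper itself contains no proof of Theorem \ref{IT1}: the statement is imported verbatim from \cite{Ric}, where it is obtained as a consequence of the continuous slice functional calculus developed in that reference. Your Riesz--Nagy iteration is therefore a genuinely different and far more elementary route. What the citation buys the paper is generality with no local effort: the machinery of \cite{Ric} yields $f(A)$ for all continuous functions $f$ on the S-spectrum, of which $\sqrt{A}$ is a single instance. What your argument buys is self-containedness at exactly the level of tools this paper already develops: the reality of $\langle A\phi\mid\phi\rangle$ for self-adjoint $A$ is Proposition \ref{SAD}, and the generalized Cauchy--Schwarz inequality you flag as ``the main obstacle'' is precisely Proposition \ref{LPr4}, proved in the paper with the right-scalar substitutions $\lambda=\langle A\psi\mid\psi\rangle$, $\mu=\langle A\psi\mid\phi\rangle$ that you anticipate; since the proof of Proposition \ref{LPr4} nowhere uses Theorem \ref{IT1}, inserting your argument in place of the citation would create no circularity. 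Two points should be tightened. First, in the uniqueness step you assert that ``$S$ is a strong limit of real polynomials in $A$'' after introducing $S,S'$ as \emph{arbitrary} positive roots; this property is enjoyed only by the constructed root, so the argument should be phrased as comparing an arbitrary positive root $S'$ against the constructed root $S$ (which does suffice, since any two positive roots then both equal $S$). Second, the claim that a nonnegative-coefficient polynomial in the positive contraction $T$ is a positive operator needs the one-line observation that $T^k\geq 0$ for every $k$: even powers are squares of self-adjoint operators, and odd powers satisfy $\langle T^{2m+1}\phi\mid\phi\rangle=\langle T(T^m\phi)\mid T^m\phi\rangle\geq 0$. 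With those repairs, your construction and uniqueness argument go through on $V_{\mathbb{H}}^{R}$ exactly as in the classical case, the noncommutativity being neutralized, as you say, by the centrality of the real coefficients.
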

\begin{lemma}\label{L3}
Let $U_{\mathbb{H}}^{R}$ and $V_{\mathbb{H}}^{R}$ be right quaternion Hilbert spaces. Let $A:\D(A)\longrightarrow V_{\mathbb{H}}^{R}$ be a linear operator with domain $\D(A)\subseteq U_{\mathbb{H}}^{R} $ and $\mbox{ran}(A)\subseteq V_{\mathbb{H}}^{R},$ then the inverse $A^{-1}:\mbox{ran}(A)\longrightarrow \D(A) $ exists if and only if $A\phi=0\Rightarrow \phi=0.$
\end{lemma}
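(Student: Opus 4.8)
The plan is to reduce the existence of the inverse to the injectivity of $A$, and then to identify injectivity with the triviality of the condition $A\phi=0\Rightarrow\phi=0$. Since $A$ maps $\D(A)$ onto $\mbox{ran}(A)$ by the very definition of the range, the map $A:\D(A)\longrightarrow\mbox{ran}(A)$ is automatically surjective; hence the set-theoretic inverse $A^{-1}:\mbox{ran}(A)\longrightarrow\D(A)$ exists precisely when $A$ is injective. Thus the whole lemma amounts to showing that the right linear operator $A$ is injective if and only if $A\phi=0$ forces $\phi=0$.

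For the forward implication I would assume that $A^{-1}$ exists, i.e. that $A$ is injective. Because $A$ is right linear we have $A0=0$, so if $A\phi=0=A0$, then injectivity immediately yields $\phi=0$, which is exactly the stated condition.

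For the converse I would take $\phi,\psi\in\D(A)$ with $A\phi=A\psi$ and form the combination $\phi+\psi(-1)$. Since $-1$ is real, the right linearity of $A$ gives $A(\phi+\psi(-1))=A\phi+(A\psi)(-1)=A\phi-A\psi=0$, and the hypothesis then forces $\phi+\psi(-1)=0$, that is $\phi=\psi$. Hence $A$ is injective and the inverse $A^{-1}$ is well defined as a map on $\mbox{ran}(A)$.

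The only point requiring care — and it is a mild one rather than a genuine obstacle — is that the subtraction of two vectors must be carried out through right scalar multiplication by the real quaternion $-1$, so that every step respects the right linearity axiom $A(\phi q)=(A\phi)q$; no difficulty arises because real scalars commute with all quaternions. If in addition one wishes $A^{-1}$ itself to be right linear, a short supplementary check suffices: applying $A$ to $A^{-1}(\eta q+\zeta p)$ and to $A^{-1}(\eta)\,q+A^{-1}(\zeta)\,p$ for $\eta,\zeta\in\mbox{ran}(A)$ and $q,p\in\mathbb{H}$, and invoking the right linearity of $A$ together with the injectivity just established, shows the two expressions coincide.
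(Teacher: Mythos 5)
Your proof is correct and coincides with what the paper intends: the paper's proof is a one-line appeal to the complex counterpart (noting that non-commutativity plays no role), and your argument is exactly that standard argument written out, with the relevant point made explicit --- namely that the only scalar needed is the real number $-1$, which commutes with all quaternions, so right linearity suffices at every step. Nothing further is required.
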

\begin{proof}Since the non-commutativity of quaternions does not play a role in the proof, it follows from its complex counterpart.
\end{proof}
For a positive bounded right quaternionic self-adjoint operator $A$ let
\begin{equation}\label{E7}
M(A)=\sup_{\|\phi\|=1}\langle\phi|A\phi\rangle
\quad\text{and}\quad
m(A)=\inf_{\|\phi\|=1}\langle\phi|A\phi\rangle.
\end{equation}
 \begin{lemma}\label{L2}
		 Let $A\in\B(V_\mathbb{H}^R)$ be a self-adjoint operator, then
 \begin{equation}\label{E13} 
 \left\|A\right\|=\sup_{\left\|\phi\right\|=1}\left|\left\langle \phi|A\phi \right\rangle\right|
 \end{equation} 		 
\end{lemma}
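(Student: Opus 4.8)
The plan is to prove the two inequalities $\sup_{\|\phi\|=1}|\langle\phi|A\phi\rangle|\le \|A\|$ and $\|A\|\le \sup_{\|\phi\|=1}|\langle\phi|A\phi\rangle|$ separately; write $M=\sup_{\|\phi\|=1}|\langle\phi|A\phi\rangle|$ for brevity. The first inequality is immediate: for $\|\phi\|=1$ the Cauchy--Schwarz inequality, which holds on $V_\mathbb{H}^R$, gives $|\langle\phi|A\phi\rangle|\le \|\phi\|\,\|A\phi\|\le \|A\|$, so $M\le\|A\|$. I would also record at the outset that, since $A$ is self-adjoint, Proposition~\ref{SAD} guarantees $\langle\phi|A\phi\rangle=\langle A\phi|\phi\rangle\in\R$ for every $\phi$; hence by homogeneity $-M\|\chi\|^2\le\langle\chi|A\chi\rangle\le M\|\chi\|^2$ for all $\chi\in V_\mathbb{H}^R$. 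This two-sided \emph{real} bound is what drives the harder estimate.

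For the reverse inequality I would run a polarization argument. Expanding by right linearity and the inner-product axioms,
$$\langle A(\phi+\psi)|\phi+\psi\rangle-\langle A(\phi-\psi)|\phi-\psi\rangle=2\langle A\phi|\psi\rangle+2\langle A\psi|\phi\rangle.$$
The key manipulation is to use self-adjointness together with axiom~(i): $\langle A\psi|\phi\rangle=\langle\psi|A\phi\rangle=\overline{\langle A\phi|\psi\rangle}$, so that the right-hand side collapses to $4\,\mathrm{Re}\langle A\phi|\psi\rangle$, a real quantity. Applying the two-sided bound above to the two terms on the left and invoking the parallelogram law $\|\phi+\psi\|^2+\|\phi-\psi\|^2=2\|\phi\|^2+2\|\psi\|^2$ (valid since the norm comes from the inner product) yields
$$4\,\mathrm{Re}\langle A\phi|\psi\rangle\le M\big(\|\phi+\psi\|^2+\|\phi-\psi\|^2\big)=2M\big(\|\phi\|^2+\|\psi\|^2\big).$$

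To finish I would specialize. Fix $\phi$ with $\|\phi\|=1$; if $A\phi=0$ there is nothing to prove, otherwise set $\psi=A\phi\,\|A\phi\|^{-1}$, a unit vector obtained by right multiplication by the \emph{real} scalar $\|A\phi\|^{-1}$. Then by axiom~(iv), $\langle A\phi|\psi\rangle=\langle A\phi|A\phi\rangle\,\|A\phi\|^{-1}=\|A\phi\|$ is real and nonnegative, so the displayed inequality becomes $4\|A\phi\|\le 2M(1+1)$, i.e. $\|A\phi\|\le M$. Taking the supremum over unit $\phi$ gives $\|A\|\le M$, which together with the first step proves $\|A\|=M$. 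The main obstacle is purely the non-commutative bookkeeping: one must place the normalising scalar on the right and be sure that the conjugate cross terms combine into twice a real part. The enabling structural fact throughout is the reality of the numerical values $\langle\phi|A\phi\rangle$ supplied by Proposition~\ref{SAD}, without which the two-sided real estimate---and hence the whole argument---would break down.
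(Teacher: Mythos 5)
Your proof is correct, and it is essentially the argument the paper relies on: the paper's own ``proof'' merely asserts that the complex-counterpart proof carries over (citing \cite{Fa}), and your write-up is exactly that standard polarization argument, transcribed to the right quaternionic setting with the correct bookkeeping --- reality of $\langle\phi\mid A\phi\rangle$ via Proposition~\ref{SAD}, cross terms collapsing to $4\,\mathrm{Re}\langle A\phi\mid\psi\rangle$ by self-adjointness and axiom (i), and the normalizing (real) scalar placed on the right. In effect you have supplied the details the paper omits, and they confirm the paper's claim that non-commutativity plays no role here.
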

\begin{proof}
In fact, the non-commutativity of quaternions does not play a role in the proof, the proof follows from its complex counterpart. One may also see \cite{Fa} where a proof is given.
\end{proof}

Let $$GL(V_{\mathbb{H}}^{R})=\left\{A:V_{\mathbb{H}}^{R}\longrightarrow V_{\mathbb{H}}^{R}: A\mbox{~bounded and~} A^{-1}\mbox{~bounded}\right\}.$$
Then in a standard way one can see that $GL(V_{\mathbb{H}}^{R})$ is a group under the composition operation.
\subsection{S-spectrum of a right quaternionic operator}
A consistent spectral theory for quaternionic operators is not so obvious. The problem in adapting the classical notion of spectrum to either the left or right quaternionic operators is well explained in \cite{AC,A1,Fab,Ric}. In the same references an appropriate notion of spectrum for quaternionic operators is introduced. This new-notion of spectrum is called the S-spectrum.  From \cite{AC,Ric} we shall extract the definition and some properties of S-spectrum for a right quaternionic operator as needed here. We shall also prove certain results, pertinent to the construction of frames, related to S-spectrum. To the best of our knowledge the results we prove do not appear in the literature. 
\begin{definition}\cite{Fab}\label{D1}
Let $A:V_\mathbb{H}^R\longrightarrow V_\mathbb{H}^R$ be a bounded right quaternionic linear operator. We define the S-spectrum $\sigma_S(A)$ of $A$ as$$\sigma_{S}(A)=\{\lambda\in \mathbb{H}: R_{\lambda}(A)=A^{2}-2Re(\lambda)A+|\lambda|^{2}\IV\mbox{~is not invertible in } \B(V_\mathbb{H}^R)\},$$
where $\lambda=\lambda_{0}+\lambda_{1}i+\lambda_{2}j+\lambda_{3}k$ is a quaternion, $Re(\lambda)=\lambda_{0},|\lambda|^{2}=\lambda_{0}^{2}+\lambda_{1}^{2}+\lambda_{2}^{2}+\lambda_{3}^{2}$ and $\IV$ is the identity operator on $V_\mathbb{H}^R$.
 The S-resolvant set $\rho_{S}(A)$ is defined by 
 $$\rho_{S}(A)=\mathbb{H}\setminus\sigma_{S}(A).$$
Or equivalently, the spherical resolvent set of $A$ is the set $\rho_{S}(A)\subset \mathbb{H}$ of quaternions $\lambda$ such that the three following conditions hold true:
\begin{itemize}
	\item [(a)]$\ker(R_{\lambda}(A))=\{0\}$
	\item [(b)]$\mbox{ran}(R_{\lambda}(A))$ is dense in $V_{\mathbb{H}}^{R}$
	\item [(c)]$R_{\lambda}(A)^{-1}:\mbox{ran}(R_{\lambda}(A))\longrightarrow D(A^{2})$ is bounded.
\end{itemize}
\end{definition}
The S-spectral radius of $A$ is defined as
$$r_S(A)=\sup\{|\qu|\in\R^+~|~\qu\in\sigma_S(A)\}.$$
\begin{proposition}\label{Pr1}\cite{AC, Ric} If $A$ is a bounded right linear quaternionic operator in $V_\mathbb{H}^R$, then the S-spectrum of $A$ is a compact non-empty subset of $\mathbb{H}$.
\end{proposition}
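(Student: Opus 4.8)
The plan is to establish compactness and non-emptiness by separate arguments. Since $\mathbb{H}\cong\R^4$, the Heine--Borel theorem reduces compactness to showing that $\sigma_S(A)$ is both bounded and closed, while non-emptiness is the genuinely quaternionic part and will be handled last.

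For boundedness, I would exploit that $R_\lambda(A)=A^2-2\mathrm{Re}(\lambda)A+|\lambda|^2\IV$ depends on $\lambda$ only through $\mathrm{Re}(\lambda)=\lambda_0$ and $|\lambda|$. Factoring $R_\lambda(A)=|\lambda|^2(\IV-B_\lambda)$ with $B_\lambda=|\lambda|^{-2}(2\lambda_0A-A^2)$ and using $|\lambda_0|\le|\lambda|$, one gets $\|B_\lambda\|\le 2\|A\|/|\lambda|+\|A\|^2/|\lambda|^2$, which is strictly less than $1$ once $|\lambda|$ is large. Since $\B(V_\mathbb{H}^R)$ is a Banach algebra and the Neumann series $\sum_{n\ge 0}B_\lambda^n$ involves only operator compositions and real coefficients (so the noncommutative scalar action plays no role), it converges and inverts $\IV-B_\lambda$, hence $R_\lambda(A)$. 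Thus all $\lambda$ with $|\lambda|$ large lie in $\rho_S(A)$, giving boundedness of $\sigma_S(A)$; in fact this yields the bound $r_S(A)\le\|A\|$.

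For closedness, I would show $\rho_S(A)$ is open. The map $\lambda\mapsto R_\lambda(A)$ is continuous from $\mathbb{H}$ into $\B(V_\mathbb{H}^R)$, being a polynomial in the continuous real functions $\mathrm{Re}(\lambda)$ and $|\lambda|^2$. By the standard perturbation argument (again via a Neumann series), the set of invertible operators is open in $\B(V_\mathbb{H}^R)$: if $T$ is invertible and $\|S-T\|<\|T^{-1}\|^{-1}$ then $S$ is invertible. Hence $\rho_S(A)$, being the preimage of this open set under a continuous map, is open, so $\sigma_S(A)=\mathbb{H}\setminus\rho_S(A)$ is closed. Together with the previous step this establishes compactness.

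The main obstacle is non-emptiness, which cannot be reached by elementary norm estimates and requires slice hyperholomorphic function theory. I would argue by contradiction: if $\sigma_S(A)=\emptyset$, then $R_\lambda(A)$ is invertible for every $\lambda\in\mathbb{H}$, so the left S-resolvent operator $S_L^{-1}(\lambda,A)=-R_\lambda(A)^{-1}(A-\overline{\lambda}\IV)$ is an entire, slice hyperholomorphic $\B(V_\mathbb{H}^R)$-valued function. The expansion from the boundedness step shows that to leading order $S_L^{-1}(\lambda,A)$ behaves like $\lambda^{-1}\IV$, hence it is bounded with limit $0$ as $|\lambda|\to\infty$. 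A Liouville theorem for slice hyperholomorphic functions (as in \cite{AC,Ric}) then forces $S_L^{-1}(\lambda,A)$ to be constant, necessarily the zero operator, contradicting its nonconstant leading behavior $\lambda^{-1}\IV$. This contradiction gives $\sigma_S(A)\neq\emptyset$. The delicate point throughout is transporting the classical Banach-algebra and Liouville arguments into the noncommutative, slice-hyperholomorphic setting, where the ordinary resolvent must be replaced by the S-resolvent operator.
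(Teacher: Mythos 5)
The paper offers no proof of this proposition to compare against: it is stated as a quoted result from \cite{AC} and \cite{Ric}, so your proposal must be judged on its own merits, and on those merits it is essentially the standard argument from exactly that literature. The two elementary parts are sound. The factorization $R_\lambda(A)=|\lambda|^2(\IV-B_\lambda)$ with a Neumann series is legitimate here because $B_\lambda$ involves only compositions of $A$ with \emph{real} coefficients, so the pathologies of quaternionic scalar multiples of operators never enter and $\B(V_\mathbb{H}^R)$ behaves as a real Banach algebra; likewise $\lambda\mapsto R_\lambda(A)$ is continuous and the invertibles form an open set, giving closedness, and Heine--Borel in $\mathbb{H}\cong\R^4$ gives compactness. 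Non-emptiness via the left S-resolvent $S_L^{-1}(\lambda,A)=-R_\lambda(A)^{-1}(A-\overline{\lambda}\IV)$, its decay like $\lambda^{-1}\IV$ at infinity, and a Liouville theorem for slice hyperholomorphic operator-valued functions is precisely how the cited sources handle the deep part; as a sketch this is acceptable, though you should be explicit that you are importing two nontrivial facts (slice hyperholomorphy of the S-resolvent on $\rho_S(A)$ and the operator-valued slice Liouville theorem) rather than proving them. One quantitative claim is wrong: from $\|B_\lambda\|\le 2\|A\|/|\lambda|+\|A\|^2/|\lambda|^2$ you can conclude invertibility only when $\|A\|/|\lambda|<\sqrt{2}-1$, i.e.\ $|\lambda|>(1+\sqrt{2})\|A\|$, so your factorization yields $r_S(A)\le(1+\sqrt{2})\|A\|$, not $r_S(A)\le\|A\|$. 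The sharp bound is a separate statement, quoted in the paper as Proposition \ref{P3} from Theorem 4.3(a) of \cite{Ric}, and its proof uses the power-series expansion of the S-resolvent rather than this crude estimate; the slip is harmless for compactness and non-emptiness, but the parenthetical ``in fact this yields $r_S(A)\le\|A\|$'' should be deleted or corrected.
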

\begin{proposition}\label{Pr2}\cite{AC, Ric}Let $A\in\mathcal{L}(V_\mathbb{H}^R)$ and $A$ be self-adjoint, then $\sigma_S(A)\subseteq\mathbb{R}$.
\end{proposition}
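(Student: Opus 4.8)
The plan is to show that every $\lambda\in\mathbb{H}\setminus\R$ belongs to the resolvent set $\rho_S(A)$, which by Definition \ref{D1} amounts to verifying the three conditions (a), (b), (c) for the operator $R_\lambda(A)$. The starting point is a purely algebraic observation: writing $\lambda=\lambda_0+\lambda_1 i+\lambda_2 j+\lambda_3 k$ with $\lambda_0=Re(\lambda)$, one completes the square,
\begin{equation*}
R_\lambda(A)=A^2-2\lambda_0 A+|\lambda|^2\IV=(A-\lambda_0\IV)^2+c\,\IV,\qquad c:=|\lambda|^2-\lambda_0^2=\lambda_1^2+\lambda_2^2+\lambda_3^2.
\end{equation*}
Since $\lambda_0$ is real and $A$ is self-adjoint, $B:=A-\lambda_0\IV$ is again bounded and self-adjoint, and $c>0$ precisely because $\lambda$ is not real. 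The whole argument rests on exploiting this strictly positive shift $c$.

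Next I would establish a lower bound for $R_\lambda(A)$. For any $\phi\in V_\mathbb{H}^R$, self-adjointness of $B$ together with the adjoint relation \eqref{Ad1} gives $\langle(B^2+c\IV)\phi\mid\phi\rangle=\langle B\phi\mid B\phi\rangle+c\langle\phi\mid\phi\rangle=\|B\phi\|^2+c\|\phi\|^2\geq c\|\phi\|^2$. Because $R_\lambda(A)=B^2+c\IV$ is self-adjoint, Proposition \ref{SAD} ensures $\langle R_\lambda(A)\phi\mid\phi\rangle\in\R$, so the preceding inequality is a genuine inequality of real numbers. Applying the Cauchy--Schwartz inequality (valid on $V_\mathbb{H}^R$) then yields
\begin{equation*}
c\|\phi\|^2\leq\langle R_\lambda(A)\phi\mid\phi\rangle=\left|\langle R_\lambda(A)\phi\mid\phi\rangle\right|\leq\|R_\lambda(A)\phi\|\,\|\phi\|,
\end{equation*}
hence $\|R_\lambda(A)\phi\|\geq c\|\phi\|$ for every $\phi$.

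From this single estimate the three resolvent conditions follow. Condition (a) is immediate, since $R_\lambda(A)\phi=0$ forces $\|\phi\|=0$, so $\ker(R_\lambda(A))=\{0\}$ and by Lemma \ref{L3} the inverse exists on the range. For condition (c), the bound $\|R_\lambda(A)^{-1}\psi\|\leq c^{-1}\|\psi\|$ on $\mbox{ran}(R_\lambda(A))$ reads off directly, so the inverse is bounded. For condition (b) I would argue that the range is dense: if $\psi$ is orthogonal to $\mbox{ran}(R_\lambda(A))$, then $\langle\psi\mid R_\lambda(A)\phi\rangle=0$ for all $\phi$; using \eqref{Ad1} and self-adjointness this becomes $\langle R_\lambda(A)\psi\mid\phi\rangle=0$ for all $\phi$, and the choice $\phi=R_\lambda(A)\psi$ gives $R_\lambda(A)\psi=0$, whence $\psi=0$ by (a). Thus the orthogonal complement of the range is trivial and the range is dense. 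Consequently $\lambda\in\rho_S(A)$, and since $\lambda$ was an arbitrary non-real quaternion we conclude $\sigma_S(A)\subseteq\R$.

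The genuinely delicate point here is not the functional analysis but confirming that the quaternionic counterparts of the classical tools behave as expected: the reality of $\langle R_\lambda(A)\phi\mid\phi\rangle$ (supplied by Proposition \ref{SAD}), the correct handling of scalars in the Cauchy--Schwartz step given the noncommutative inner-product conventions (i) and (iv)--(v), and the passage from ``trivial orthogonal complement'' to ``dense range,'' which relies on the orthogonal decomposition available in the separable quaternionic Hilbert space. Once these are secured, the completion-of-the-square identity does all the real work, and the remainder is a faithful transcription of the complex self-adjoint case.
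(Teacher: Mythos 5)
Your proof is correct, but there is nothing in the paper to compare it against: the paper does not prove Proposition \ref{Pr2} at all, it imports the statement with a citation to \cite{AC} and \cite{Ric}. Your argument is a complete, self-contained substitute, and it follows the standard route taken in those references: complete the square to write $R_\lambda(A)=(A-\lambda_0\IV)^2+c\,\IV$ with $c=\lambda_1^2+\lambda_2^2+\lambda_3^2>0$ exactly when $\lambda$ is non-real (legitimate because $\lambda_0$ is real, so $A-\lambda_0\IV$ is still right linear, bounded and self-adjoint), deduce the coercivity estimate $\|R_\lambda(A)\phi\|\geq c\|\phi\|$, and read off the three resolvent conditions of Definition \ref{D1}. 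The quaternionic subtleties you flag are all handled correctly: only real scalars are ever moved through the inner product, the Cauchy--Schwartz inequality and the projection theorem are available in $V_\mathbb{H}^R$ (as the paper itself notes), and $\langle R_\lambda(A)\phi\mid\phi\rangle$ is manifestly real from your identity $\|B\phi\|^2+c\|\phi\|^2$, so the appeal to Proposition \ref{SAD} is actually redundant. It is worth noting that your scheme is precisely the one the paper \emph{does} use later, in Propositions \ref{Prds} and \ref{Pr04}, to prove the sharper localization $\sigma_S(A)\subset[m(A),M(A)]$ for positive self-adjoint operators (coercivity, then injectivity, then triviality of the orthogonal complement of the range); in particular you could shorten your step (b)--(c) by invoking Proposition \ref{rcl}, which turns your bound $\|R_\lambda(A)\phi\|\geq c\|\phi\|$ into closedness of $\mathrm{ran}(R_\lambda(A))$, so that density immediately upgrades to surjectivity and $R_\lambda(A)^{-1}\in\B(V_\mathbb{H}^R)$ follows at once.
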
 
\begin{proposition}\label{Pr3}\cite{Ric} Let $A\in\B(V_\mathbb{H}^R)$, then $r_S(A)\leq\|A\|$. In particular, if $A$ is self-adjoint then $r_S(A)=\|A\|$.
\end{proposition}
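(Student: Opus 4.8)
The plan is to treat the two assertions separately. For the inequality $r_S(A)\le\|A\|$ the key observation is that $R_\lambda(A)=A^2-2\mathrm{Re}(\lambda)A+|\lambda|^2\IV=p(A)$, where $p(t)=t^2-2\mathrm{Re}(\lambda)t+|\lambda|^2$ is a polynomial with \emph{real} coefficients. Thus $R_\lambda(A)$ lies in the commutative real algebra generated by $A$ and $\IV$, and I can invert it by the same power-series device used for complex operators, so that the noncommutativity of $\mathbb{H}$ never intervenes. First I would compute the roots of $p$, namely $\mathrm{Re}(\lambda)\pm i\sqrt{|\lambda|^2-\mathrm{Re}(\lambda)^2}$, which are genuinely complex since $\mathrm{Re}(\lambda)^2\le|\lambda|^2$, and note that each has modulus exactly $|\lambda|$. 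Hence the Taylor expansion $1/p(t)=\sum_{n\ge0}c_nt^n$, with real $c_n$, has radius of convergence $|\lambda|$.

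Now suppose $|\lambda|>\|A\|$. Since $\limsup_n|c_n|^{1/n}=|\lambda|^{-1}$ and $\|A^n\|\le\|A\|^n$, the series $\sum_{n\ge0}c_nA^n$ converges absolutely in $\B(V_\mathbb{H}^R)$. Because $p(t)\sum_nc_nt^n=1$ as formal power series and all the operators involved are powers of the single operator $A$, hence mutually commuting, the product $p(A)\sum_nc_nA^n$ telescopes to $\IV$, and likewise on the other side. Therefore $R_\lambda(A)$ is invertible in $\B(V_\mathbb{H}^R)$, so $\lambda\in\rho_S(A)$. This shows $\sigma_S(A)\subseteq\{\lambda\in\mathbb{H}:|\lambda|\le\|A\|\}$ and gives $r_S(A)\le\|A\|$.

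For the self-adjoint case I would first invoke Proposition \ref{Pr2} to get $\sigma_S(A)\subseteq\R$, and observe that for real $s$ one has $R_s(A)=(A-s\IV)^2$ with $A-s\IV$ self-adjoint; since for a self-adjoint operator $T$ the square $T^2$ is invertible in $\B(V_\mathbb{H}^R)$ if and only if $T$ is, the S-spectrum coincides with the real spectrum $\{s\in\R:A-s\IV\text{ not invertible}\}$. It then remains to show that $m(A)$ and $M(A)$ belong to $\sigma_S(A)$. Taking unit vectors $\phi_n$ with $\langle\phi_n|A\phi_n\rangle\to M(A)$, the operator $B=M(A)\IV-A$ is positive by Definition \ref{Def1}, and the generalized Cauchy-Schwarz inequality for the positive Hermitian form $(x,y)\mapsto\langle x|By\rangle$ yields $\|B\phi_n\|^3\le\|B\|^2\langle\phi_n|B\phi_n\rangle\to0$. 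Thus $A-M(A)\IV$ is not bounded below, hence not invertible, so $M(A)\in\sigma_S(A)$; symmetrically $m(A)\in\sigma_S(A)$. Finally Lemma \ref{L2} together with Proposition \ref{SAD} gives $\|A\|=\sup_{\|\phi\|=1}|\langle\phi|A\phi\rangle|=\max(|M(A)|,|m(A)|)$, whence $r_S(A)\ge\max(|M(A)|,|m(A)|)=\|A\|$, and combined with the first part equality follows.

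I expect the main obstacle to be the self-adjoint half, specifically the verification that the extremal value $M(A)$ (or $m(A)$) actually lies in $\sigma_S(A)$: this is the step that forces the approximate-eigenvector construction and the generalized Cauchy-Schwarz estimate for $B=M(A)\IV-A$, and one must check that this inequality remains valid over $\mathbb{H}$, which it does, since it uses only positivity and Hermitian symmetry of the form, both guaranteed by the quaternionic inner-product conventions. By contrast the inequality $r_S(A)\le\|A\|$ is essentially routine once one notices that $R_\lambda(A)$ is a real polynomial in $A$, which confines the whole argument to a commutative setting where the noncommutativity of the quaternions plays no role.
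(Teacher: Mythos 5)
Your proof is correct, but note that the paper itself offers no proof of this proposition: it is quoted from \cite{Ric}, and the inequality $r_S(A)\le\|A\|$ is likewise deferred there (Proposition \ref{P3} is declared to be "another way of stating" Theorem 4.3(a) of \cite{Ric}). So your argument is genuinely self-contained where the paper relies on a citation. Your first half --- observing that $R_\lambda(A)=p(A)$ for the real quadratic $p(t)=t^2-2\mathrm{Re}(\lambda)t+|\lambda|^2$, whose roots have modulus exactly $|\lambda|$, and inverting $p(A)$ by the real Taylor series of $1/p$ when $|\lambda|>\|A\|$ --- is a clean substitute for that citation; it uses only that $\B(V_\mathbb{H}^R)$ is a real Banach algebra, so the noncommutativity of $\mathbb{H}$ indeed never enters. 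Your second half runs parallel to machinery the paper develops later for \emph{positive} self-adjoint operators (Propositions \ref{LPr4}, \ref{Pr5}, \ref{Pr04}, \ref{Pr4}, \ref{Pr6}): the paper also places $m(A)$ and $M(A)$ in $\sigma_S(A)$ via minimizing sequences and the generalized Cauchy--Schwarz inequality, but it argues by contradiction against boundedness of $R_{m}(A)^{-1}=\bigl((A-m(A)\IV)^2\bigr)^{-1}$, whereas you first reduce, for real $s$, invertibility of $R_s(A)=(A-s\IV)^2$ to invertibility of $A-s\IV$ (valid for any bounded operator, since $T(T^2)^{-1}$ is a bounded two-sided inverse of $T$ when $T^2$ is invertible) and then show $A-M(A)\IV$ fails to be bounded below. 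Your route has the advantage of applying directly to an arbitrary self-adjoint $A$, whereas the paper's Proposition \ref{Pr6} formally cites statements (\ref{Pr4}) and (\ref{Pr04}) that are proved only under a positivity hypothesis.

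Two cosmetic slips, neither damaging: the roots of $p$ are not "genuinely complex" when $\lambda\in\R$ (they collapse to a double real root), though their modulus is still $|\lambda|$, which is all you use; and your estimate $\|B\phi_n\|^3\le\|B\|^2\langle\phi_n|B\phi_n\rangle$ is a nonstandard but valid variant of Proposition \ref{LPr4}(b), which gives the cleaner $\|B\phi_n\|^2\le\|B\|\langle B\phi_n|\phi_n\rangle$; either way $\|B\phi_n\|\to0$, which is what kills invertibility.
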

\begin{proposition}\label{PP1}\cite{AC}
If $A\in\B(V_\mathbb{H}^R)$ is positive and self-adjoint, then $\sigma_S(A)\subseteq[0, \|A\|]$.
\end{proposition}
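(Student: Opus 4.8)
The plan is to reduce everything to real $\lambda$ and then exploit that, for real $\lambda$, the operator $R_\lambda(A)$ is a perfect square. First I would invoke Proposition \ref{Pr2} to get $\sigma_S(A)\subseteq\mathbb{R}$, and the self-adjoint case of Proposition \ref{Pr3} to get $r_S(A)=\|A\|$, so that already $\sigma_S(A)\subseteq[-\|A\|,\|A\|]$. It then remains only to rule out the negative reals, i.e. to show that every $\lambda<0$ lies in the resolvent set $\rho_S(A)$.

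Fix $\lambda<0$ and set $\mu=|\lambda|=-\lambda>0$. Since $\lambda$ is real, $Re(\lambda)=\lambda$ and $|\lambda|^2=\lambda^2$, and because real scalars commute with $A$ the expansion of the square is legitimate, giving $R_\lambda(A)=A^2-2\lambda A+\lambda^2\IV=(A-\lambda\IV)^2=B^2$, where $B:=A+\mu\IV$. As $GL(V_{\mathbb{H}}^{R})$ is a group, it suffices to prove $B\in GL(V_{\mathbb{H}}^{R})$, for then $B^2=R_\lambda(A)$ is invertible and $\lambda\in\rho_S(A)$.

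To show $B$ is invertible I would establish a uniform lower bound. Since $A\geq 0$ gives $\langle\phi|A\phi\rangle\geq 0$, we have $\langle\phi|B\phi\rangle=\langle\phi|A\phi\rangle+\mu\|\phi\|^2\geq\mu\|\phi\|^2$; combining this with the Cauchy--Schwarz inequality $\langle\phi|B\phi\rangle\leq\|\phi\|\,\|B\phi\|$ (the left side being a nonnegative real) yields $\|B\phi\|\geq\mu\|\phi\|$ for all $\phi$. This single estimate delivers almost everything: $B$ is injective, so $B^{-1}$ exists on $\mbox{ran}(B)$ by Lemma \ref{L3}; $\mbox{ran}(B)$ is closed, since the bound reflects Cauchy sequences; and $B^{-1}$ is bounded by $\mu^{-1}$ on its domain. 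For surjectivity I would observe that $B$ is self-adjoint ($\mu$ real forces $B^{\dagger}=A^{\dagger}+\mu\IV=B$), whence $\mbox{ran}(B)^{\perp}=\ker(B^{\dagger})=\ker(B)=\{0\}$; thus $\mbox{ran}(B)$ is dense as well as closed, so $\mbox{ran}(B)=V_{\mathbb{H}}^{R}$ and $B\in GL(V_{\mathbb{H}}^{R})$.

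Assembling the pieces, no $\lambda<0$ belongs to $\sigma_S(A)$, and together with the containment $\sigma_S(A)\subseteq[-\|A\|,\|A\|]$ from the first step this gives $\sigma_S(A)\subseteq[0,\|A\|]$. I expect the only delicate point to be the verification that the familiar Hilbert-space mechanism---bounded-below implies closed range, and $\mbox{ran}(B)^{\perp}=\ker(B^{\dagger})$---transfers to the right quaternionic setting; this is precisely where one must lean on the facts recorded in the preliminaries, namely that quaternionic Hilbert spaces retain the Cauchy--Schwarz inequality, the adjoint relation \eqref{Ad1}, and completeness. The non-commutativity of $\mathbb{H}$ never interferes, because $\mu$ and $\lambda$ are real scalars throughout.
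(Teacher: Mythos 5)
Your proof is correct, but note that there is no internal proof to compare it against: the paper states Proposition \ref{PP1} as a quotation from \cite{AC} and gives no argument, so your proposal supplies a proof where the paper relies on a citation. Every step checks out in the quaternionic setting: Propositions \ref{Pr2} and \ref{Pr3} reduce the claim to excluding real $\lambda<0$; for such $\lambda$ the factorization $R_\lambda(A)=(A+\mu\IV)^2=B^2$ with $\mu=-\lambda>0$ is legitimate because real scalars commute with operators and preserve right linearity; positivity plus Cauchy--Schwarz gives $\|B\phi\|\geq\mu\|\phi\|$, which yields injectivity (Lemma \ref{L3}), closed range (this is literally Proposition \ref{rcl}), and boundedness of $B^{-1}$; self-adjointness of $B$ gives density of the range via $\mbox{ran}(B)^{\perp}=\ker(B^{\dagger})=\ker(B)=\{0\}$; and since $GL(V_{\mathbb{H}}^{R})$ is a group, $B^{2}=R_{\lambda}(A)$ is invertible in $\B(V_\mathbb{H}^R)$, so $\lambda\in\rho_S(A)$. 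It is worth observing that your mechanism is the same one the paper deploys later for its own, sharper result: in Proposition \ref{Prds} the estimate $\|R_{\lambda}(A)\psi\|\geq d^{2}\|\psi\|$ is derived directly for every real $\lambda\notin[m(A),M(A)]$, and Proposition \ref{Pr04} converts it into $\sigma_S(A)\subseteq[m(A),M(A)]$, which implies the present statement because $m(A)\geq 0$ and $M(A)\leq\|A\|$ for positive self-adjoint $A$. The trade-off is this: the paper's internal route is self-contained (it needs only Propositions \ref{LPr4} and \ref{rcl}, both proved in the paper) and gives the tighter interval, whereas your argument leans on the externally cited Propositions \ref{Pr2} and \ref{Pr3}; in return, yours is shorter, needs only the inequality $r_S(A)\leq\|A\|$ rather than the self-adjoint equality case you invoke, and isolates the pleasant structural fact that for negative real $\lambda$ the operator $R_\lambda(A)$ is a perfect square of the boundedly invertible operator $A+|\lambda|\IV$.
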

\begin{proposition}\label{P3}\cite{Ric}
If $A\in \B(V_{\mathbb{H}}^{R})$ and  $\lambda\in \mathbb{H} $ with $|\lambda|>\|A\|$, then $\lambda\in\rho_{S}(A).$
\end{proposition}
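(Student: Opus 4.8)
The plan is to construct the inverse of $R_{\lambda}(A)=A^{2}-2\mathrm{Re}(\lambda)A+|\lambda|^{2}\IV$ explicitly as a power series in $A$, exploiting the crucial observation that the coefficients $1$, $-2\mathrm{Re}(\lambda)=-2\lambda_{0}$ and $|\lambda|^{2}$ are all \emph{real}. Consequently $R_{\lambda}(A)$ is a real polynomial in the single operator $A$; the powers $A^{n}$ commute with one another and with all real scalars, so the non-commutativity of $\HQ$ never enters and the situation is effectively that of a commutative Banach algebra. First I would note that the naive estimate obtained by writing $R_{\lambda}(A)=|\lambda|^{2}\bigl(\IV-|\lambda|^{-2}(2\lambda_{0}A-A^{2})\bigr)$ and applying a Neumann series does \emph{not} close, since $|\lambda|^{-2}\|2\lambda_{0}A-A^{2}\|$ need not be $<1$ merely from $\|A\|<|\lambda|$; a sharper, radius-of-convergence argument is required.

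Next I would pass to the scalar function $f(x)=(x^{2}-2\lambda_{0}x+|\lambda|^{2})^{-1}$ of a complex variable $x$. Its only singularities are the roots of $x^{2}-2\lambda_{0}x+|\lambda|^{2}$, namely $x=\lambda_{0}\pm i\sqrt{|\lambda|^{2}-\lambda_{0}^{2}}$, both of modulus $\sqrt{\lambda_{0}^{2}+(|\lambda|^{2}-\lambda_{0}^{2})}=|\lambda|$. Hence $f$ is analytic on the open disc of radius $|\lambda|$ about the origin, and its Taylor coefficients $c_{n}$ satisfy $\sum_{n\ge 0}|c_{n}|\,t^{n}<\infty$ for every $t<|\lambda|$.

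The key step is then to substitute $A$ for $x$. Since $\|A\|<|\lambda|$ and $\|A^{n}\|\le\|A\|^{n}$, the series $S:=\sum_{n\ge 0}c_{n}A^{n}$ converges absolutely in the Banach space $\B(V_{\mathbb{H}}^{R})$, and its partial sums, being real-coefficient polynomials in the right linear operator $A$, are themselves right linear and bounded; both properties persist in the norm limit, so $S\in\B(V_{\mathbb{H}}^{R})$. Because every factor commutes, the Cauchy product of $R_{\lambda}(A)$ with $S$ may be rearranged exactly as in the scalar identity $(x^{2}-2\lambda_{0}x+|\lambda|^{2})f(x)=1$, yielding $R_{\lambda}(A)S=SR_{\lambda}(A)=\IV$. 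Thus $R_{\lambda}(A)$ is invertible in $\B(V_{\mathbb{H}}^{R})$, so $\lambda\notin\sigma_{S}(A)$, i.e.\ $\lambda\in\rho_{S}(A)$.

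I expect the main obstacle to be the clean justification of this operator functional calculus rather than any quaternionic phenomenon: one must confirm that $\B(V_{\mathbb{H}}^{R})$ is complete, that composition is jointly continuous so that the Cauchy product is legitimate, and that the formal factorization $(x^{2}-2\lambda_{0}x+|\lambda|^{2})f(x)=1$ transfers verbatim to $A$. All of these are routine once the reduction to real coefficients is in place, which is precisely why non-commutativity causes no difficulty here.
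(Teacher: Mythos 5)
Your proof is correct, but it takes a genuinely different route from the paper: the paper does not prove this proposition at all, its ``proof'' being the one-line observation that the statement is a restatement of Theorem 4.3(a) of the cited reference (Ghiloni--Moretti--Perotti), whereas you give a self-contained construction of the inverse. Your argument is sound at every step: the roots of $x^{2}-2\lambda_{0}x+|\lambda|^{2}$ indeed both have modulus $|\lambda|$, so the Taylor series of $f(x)=(x^{2}-2\lambda_{0}x+|\lambda|^{2})^{-1}$ converges absolutely on the disc of radius $|\lambda|$; since $\|A\|<|\lambda|$, the operator series $S=\sum_{n}c_{n}A^{n}$ converges absolutely in the (real) Banach algebra $\B(V_{\mathbb{H}}^{R})$, and the scalar identity $(x^{2}-2\lambda_{0}x+|\lambda|^{2})f(x)=1$ transfers to $R_{\lambda}(A)S=SR_{\lambda}(A)=\IV$ because only real coefficients and powers of the single operator $A$ are involved. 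Your warning that the naive Neumann splitting $R_{\lambda}(A)=|\lambda|^{2}\bigl(\IV-|\lambda|^{-2}(2\lambda_{0}A-A^{2})\bigr)$ fails is also well taken: for real $\lambda$ and self-adjoint $A$ with $\|A\|$ close to $|\lambda|$ the perturbation can have norm close to $3$, so the sharper radius-of-convergence argument is genuinely needed. Two small points deserve explicit mention rather than tacit use: (i) the coefficients $c_{n}$ are real, which follows from the recurrence $|\lambda|^{2}c_{0}=1$, $|\lambda|^{2}c_{1}=2\lambda_{0}c_{0}$, $|\lambda|^{2}c_{n}=2\lambda_{0}c_{n-1}-c_{n-2}$; this is what makes each partial sum a right linear operator (recall that for non-real $q$ the multiple $qA$ need not be right linear, cf.\ equation (\ref{act})); (ii) completeness of $\B(V_{\mathbb{H}}^{R})$, which is standard. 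What your approach buys is transparency and self-containedness --- it shows the proposition is an elementary commutative-Banach-algebra fact, with quaternionic non-commutativity playing no role --- and it essentially reconstructs the argument underlying the theorem the paper cites; what the paper's citation buys is brevity and deference to the established S-spectrum literature.
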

\begin{proof}It is another way of stating the result of Theorem 4.3 part(a) of \cite{Ric}.
\end{proof}
\begin{proposition}\label{LPr4}
Let $A$ be a positive self-adjoint operator. Then, for any $\phi,\psi \in V_{\mathbb{H}}^{R}$,
\begin{itemize}
\item[(a)]  $\mid \langle A\phi\mid \psi \rangle\mid^{2}\leq\langle A\phi\mid \phi\rangle\langle A\psi \mid \psi\rangle$ (Generalized Cauchy Inequality).
\item[(b)] $\|A\phi\|^{2}\leq\|A\|\langle A\phi\mid \phi\rangle$.
\end{itemize}
\end{proposition}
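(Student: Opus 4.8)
The plan is to derive both inequalities from the fact that the map $B(\phi,\psi)=\langle A\phi\mid\psi\rangle$ is a positive Hermitian sesquilinear form on $V_\mathbb{H}^R$. First I would record the formal properties of $B$: right-linearity of $A$ together with properties (iv)--(v) of the inner product gives $B(\phi,\psi q)=B(\phi,\psi)q$ and $B(\phi q,\psi)=\overline{q}\,B(\phi,\psi)$; self-adjointness of $A$ gives the Hermitian symmetry $\overline{B(\phi,\psi)}=\langle\psi\mid A\phi\rangle=\langle A\psi\mid\phi\rangle=B(\psi,\phi)$; and positivity of $A$ (Definition \ref{Def1}) together with Proposition \ref{SAD} shows that $B(\phi,\phi)=\langle A\phi\mid\phi\rangle$ is a \emph{real} number that is $\ge 0$. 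This is exactly the structure needed to run a Cauchy--Schwarz argument.

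For part (a) I would carry out the standard polarization argument, taking care of non-commutativity. Writing $a=B(\phi,\phi)\ge 0$, $c=B(\psi,\psi)\ge 0$ and $b=B(\phi,\psi)\in\mathbb{H}$, I would expand $0\le B(\phi q+\psi,\ \phi q+\psi)$, but only for quaternions $q$ of the special form $q=bs$ with $s\in\mathbb{R}$. The key simplification is that $\overline{q}\,a\,q=a|b|^2s^2$ and $\overline{q}\,b+\overline{b}\,q=2|b|^2 s$ are both real, precisely because $\overline{b}b=|b|^2\in\mathbb{R}$ commutes with everything in sight. The expansion then collapses to the real quadratic $a|b|^2s^2+2|b|^2 s+c\ge 0$, required to hold for all $s\in\mathbb{R}$. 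Forcing non-negativity (discriminant $\le 0$ when $a|b|^2>0$, with direct checks in the degenerate cases $a=0$ or $b=0$) yields $|b|^4\le a|b|^2 c$ and hence $|b|^2\le ac$, which is the claimed generalized Cauchy inequality.

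For part (b) I would simply specialize (a) to $\psi=A\phi$. Since $\|A\phi\|^2=\langle A\phi\mid A\phi\rangle$ is real and non-negative, part (a) gives $\|A\phi\|^4\le\langle A\phi\mid\phi\rangle\,\langle A(A\phi)\mid A\phi\rangle$. Putting $\eta=A\phi$, the second factor is $\langle A\eta\mid\eta\rangle$, which by self-adjointness and Lemma \ref{L2} (equivalently, the definition of $M(A)$ in \eqref{E7} together with $M(A)=\|A\|$ for a positive operator) is bounded by $\|A\|\,\|\eta\|^2=\|A\|\,\|A\phi\|^2$. Thus $\|A\phi\|^4\le\|A\|\,\langle A\phi\mid\phi\rangle\,\|A\phi\|^2$; dividing by $\|A\phi\|^2$ gives the result, the case $A\phi=0$ being immediate since the right-hand side is non-negative.

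The main obstacle I anticipate lies entirely in part (a): varying over all quaternions $q$ would leave cross terms $\overline{q}\,b+\overline{b}\,q$ that do not assemble into a single real quantity, so the usual reduction to a scalar quadratic would fail. Restricting the variation to $q=bs$ with $s$ real is the device that neutralizes the non-commutativity, and the one point deserving care is confirming that this restricted family of test vectors still retains enough information to force the discriminant condition; once part (a) is in hand, part (b) is essentially formal.
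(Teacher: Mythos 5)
Your proof is correct. Part (b) is essentially identical to the paper's: specialize (a) to $\psi=A\phi$, bound the second factor by $\|A\|\,\|A\phi\|^{2}$, and divide by $\|A\phi\|^{2}$ (the paper gets the bound from plain Cauchy--Schwarz plus $\|A^{2}\phi\|\leq\|A\|\,\|A\phi\|$ rather than from Lemma \ref{L2}, an immaterial difference). In part (a) you and the paper share the same core idea---expand positivity of the form $\langle A(\cdot)\mid(\cdot)\rangle$ on a combination of $\phi$ and $\psi$ whose coefficients are chosen so that all cross terms become real---but the device differs. The paper expands $0\leq\langle A(\phi\lambda-\psi\mu)\mid(\phi\lambda-\psi\mu)\rangle$ with \emph{two} quaternionic parameters and substitutes the specific values $\lambda=\langle A\psi\mid\psi\rangle$ (real) and $\mu=\langle A\psi\mid\phi\rangle$, which collapses the expansion at once to $\langle A\psi\mid\psi\rangle\bigl[\langle A\phi\mid\phi\rangle\langle A\psi\mid\psi\rangle-\mid\langle A\phi\mid\psi\rangle\mid^{2}\bigr]\geq 0$; it must then treat $\psi\in\ker A$ separately and, for $\psi\notin\ker A$, cancel the factor $\langle A\psi\mid\psi\rangle$, tacitly using that $\langle A\psi\mid\psi\rangle>0$ whenever $A\psi\neq 0$ (a fact that really rests on the square root of $A$, Theorem \ref{IT1}, and is left unstated). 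You instead vary over the one-real-parameter family $q=bs$, $s\in\mathbb{R}$, with $b=\langle A\phi\mid\psi\rangle$, reduce to the real quadratic $a|b|^{2}s^{2}+2|b|^{2}s+c\geq 0$, and invoke the discriminant; your degenerate cases ($a=0$ or $b=0$) are disposed of directly, and you never need to divide by $\langle A\psi\mid\psi\rangle$, so your variant is marginally more self-contained. Both routes buy the same thing: reduction of the quaternionic Cauchy--Schwarz argument to a real scalar computation, with non-commutativity neutralized because $\overline{b}b=|b|^{2}$ is real and central.
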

\begin{proof}
Let $\phi,\psi \in V_{\mathbb{H}}^{R}$. If $\psi \in\ker A$, as $A$ is self-adjoint, then (a) is obvious. Assume $\psi \notin \ker A$, then for any $\lambda,\mu\in \mathbb{H}$, as $A$ is positive and linear, we have
 $$0\leq\langle A(\phi\lambda -\psi \mu )\mid (\phi\lambda -\psi \mu )\rangle=\overline{\lambda}\langle A\phi\mid \phi\rangle\lambda+\overline{\mu}\langle A\psi \mid \psi \rangle\mu-\overline{\lambda}\langle A\phi\mid \psi \rangle\mu-\overline{\mu}\langle A\psi \mid \phi\rangle\lambda.$$ 
Choosing $\lambda=\langle A\psi \mid \psi \rangle$ and $\mu=\langle A\psi \mid \phi\rangle$, we obtain:
$$\langle A\psi \mid \psi \rangle[\langle A\phi\mid \phi\rangle\langle A\psi \mid \psi \rangle-\mid \langle A\phi\mid \psi \rangle\mid^{2}]\geq0$$ as $A$ is self-adjoint. Therefore, since $A$ is a positive operator, we have $$\mid \langle A\phi\mid \psi \rangle\mid^{2}\leq\langle A\phi\mid \phi\rangle\langle A\psi \mid \psi \rangle.$$
Let $\phi\in V_{\mathbb{H}}^{R}$. If $\phi\in \ker A$, then the result (b) is obvious. So assume that $\phi\notin \ker A$. Using (a) with $\psi =A\phi$, we have 
\begin{eqnarray*}
\|A\phi\|^{4}=\mid \langle A\phi\mid A\phi\rangle\mid^{2}&\leq&\langle A\phi\mid \phi\rangle\langle A^{2}\phi\mid A\phi\rangle\\
&\leq&\langle A\phi\mid \phi\rangle\|A^{2}\phi\|\,\|A\phi\|\\
&\leq&\langle A\phi\mid \phi\rangle\|A\|\,\|A\phi\|^{2}
\end{eqnarray*}
and the result (b) follows with a division by $\|A\phi\|^{2}$.
\end{proof}	
\begin{proposition}\label{rcl}
If $A\in\mathcal{B}(V_\mathbb{H}^R)$	and there exists $c>0$ such that $\|A\phi\|\geq c\|\phi\|$, for all $\phi\in V_{\mathbb{H}}^{R}$, then the range, ran$(A)$, of $A$ is closed.
\end{proposition}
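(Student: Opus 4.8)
The plan is to show that the range is closed by the standard Cauchy-sequence argument, which transfers convergence in $\mbox{ran}(A)$ back to convergence of preimages via the lower bound $\|A\phi\|\geq c\|\phi\|$. Concretely, I would take an arbitrary sequence $(\psi_n)$ in $\mbox{ran}(A)$ that converges in $V_{\mathbb{H}}^{R}$ to some $\psi$, and aim to produce a $\phi\in V_{\mathbb{H}}^{R}$ with $A\phi=\psi$; this exhibits $\psi\in\mbox{ran}(A)$ and hence shows the range contains all its limit points.

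The key steps, in order, are as follows. First, for each $n$ choose $\phi_n\in V_{\mathbb{H}}^{R}$ with $A\phi_n=\psi_n$; such $\phi_n$ exist by definition of the range. (The hypothesis also forces $A$ to be injective, since $A\phi=0$ gives $c\|\phi\|\leq\|A\phi\|=0$, so the $\phi_n$ are in fact unique by Lemma \ref{L3}, though uniqueness is not needed for closedness.) Second, since $(\psi_n)$ converges it is Cauchy, and I would apply the right linearity of $A$ together with the hypothesis to write
$$\|\phi_n-\phi_m\|\leq\frac{1}{c}\,\|A(\phi_n-\phi_m)\|=\frac{1}{c}\,\|\psi_n-\psi_m\|,$$
which shows that $(\phi_n)$ is itself Cauchy in $V_{\mathbb{H}}^{R}$. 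Third, by completeness of the right quaternionic Hilbert space $V_{\mathbb{H}}^{R}$ the sequence $(\phi_n)$ converges to some $\phi\in V_{\mathbb{H}}^{R}$. Finally, since $A\in\mathcal{B}(V_{\mathbb{H}}^{R})$ is bounded and therefore continuous, $A\phi_n\to A\phi$; but $A\phi_n=\psi_n\to\psi$, so by uniqueness of limits $A\phi=\psi$, giving $\psi\in\mbox{ran}(A)$.

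There is no serious obstacle here: every ingredient (right linearity of $A$, the triangle inequality for the real-valued norm, completeness of $V_{\mathbb{H}}^{R}$, and continuity of bounded operators) survives verbatim in the quaternionic setting, and the non-commutativity of $\mathbb{H}$ never enters the estimates because the scalar $c$ is real and the norm is real-valued. The only point worth stating carefully is that the transfer inequality relies on $A$ being right linear so that $A(\phi_n-\phi_m)=A\phi_n-A\phi_m=\psi_n-\psi_m$; once that is noted, the argument is identical to its complex counterpart.
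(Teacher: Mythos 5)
Your proof is correct and follows essentially the same route as the paper's: pull a convergent sequence in $\mbox{ran}(A)$ back to preimages, use the lower bound $\|A\phi\|\geq c\|\phi\|$ to show the preimage sequence is Cauchy, invoke completeness of $V_{\mathbb{H}}^{R}$, and conclude via continuity of $A$ that the limit lies in the range. Your added remarks (injectivity of $A$, the irrelevance of non-commutativity since $c$ and the norm are real) are accurate but not needed, exactly as you note.
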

\begin{proof}
Let $\{\psi _{n}\}$	be a sequence in ran$(A)$ such that $\psi_{n}\longrightarrow \psi $ as $n\longrightarrow\infty$. Then there exists a sequence $\{\phi_{n}\}$ such that $\psi _{n}=A\phi_{n}$, for all $n\in\mathbb{N}$. Using the fact that any convergence sequence is Cauchy, and 
$$\|\phi_{m}-\phi_{n}\|\leq\frac{1}{c}\|A(\phi_{m}-\phi_{n})\|=\frac{1}{c}\|\psi _{m}-\psi _{n}\|,$$ we have $\{\phi_{n}\}$ is Cauchy and $\phi_{n}\longrightarrow \phi$ as $n\longrightarrow\infty$ for some $\phi\in V_\mathbb{H}^R$. Hence $\psi =A\phi$ as $A$ is continuous. This concludes the result.
\end{proof}	
\begin{proposition}\label{inv}
 If $A\in GL(V_{\mathbb{H}}^{R})$, then $0\notin\sigma_{S}(A)$.
\end{proposition}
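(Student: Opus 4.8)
The plan is to reduce the statement to the elementary observation that $R_{0}(A)=A^{2}$ and then to invoke the group structure of $GL(V_{\mathbb{H}}^{R})$. First I would specialise Definition \ref{D1} to $\lambda=0$: since $Re(0)=0$ and $|0|^{2}=0$, the defining operator collapses to
\begin{equation*}
R_{0}(A)=A^{2}-2Re(0)A+|0|^{2}\IV=A^{2}.
\end{equation*}
Hence $0\in\sigma_{S}(A)$ precisely when $A^{2}$ fails to be invertible in $\B(V_\mathbb{H}^R)$, and it suffices to show that $A^{2}$ is invertible in $\B(V_\mathbb{H}^R)$ whenever $A\in GL(V_{\mathbb{H}}^{R})$.

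Next I would use that $GL(V_{\mathbb{H}}^{R})$ is a group under composition, as recorded just before the present subsection. Since $A\in GL(V_{\mathbb{H}}^{R})$ and the group is closed under products, $A^{2}=A\circ A\in GL(V_{\mathbb{H}}^{R})$; that is, $A^{2}$ is bounded and possesses a bounded inverse, namely $(A^{-1})^{2}$. Explicitly one checks $(A^{-1})^{2}A^{2}=A^{2}(A^{-1})^{2}=\IV$, and the right-linearity of $(A^{-1})^{2}$ follows from that of $A^{-1}$ (the inverse of a right-linear bijection is again right-linear, exactly as in the complex case). Therefore $R_{0}(A)=A^{2}$ is invertible in $\B(V_\mathbb{H}^R)$.

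Consequently $0\notin\sigma_{S}(A)$, equivalently $0\in\rho_{S}(A)$, which is the assertion. No step presents a genuine obstacle: the only point deserving a line of justification is that the inverse of a bounded right-linear bijection is itself bounded and right-linear, but this is immediate from the group property of $GL(V_{\mathbb{H}}^{R})$ already established. Thus the proof is essentially a specialisation of the definition of the S-spectrum to the origin.
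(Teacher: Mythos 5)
Your proof is correct, and it runs in the opposite direction from the paper's. Both arguments hinge on the same specialisation $R_{0}(A)=A^{2}$, but the paper proceeds by contraposition: it assumes $0\in\sigma_{S}(A)$, extracts a nonzero $\phi$ with $A^{2}\phi=0$, and then argues that either $\phi$ or $A\phi$ is a nonzero element of $\ker A$, so that $A\notin GL(V_{\mathbb{H}}^{R})$. You instead argue directly: $GL(V_{\mathbb{H}}^{R})$ is closed under composition, so $A^{2}\in GL(V_{\mathbb{H}}^{R})$ with bounded inverse $(A^{-1})^{2}$, whence $R_{0}(A)$ is invertible in $\B(V_\mathbb{H}^R)$ and $0\in\rho_{S}(A)$. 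Your route is actually tighter: the paper's first step --- passing from ``$R_{0}(A)$ not invertible in $\B(V_\mathbb{H}^R)$'' to ``$A^{2}\phi=0$ for some $\phi\neq 0$'' --- is not justified as stated, since a bounded operator on an infinite-dimensional space can fail to be invertible while still being injective (failure of dense range, or of boundedness of the inverse on the range, are the other possibilities listed in Definition \ref{D1}). Your argument never needs that implication, because exhibiting the explicit bounded two-sided inverse $(A^{-1})^{2}$ rules out all three failure modes at once. What the paper's kernel-based case analysis buys in exchange is independence from the group structure of $GL(V_{\mathbb{H}}^{R})$, but since that structure is recorded in the paper just before this subsection, your appeal to it is legitimate and yields the cleaner proof.
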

\begin{proof}
Suppose that, $0\in\sigma_S(A)$, then 
$A^{2}\phi=0,$ for some $\phi\in V_{\mathbb{H}}^{R}\smallsetminus\{0\}$. From this, we have $A\phi=0$, and $\phi\in\ker{A}\neq\{0\},$
or $A\phi\neq0$, and
$A\phi\in\ker{A}\neq\{0\}.$ Hence  $A\notin GL(V_{\mathbb{H}}^{R})$.
\end{proof}
\begin{proposition}\label{Pr5}
If $A\in\B(V_\mathbb{H}^R)$ and $A$ is self-adjoint, then
\begin{equation}\label{K1}
m(A)=\inf_{\|\phi\|=1}\langle A\phi|\phi\rangle=\max\{\alpha\in\R~|~\alpha I_{V_\mathbb{H}^R}\leq A\},
\end{equation}
\begin{equation}\label{K2}
M(A)=\sup_{\|\phi\|=1}\langle A\phi|\phi\rangle=\min\{\alpha\in\R~|~A\leq\alpha I_{V_\mathbb{H}^R}\},
\end{equation} and
\begin{equation}\label{K3}
\|A\|=\max\{|m(A)|,\abs{M(A)}\}
\end{equation}
\end{proposition}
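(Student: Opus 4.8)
The plan is to reduce all three identities to elementary statements about the real-valued quadratic form $\phi\mapsto\langle A\phi\mid\phi\rangle$, which is well defined and real by Proposition \ref{SAD}; consequently $m(A)$ and $M(A)$ are genuine real numbers and the order relations in Definition \ref{Def1} are comparisons of reals. The one computation I would record at the outset is that for real $\alpha$ the operator $\alpha\IV$ is self-adjoint and satisfies $\langle(\alpha\IV)\phi\mid\phi\rangle=\alpha\|\phi\|^{2}$, using property (v) of the inner product together with $\overline{\alpha}=\alpha$. I would also note the homogeneity relation $\langle A\phi\mid\phi\rangle=\|\phi\|^{2}\langle Au\mid u\rangle$ for $\phi\neq0$ and $u=\phi/\|\phi\|$, which lets me pass freely between the unit sphere and all of $V_\mathbb{H}^R$; here the real scalar $\|\phi\|$ commutes through everything, so non-commutativity causes no trouble.

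For \eqref{K1}, I would unwind $\alpha\IV\le A$ via Definition \ref{Def1} into $\alpha\|\phi\|^{2}\le\langle A\phi\mid\phi\rangle$ for all $\phi$. Restricting to $\|\phi\|=1$ shows $\alpha\le\langle A\phi\mid\phi\rangle$ for every unit vector, hence $\alpha\le m(A)$; conversely, if $\alpha\le m(A)$ then the homogeneity relation gives $\alpha\|\phi\|^{2}\le m(A)\|\phi\|^{2}\le\langle A\phi\mid\phi\rangle$ for every $\phi$, so $\alpha\IV\le A$. Thus $\{\alpha\in\R:\alpha\IV\le A\}=(-\infty,m(A)]$ and its maximum is $m(A)$. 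The argument for \eqref{K2} is the mirror image: $A\le\alpha\IV$ unwinds to $\langle A\phi\mid\phi\rangle\le\alpha\|\phi\|^{2}$, giving $\{\alpha\in\R:A\le\alpha\IV\}=[M(A),\infty)$ with minimum $M(A)$.

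For \eqref{K3}, I would start from Lemma \ref{L2}, which gives $\|A\|=\sup_{\|\phi\|=1}\abs{\langle\phi\mid A\phi\rangle}$; since $\langle\phi\mid A\phi\rangle=\overline{\langle A\phi\mid\phi\rangle}=\langle A\phi\mid\phi\rangle$ is real, this equals $\sup_{\|\phi\|=1}\abs{\langle A\phi\mid\phi\rangle}$. Writing $S=\{\langle A\phi\mid\phi\rangle:\|\phi\|=1\}\subseteq\R$, which has $\inf S=m(A)$ and $\sup S=M(A)$, the claim reduces to the elementary fact that for a bounded set of reals with infimum $m$ and supremum $M$ one has $\sup\{\abs{s}:s\in S\}=\max\{\abs{m},\abs{M}\}$: every $s\in S$ satisfies $\abs{s}\le\max\{\abs{m},\abs{M}\}$, while choosing elements of $S$ converging to $m$ and to $M$ and using continuity of $\abs{\cdot}$ yields the reverse inequality. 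This gives $\|A\|=\max\{\abs{m(A)},\abs{M(A)}\}$.

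The content here is entirely standard and mirrors the complex case; the only place demanding care is the bookkeeping with quaternionic scalars in the quadratic form, and this is harmless precisely because $\alpha$ and $\|\phi\|$ are real and hence central. So I expect no serious obstacle, provided one invokes Proposition \ref{SAD} to guarantee reality (so that the order relations make sense) and Lemma \ref{L2} for the norm formula underlying \eqref{K3}.
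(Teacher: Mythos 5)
Your proposal is correct and takes essentially the same route as the paper: for \eqref{K1} (and its mirror image \eqref{K2}) both proofs use the identical normalization/homogeneity step to show $m(A)I_{V_\mathbb{H}^R}\leq A$ and that any admissible $\alpha$ satisfies $\alpha\leq m(A)$, so that $m(A)$ is the maximum of the set. The paper in fact writes out only \eqref{K1} and defers the rest to ``complex counterparts,'' so your completion of \eqref{K3} via Lemma \ref{L2} together with the elementary fact $\sup_{s\in S}\abs{s}=\max\{\abs{\inf S},\abs{\sup S}\}$ for a bounded set $S\subseteq\R$ is exactly the intended standard argument, just made explicit.
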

\begin{proof}
The proof follows its complex counterparts. For the sake of completeness we provide the proof of (\ref{K1}). Since 
$\displaystyle\label{DmA}m(A)=\inf_{\|\phi\|=1}{\langle A\phi\mid\phi\rangle},
$
we have $m(A)\leq\langle A\phi\mid\phi\rangle,$ for all $\phi\in V_{\mathbb{H}}^{R}$ with $\|\phi\|=1$. Let $\psi\in V_{\mathbb{H}}^{R}$ with $\psi\neq 0$, then $\displaystyle m(A)\leq\left\langle A\frac{\psi}{\|\psi\|}\mid\frac{\psi}{\|\psi\|}\right\rangle.$ This implies that
$m(A)\|\psi\|^{2}\leq\langle A\psi\mid\psi\rangle,$
 or equivalently $\langle m(A)I_{V_\mathbb{H}^R}\;\psi\mid\psi\rangle\leq\langle A\psi\mid\psi\rangle.$
 Thus $m(A)I_{V_{\mathbb{H}}^{R}}\leq A~~\mbox{~and therefore~}$
\begin{equation}\label{5E1}
m(A)\in\{\alpha\in\R~|~\alpha I_{V_\mathbb{H}^R}\leq A\}.
\end{equation}
Now let $\alpha\in\{\alpha\in\R~|~\alpha I_{V_\mathbb{H}^R}\leq A\}$, then one can easily see that $\alpha\leq\langle A\phi\mid\phi\rangle,~~\mbox{~for all~}~~\phi\in V_{\mathbb{H}}^{R}~~\mbox{~with~}~~\|\phi\|=1$ and hence $\alpha\leq\inf_{\|\phi\|=1}\langle A\phi\mid\phi\rangle=m(A).$ This together with (\ref{5E1}) proves (\ref{K1}).
\end{proof}
\begin{proposition}\label{Prds}
	Let $A$ be a positive self-adjoint operator and $M(A)$ and $m(A)$ be as in (\ref{E7}). If $\lambda\notin[m(A),M(A)]$, then 
	\begin{itemize}
		\item[(a)] $R_{\lambda}(A)$ is one to one and ran$\,(R_{\lambda}(A))$ is closed.
		\item[(b)] ran$\,(R_{\lambda}(A))=V_{\mathbb{H}}^{R}$.
	\end{itemize}	
\end{proposition}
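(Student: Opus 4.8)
The plan is to show that under the hypothesis $\lambda\notin[m(A),M(A)]$ the operator $R_{\lambda}(A)$ is bounded below, i.e.\ that there is a constant $c>0$ with $\|R_{\lambda}(A)\phi\|\ge c\|\phi\|$ for all $\phi\in V_{\mathbb{H}}^{R}$; parts (a) and (b) then fall out of the machinery already at hand. First I would record that $R_{\lambda}(A)=A^{2}-2\lambda_{0}A+|\lambda|^{2}\IV$ is self-adjoint, writing $\lambda=\lambda_{0}+\lambda_{1}i+\lambda_{2}j+\lambda_{3}k$. Using $\langle A^{2}\phi\mid\phi\rangle=\langle A\phi\mid A\phi\rangle=\|A\phi\|^{2}$ (from self-adjointness of $A$) and that $2\lambda_{0},|\lambda|^{2}$ are real, a direct computation gives
$$\langle R_{\lambda}(A)\phi\mid\phi\rangle=\|A\phi\|^{2}-2\lambda_{0}\langle A\phi\mid\phi\rangle+|\lambda|^{2}\|\phi\|^{2},$$
which is real for every $\phi$, so $R_{\lambda}(A)$ is self-adjoint by Proposition \ref{SAD}.

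The crux is the lower bound. Fix $\phi$ with $\|\phi\|=1$ and put $a=\langle A\phi\mid\phi\rangle$, which is real by Proposition \ref{SAD} and, by self-adjointness together with (\ref{E7}), satisfies $a\in[m(A),M(A)]$. The Cauchy--Schwarz inequality gives $|a|=|\langle A\phi\mid\phi\rangle|\le\|A\phi\|$, so $\|A\phi\|^{2}\ge a^{2}$. Substituting this into the identity above and completing the square,
$$\langle R_{\lambda}(A)\phi\mid\phi\rangle\ge a^{2}-2\lambda_{0}a+|\lambda|^{2}=(a-\lambda_{0})^{2}+(\lambda_{1}^{2}+\lambda_{2}^{2}+\lambda_{3}^{2}).$$
Minimising the right-hand side over $a\in[m(A),M(A)]$ produces the constant $c^{2}:=\mbox{dist}(\lambda_{0},[m(A),M(A)])^{2}+\lambda_{1}^{2}+\lambda_{2}^{2}+\lambda_{3}^{2}$, and the hypothesis forces $c^{2}>0$: indeed $c^{2}=0$ would require both $\lambda_{0}\in[m(A),M(A)]$ and $\lambda_{1}=\lambda_{2}=\lambda_{3}=0$, i.e.\ $\lambda\in[m(A),M(A)]$, contrary to assumption. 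Hence $\langle R_{\lambda}(A)\phi\mid\phi\rangle\ge c^{2}\|\phi\|^{2}$ for all $\phi$, and one further Cauchy--Schwarz step yields $c^{2}\|\phi\|^{2}\le\langle R_{\lambda}(A)\phi\mid\phi\rangle\le\|R_{\lambda}(A)\phi\|\,\|\phi\|$, whence $\|R_{\lambda}(A)\phi\|\ge c^{2}\|\phi\|$.

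The remaining steps are then routine. For (a): if $R_{\lambda}(A)\phi=0$ the estimate gives $c^{2}\|\phi\|\le 0$, so $\phi=0$ and $R_{\lambda}(A)$ is one-to-one; closedness of $\mbox{ran}(R_{\lambda}(A))$ follows from Proposition \ref{rcl} with the constant $c^{2}$. For (b): since $R_{\lambda}(A)$ is self-adjoint, the adjoint relation (\ref{Ad1}) and the orthogonal decomposition available on a quaternionic Hilbert space give $\mbox{ran}(R_{\lambda}(A))^{\perp}=\ker(R_{\lambda}(A)^{\dagger})=\ker(R_{\lambda}(A))=\{0\}$ by the injectivity just proved, so $\mbox{ran}(R_{\lambda}(A))$ is dense; being also closed by (a), it equals $V_{\mathbb{H}}^{R}$.

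I expect the only genuine obstacle to be the lower-bound estimate, and within it the verification that the single quantity $(a-\lambda_{0})^{2}+(\lambda_{1}^{2}+\lambda_{2}^{2}+\lambda_{3}^{2})$ is \emph{uniformly} positive over $a\in[m(A),M(A)]$; this is precisely where the full force of $\lambda\notin[m(A),M(A)]$ enters, covering at once the real and the genuinely quaternionic values of $\lambda$. The surrounding algebra — self-adjointness of $R_{\lambda}(A)$, the two invocations of Cauchy--Schwarz, and the closed-range/density argument — is standard and insensitive to non-commutativity, since every scalar appearing in the estimate is real.
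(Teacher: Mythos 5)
Your proof is correct, and while it shares the overall skeleton of the paper's argument, it reaches the key estimate by a genuinely different route. Both proofs reduce everything to the lower bound $\|R_{\lambda}(A)\phi\|\geq d^{2}\|\phi\|$ with $d=\mathrm{dist}(\lambda,[m(A),M(A)])$, then get injectivity and closed range from Proposition \ref{rcl}, and density of the range from self-adjointness (your $\mathrm{ran}(R_{\lambda}(A))^{\perp}=\ker(R_{\lambda}(A)^{\dagger})=\ker(R_{\lambda}(A))=\{0\}$ is exactly the paper's contradiction argument, recast as a statement about orthogonal complements). The difference is the crux. The paper sets $\mu=\langle A\phi\mid\phi\rangle$ for a unit vector $\phi$, applies Proposition \ref{LPr4}(b) with $A$ replaced by $A-\mu \IV$ to obtain $\|(A-\mu\IV)\phi\|=0$ and $\|(A-\mu\IV)^{2}\phi\|=0$, and then extracts the bound from a reverse triangle inequality applied to $R_{\lambda}(A)=[(A-\mu\IV)-(\lambda\IV-\mu\IV)]^{2}$. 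You instead bound the quadratic form directly: $\langle R_{\lambda}(A)\phi\mid\phi\rangle=\|A\phi\|^{2}-2\lambda_{0}a+|\lambda|^{2}\geq(a-\lambda_{0})^{2}+\lambda_{1}^{2}+\lambda_{2}^{2}+\lambda_{3}^{2}\geq c^{2}$ with $a=\langle A\phi\mid\phi\rangle\in[m(A),M(A)]$, and convert this to a norm bound with one more Cauchy--Schwarz step; note that your $c^{2}$ coincides exactly with the paper's $d^{2}$. Your route buys something real: Proposition \ref{LPr4} is stated for \emph{positive} operators, and $A-\mu\IV$ with $\mu\in[m(A),M(A)]$ is in general not positive, so the paper's inequality (\ref{inq1}), read literally, would make every unit vector an eigenvector of $A$. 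Your form-based estimate uses only self-adjointness of $A$, reality of the scalars $\lambda_{0}$, $|\lambda|^{2}$, and $a$, and the elementary bound $\|A\phi\|^{2}\geq a^{2}$, so it sidesteps that difficulty entirely while landing on the identical constant; it is also manifestly insensitive to quaternionic noncommutativity, which the paper's manipulation of $(\lambda\IV-\mu\IV)$ as if it commuted with $A-\mu\IV$ is not.
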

\begin{proof}
	Suppose that $\lambda\notin[m(A),M(A)]$ and $d=$dist$(\lambda,[m(A),M(A)])$. Let $\phi\in V_{\mathbb{H}}^{R}$ with $\|\phi\|=1$ and $\mu=\langle A\phi\mid\phi\rangle$, then $\langle (A-\mu \IV)\phi\mid\phi\rangle=0$.  Now using proposition (\ref{LPr4}), we have
	\begin{equation}\label{inq1}
	\|(A-\mu \IV)\phi\|^{2}\leq\|A-\mu \IV\|\,\langle (A-\mu\IV)\phi\mid\phi\rangle=0.
	\end{equation}
	Again using proposition (\ref{LPr4}) together with (\ref{inq1}), we get
	\begin{equation}\label{inq2}
	\|(A-\mu \IV)^{2}\phi\|^{2}\leq\|A-\mu \IV\|^{2}\,\|(A-\mu\IV)\phi\|^{2}=0.
	\end{equation}
	Now, (\ref{inq1}) and (\ref{inq2}) imply
	\begin{eqnarray*}
		\|R_{\lambda}(A)\phi\|&=& \|[(A-\mu \IV)-(\lambda \IV-\mu \IV)]^2\phi\|\\
		&=& \|(A-\mu \IV)^{2}\phi-2(A-\mu \IV)(\lambda \IV-\mu \IV)\phi+(\lambda \IV-\mu \IV)^{2}\phi\|\\
		&\geq&\mid\|(A-\mu \IV)^{2}\phi-[-(\lambda \IV-\mu \IV)^{2}\phi]\,\|-\|2(A-\mu \IV)(\lambda \IV-\mu \IV)\phi\|\mid\\
		&\geq& \mid\,\mid\|(A-\mu \IV)^{2}\phi\|-\mid \lambda-\mu\mid^{2}\,\mid-2\mid \lambda-\mu\mid\|(A-\mu \IV)\phi\|\,  \mid~\mbox{~~as~~}~\|\phi\|=1\\
		&=&\mid \lambda-\mu\mid^{2}\geq d^{2}>0.
	\end{eqnarray*}
	It follows that 
	\begin{equation}\label{ibd}
	\|R_{\lambda}(A)\psi\|\geq d^{2}\|\psi\|~\mbox{~~for all~~}~\psi\in V_{\mathbb{H}}^{R}.
	\end{equation}
	Hence $R_{\lambda}(A)$ is one to one and, by the proposition (\ref{rcl}), ran$\,(R_{\lambda}(A))$ is closed, this concludes the result (a).\\
	Assume that there exists $\varphi\in V_{\mathbb{H}}^{R}\smallsetminus\{0\}$ such that $\varphi\,\bot\,\mbox{ran}\,(R_{\lambda}(A))$. Then, since $A$ is self-adjoint, for all $\psi\in V_{\mathbb{H}}^{R}$, $$\langle(A-\lambda \IV)^{2}\psi\mid\varphi\rangle=\langle\psi\mid(A-\lambda \IV)^{2}\varphi\rangle=0.$$
	Therefore $R_{\lambda}(A)\varphi=(A-\lambda \IV)^{2}\varphi=0$ and it contradicts (\ref{ibd}). Thus $\mbox{ran}\,(R_{\lambda}(A))^{\bot}=\{0\}$ and so ran$\,(R_{\lambda}(A))$ is dense in $V_{\mathbb{H}}^{R}$. Hence the results follow.
\end{proof}
\begin{proposition}\label{Pr04}
	If $A$ is a positive self-adjoint operator and $M(A)$ and $m(A)$ are as in (\ref{E7}), then  $\sigma_S(A)\subset[m(A),M(A)]$.
\end{proposition}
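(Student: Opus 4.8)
The plan is to prove the contrapositive inclusion: every quaternion $\lambda\notin[m(A),M(A)]$ lies in the S-resolvent set $\rho_S(A)=\mathbb{H}\setminus\sigma_S(A)$. Since the three conditions (a), (b), (c) in Definition \ref{D1} characterize membership in $\rho_S(A)$, it suffices to verify all three for such a $\lambda$. Most of the analytic work is already carried out in Proposition \ref{Prds}, so this proof is essentially an assembly of that result with the definition of the resolvent set.

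First I would fix $\lambda\notin[m(A),M(A)]$ and set $d=\mbox{dist}(\lambda,[m(A),M(A)])>0$, exactly as in Proposition \ref{Prds}. Part (a) of that proposition states that $R_{\lambda}(A)$ is one to one, which is precisely condition (a), namely $\ker(R_{\lambda}(A))=\{0\}$. Part (b) states that $\mbox{ran}(R_{\lambda}(A))=V_{\mathbb{H}}^{R}$, so in particular the range is dense in $V_{\mathbb{H}}^{R}$, giving condition (b). Thus two of the three requirements are immediate.

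The remaining step is condition (c): that $R_{\lambda}(A)^{-1}:\mbox{ran}(R_{\lambda}(A))\longrightarrow \D(A^{2})$ is bounded. Here I would reuse the lower bound \eqref{ibd}, namely $\|R_{\lambda}(A)\psi\|\geq d^{2}\|\psi\|$ for all $\psi\in V_{\mathbb{H}}^{R}$. By parts (a) and (b), $R_{\lambda}(A)$ is a bijection of $V_{\mathbb{H}}^{R}$ onto itself (recall $A$ is bounded, so $\D(A^{2})=V_{\mathbb{H}}^{R}$), so for every $\chi\in V_{\mathbb{H}}^{R}$ there is a unique $\psi=R_{\lambda}(A)^{-1}\chi$ with $\chi=R_{\lambda}(A)\psi$. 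Substituting into \eqref{ibd} yields $\|R_{\lambda}(A)^{-1}\chi\|=\|\psi\|\leq d^{-2}\|\chi\|$, so the inverse is bounded with $\|R_{\lambda}(A)^{-1}\|\leq d^{-2}$. This establishes condition (c), hence $\lambda\in\rho_S(A)$; since $\lambda$ was an arbitrary point outside $[m(A),M(A)]$, the complement of $[m(A),M(A)]$ is contained in $\rho_S(A)$, which is equivalent to $\sigma_S(A)\subset[m(A),M(A)]$.

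I do not expect a serious obstacle, because Proposition \ref{Prds} has already handled injectivity, closedness, and surjectivity of the range. The only point needing genuine care is the last one — deducing boundedness of the inverse from the a priori estimate \eqref{ibd} — and making sure the inverse is defined on the whole space, which is exactly where surjectivity from part (b) is used. One should also note that positivity of $A$ guarantees $m(A)\geq 0$, so this result refines the cruder localization $\sigma_S(A)\subseteq[0,\|A\|]$ of Proposition \ref{PP1}.
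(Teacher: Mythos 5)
Your proposal is correct and follows essentially the same route as the paper's own proof: both argue the contrapositive, invoke Proposition (\ref{Prds}) for injectivity and surjectivity of $R_{\lambda}(A)$, and then use the estimate (\ref{ibd}) to bound $R_{\lambda}(A)^{-1}$ by $d^{-2}$, concluding $\lambda\in\rho_{S}(A)$. The only cosmetic difference is that you organize the verification explicitly around conditions (a), (b), (c) of Definition \ref{D1}, whereas the paper states the same facts in terms of invertibility in $\mathcal{B}(V_{\mathbb{H}}^{R})$.
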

\begin{proof}
	Suppose $\lambda\notin [m(A),M(A)]$. Then by the proposition (\ref{Prds}), $R_{\lambda}(A)^{-1}$ exists and for each $\psi\in V_{\mathbb{H}}^{R}$, there is a unique $\phi\in V_{\mathbb{H}}^{R}$ such that $\psi=R_{\lambda}(A)\phi$. We can easily see, from (\ref{ibd}) in the proof of proposition (\ref{Prds}), $$\|R_{\lambda}(A)^{-1}\psi\|=\|\phi\|\leq\frac{1}{d^{2}}\|\psi\|,$$ where $d=$dist$(\lambda,[m(A),M(A)])$. Thereby $R_{\lambda}(A)^{-1}\in\mathcal{B}(V_{\mathbb{H}}^{R})$. That is, $\lambda\notin\sigma_{S}(A)$. Hence the conclusion follows.
\end{proof}
\begin{proposition}\label{Pr4}
	If $A$ is a positive self-adjoint operator and $M(A)$ and $m(A)$ are as in (\ref{E7}), then $M(A), m(A)\in\sigma_S(A)$.
\end{proposition}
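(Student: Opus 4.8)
The plan is to prove both memberships by producing, for each of the two endpoints, a sequence of unit vectors on which $R_{\lambda}(A)$ acts as a vanishing sequence; this shows $R_\lambda(A)$ is not bounded below and therefore cannot be invertible in $\B(V_\mathbb{H}^R)$, which is exactly what membership in $\sigma_S(A)$ requires. The first observation is that since $A$ is self-adjoint, $m(A)$ and $M(A)$ are real (Proposition \ref{SAD}), and for a real scalar $\lambda$ the operator $R_\lambda(A)=A^2-2\lambda A+\lambda^2\IV$ collapses to the square $(A-\lambda\IV)^2$. Thus it suffices to exhibit unit vectors $\phi_n$ with $\|(A-\lambda\IV)^2\phi_n\|\to 0$ for $\lambda\in\{m(A),M(A)\}$.

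For $\lambda=M(A)$, I would set $B=M(A)\IV-A$. By (\ref{K2}) of Proposition \ref{Pr5} we have $A\leq M(A)\IV$, so $B$ is a positive self-adjoint operator, and by the definition of $M(A)$ as a supremum there is a sequence of unit vectors $\phi_n$ with $\langle A\phi_n\mid\phi_n\rangle\to M(A)$, equivalently $\langle B\phi_n\mid\phi_n\rangle\to 0$. Applying Proposition \ref{LPr4}(b) to the positive operator $B$ gives $\|B\phi_n\|^2\leq\|B\|\,\langle B\phi_n\mid\phi_n\rangle\to 0$, hence $\|B\phi_n\|\to 0$, and then $\|B^2\phi_n\|\leq\|B\|\,\|B\phi_n\|\to 0$. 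Since $B^2=(A-M(A)\IV)^2=R_{M(A)}(A)$, I have produced unit vectors sent to $0$ in the limit.

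The case $\lambda=m(A)$ is entirely symmetric, using $C=A-m(A)\IV\geq 0$ via (\ref{K1}) together with a unit sequence satisfying $\langle A\phi_n\mid\phi_n\rangle\to m(A)$. In both cases, if $R_\lambda(A)$ admitted a bounded inverse we would get $1=\|\phi_n\|\leq\|R_\lambda(A)^{-1}\|\,\|R_\lambda(A)\phi_n\|\to 0$, a contradiction; hence $R_\lambda(A)$ is not invertible in $\B(V_\mathbb{H}^R)$ and $\lambda\in\sigma_S(A)$. The only point requiring care is the reduction of $R_\lambda(A)$ to $(A-\lambda\IV)^2$ for real $\lambda$ and the verification that Proposition \ref{LPr4}(b) is invoked for the auxiliary positive operators $B$ and $C$ rather than for $A$ itself; the quaternionic non-commutativity plays no role beyond this, since the scalars $m(A),M(A)$ are real and commute with everything.
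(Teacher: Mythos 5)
Your proof is correct, and it shares the paper's two essential ingredients---the variational characterizations of $m(A)$ and $M(A)$ from Proposition \ref{Pr5}, and the inequality $\|B\phi\|^{2}\leq\|B\|\langle B\phi\mid\phi\rangle$ of Proposition \ref{LPr4}(b) applied to an auxiliary positive operator---but the execution differs from the paper's in two genuine respects, both to your advantage. First, the paper proves only $m(A)\in\sigma_S(A)$ directly and then obtains $M(A)\in\sigma_S(A)$ by passing to $-A$ (using $m(-A)=-M(A)$ and $R_{-M(A)}(-A)=(A-M(A)\IV)^{2}=R_{M(A)}(A)$); since $-A$ is not positive, that step silently relies on the fact that the $m$-argument never actually uses positivity of the operator, a point the paper does not flag. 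Your symmetric treatment via $B=M(A)\IV-A\geq 0$ and $C=A-m(A)\IV\geq 0$ avoids this issue entirely. Second, the paper argues by contradiction from the outset: assuming $R_{m}(A)^{-1}\in\B(V_\mathbb{H}^R)$, it runs a chain of inequalities in which Proposition \ref{LPr4}(b) is applied twice, once to $R_m(A)^{-1}$ itself and once to $A-m(A)\IV$; the application to the inverse tacitly requires that the inverse of a positive invertible self-adjoint operator is again positive and self-adjoint, a fact the paper only establishes later (Proposition \ref{Pr7}(c)). You instead establish unconditionally that $\|R_\lambda(A)\phi_n\|\to 0$ along unit vectors---a genuine Weyl-type sequence for $R_\lambda(A)$---invoking Proposition \ref{LPr4}(b) only for the manifestly positive operators $B$ and $C$, after which non-invertibility follows from the one-line \emph{not bounded below} contradiction. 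The net effect is a cleaner argument that demands less of the hypothetical inverse and isolates the only point of care (reality of $m(A)$ and $M(A)$, so that $R_\lambda(A)=(A-\lambda\IV)^{2}$) at the start, exactly as you note.
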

\begin{proof}
	From (\ref{K1}) in the proposition (\ref{Pr5}), we have for every $n\in\mathbb{N}$,  there exists $\psi_{n}\in V_{\mathbb{H}}^{R}\smallsetminus\{0\}$ with $\|\psi\|=1$ such that $$ m(A)+\frac{1}{n}>\langle A\psi_{n}\mid\psi_{n}\rangle$$ and hence we hold
	$$0\leq\langle(A-m(A)I_{V_\mathbb{H}^R})\psi_{n}\mid\psi_{n}\rangle<\frac{1}{n}.$$
	Therefore
	\begin{equation}\label{leq1}
	\lim_{n\rightarrow\infty}\langle(A-m(A)I_{V_\mathbb{H}^R})\psi_{n}\mid\psi_{n}\rangle=0.
	\end{equation}
	Now assume that $m(A)\in\rho_{S}(A)$. Then $R_{m}(A)=(A-m(A)I_{V_\mathbb{H}^R})^{2}$ is an invertible  bounded linear operator on $V_\mathbb{H}^R$. Also note that, since $A$ is self adjoint so are $R_m(A)$ and $A-m(A)I_{V_\mathbb{H}^R}$. Then for every $n\in\mathbb{N}$, using proposition (\ref{LPr4}), we have
	\begin{eqnarray*}
		1=\|\psi_{n}\|^{2}&=& \|R_m(A)^{-1}R_m(A)\psi_n\|^{2}\\
		&\leq & \|R_m(A)^{-1}\|\langle\psi_n|R_m(A)\psi_{n}\rangle\\
		&=& \|R_m(A)^{-1}\|\langle R_m(A)\psi_n|\psi_n\rangle\\
		&=&\|R_m(A)^{-1}\|\langle (A-m(A)I_{V_\mathbb{H}^R})^2\psi_n|\psi_n\rangle\\
		&=&\|R_m(A)^{-1}\|\langle (A-m(A)I_{V_\mathbb{H}^R})\psi_n|(A-m(A)I_{V_\mathbb{H}^R})\psi_n\rangle\\
		&=&\|R_m(A)^{-1}\|\|(A-m(A)I_{V_\mathbb{H}^R})\psi_n\|^2\\
		&\leq&\|R_m(A)^{-1}\|\|(A-m(A)I_{V_\mathbb{H}^R})\|\langle (A-m(A)I_{V_\mathbb{H}^R})\psi_n|\psi_n\rangle,\\
	\end{eqnarray*}
	which contradicts (\ref{leq1}). Hence $m(A)\in \sigma_S(A)$. By the same token 
	$$-M(A)=\max\{\alpha\in\R~|~\alpha I_{V_\mathbb{H}^R}\leq -A\}\in \sigma_S(-A).$$ That is, $(-A+M(A)I_{V_\mathbb{H}^R})^{2}=(A-M(A)I_{V_\mathbb{H}^R})^{2}$ is not invertible. Therefore, $M(A)\in\sigma_S(A)$  and this completes the proof.
\end{proof}
\begin{remark}\label{Re1}
If $A\in GL(V_\mathbb{H}^R)$ is positive and self-adjoint, then by propositions (\ref{PP1}), (\ref{inv}) and (\ref{Pr4}) we have
$
0<m(A),~M(A)<\infty.
$
\end{remark}
\begin{proposition}\label{Pr6}
Let $A\in\B(V_\mathbb{H}^R)$ and $A$ be self-adjoint, then
\begin{eqnarray}
m(A)&=&\min\{\lambda~|~\lambda\in\sigma_S(A)\}\label{sp1}\\
M(A)&=&\max\{\lambda~|~\lambda\in\sigma_S(A)\}\label{sp2}\\
\|A\|&=&\max\{\abs{\lambda}~|~\lambda\in\sigma_S(A)\}\label{sp3}
\end{eqnarray}
\end{proposition}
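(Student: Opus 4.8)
The plan is to reduce everything to the positive self-adjoint case already settled in Propositions \ref{Pr04} and \ref{Pr4}, by means of a real translation. First I would set $B=A-m(A)\IV$. By \eqref{K1} of Proposition \ref{Pr5} we have $m(A)\IV\leq A$, hence $B\geq 0$, and $B$ is manifestly self-adjoint and bounded, so both Propositions \ref{Pr04} and \ref{Pr4} apply to $B$. Since $\langle B\phi\mid\phi\rangle=\langle A\phi\mid\phi\rangle-m(A)$ for every unit vector $\phi$, a one-line computation gives $m(B)=m(A)-m(A)=0$ and $M(B)=M(A)-m(A)$.

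The crucial step is the spectral shift identity $\sigma_S(A)=\sigma_S(B)+m(A)$. By Proposition \ref{Pr2} both $\sigma_S(A)$ and $\sigma_S(B)$ are contained in $\R$, so it suffices to compare the resolvent operators at real arguments. For real $\lambda$ one has $Re(\lambda)=\lambda$ and $\abs{\lambda}^2=\lambda^2$, so that $R_{\lambda}(T)=T^{2}-2\lambda T+\lambda^{2}\IV=(T-\lambda\IV)^{2}$ for any $T\in\B(V_{\mathbb{H}}^{R})$. Applying this to $T=B$ and writing $c=m(A)$, I get for real $\lambda$
$$R_{\lambda}(B)=(B-\lambda\IV)^{2}=\bigl(A-(\lambda+c)\IV\bigr)^{2}=R_{\lambda+c}(A),$$
so $R_{\lambda}(B)$ is invertible in $\B(V_{\mathbb{H}}^{R})$ precisely when $R_{\lambda+c}(A)$ is. Hence $\lambda\in\sigma_S(B)\iff\lambda+m(A)\in\sigma_S(A)$ for real $\lambda$, and, because both spectra are real, this upgrades to the desired set identity.

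With the shift in place, \eqref{sp1} and \eqref{sp2} follow at once. Proposition \ref{Pr04} applied to $B$ gives $\sigma_S(B)\subset[0,\,M(A)-m(A)]$, whence $\sigma_S(A)=\sigma_S(B)+m(A)\subset[m(A),M(A)]$; and Proposition \ref{Pr4} applied to $B$ gives $0=m(B)\in\sigma_S(B)$ and $M(A)-m(A)=M(B)\in\sigma_S(B)$, so that $m(A),M(A)\in\sigma_S(A)$. Combining the inclusion with the membership of the two endpoints identifies $m(A)$ as the minimum and $M(A)$ as the maximum of $\sigma_S(A)$.

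Finally, for \eqref{sp3}, since $\sigma_S(A)\subset[m(A),M(A)]$ with both endpoints attained, the number $\max\{\abs{\lambda}:\lambda\in\sigma_S(A)\}$ equals $\max\{\abs{m(A)},\abs{M(A)}\}$, which is exactly $\|A\|$ by \eqref{K3} of Proposition \ref{Pr5}; alternatively \eqref{sp3} is immediate from $r_S(A)=\|A\|$ in Proposition \ref{Pr3} together with the compactness of $\sigma_S(A)$ from Proposition \ref{Pr1}, which guarantees the supremum defining $r_S(A)$ is attained. I expect the main obstacle to be the careful justification of the spectral shift identity — specifically, confirming that the reduction to real $\lambda$ through Proposition \ref{Pr2} is legitimate and that invertibility transfers cleanly between $R_{\lambda}(B)$ and $R_{\lambda+c}(A)$; once that algebra is pinned down, the rest is bookkeeping on top of the positive-operator results.
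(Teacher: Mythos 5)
Your proof is correct, and it uses the same core ingredients as the paper --- Propositions \ref{Pr04}, \ref{Pr4} and equation \eqref{K3} --- but it is genuinely more complete. The paper's entire proof is the single sentence that the results are ``immediate consequences'' of those facts, which quietly ignores a hypothesis mismatch: Propositions \ref{Pr04} and \ref{Pr4} are stated only for \emph{positive} self-adjoint operators, whereas Proposition \ref{Pr6} assumes only self-adjointness. Your translation argument is exactly the missing bridge: setting $B=A-m(A)\IV$, which is positive by \eqref{K1}, computing $m(B)=0$, $M(B)=M(A)-m(A)$, and establishing the shift identity $\sigma_S(A)=\sigma_S(B)+m(A)$ via the observation that for real $\lambda$ one has $R_\lambda(T)=(T-\lambda\IV)^2$, so that $R_\lambda(B)=R_{\lambda+m(A)}(A)$, with Proposition \ref{Pr2} justifying the restriction to real $\lambda$ on both sides. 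This reduction is sound (real scalars commute with operators, so the algebra is unproblematic, and invertibility transfers because the two resolvent operators are literally equal), and it lets you legitimately invoke the positive-case results for $B$ and pull the conclusions back to $A$. Your two alternative finishes for \eqref{sp3} --- via \eqref{K3}, or via $r_S(A)=\|A\|$ from Proposition \ref{Pr3} together with compactness of the S-spectrum from Proposition \ref{Pr1} --- are both valid. In short: where the paper asserts, you prove; the spectral-shift lemma you isolate is precisely what a careful reader would need to add to the paper's one-line proof.
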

\begin{proof}
These results are immediate consequences of propositions (\ref{Pr4}), (\ref{Pr04}) and equation (\ref{K3}).
\end{proof}
\begin{proposition}\label{Pr7}
Let $A\in\B(V_\mathbb{H}^R)$ be a self-adjoint operator. Then,
\begin{enumerate}
\item[(a)] $A$ is positive if and only if $m(A)\geq 0$
\item[(b)] A is positive and $A^{-1}\in \mathcal{B}(V_{\mathbb{H}}^{R})$ if and only if $m(A)>0$.
\item[(c)]if $m(A)>0$, then $A^{-1}$ is a positive self adjoint operator, and $\min\sigma_S(A^{-1})=M(A)^{-1}$ and $\max\sigma_S(A^{-1})=m(A)^{-1}$
\end{enumerate}
\end{proposition}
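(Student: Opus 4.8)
The plan is to dispatch the three parts in sequence, relying on the order relations in Proposition \ref{Pr5} and on the localisation of the S-spectrum established in Propositions \ref{Pr04}, \ref{Pr4} and \ref{Pr6}.

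For (a) I would read the equivalence straight off (\ref{K1}). If $m(A)\geq 0$, then combining $0\cdot\IV\leq m(A)\IV$ (which holds because $m(A)\geq 0$) with the inequality $m(A)\IV\leq A$ supplied by (\ref{K1}) gives $A\geq 0$ by transitivity. Conversely, if $A\geq 0$, then $0$ lies in the set $\{\alpha\in\R\mid\alpha\IV\leq A\}$, whose maximum is $m(A)$, so $m(A)\geq 0$. (Equivalently, a direct normalisation argument shows $\langle A\phi\mid\phi\rangle\geq m(A)\|\phi\|^{2}$ for all $\phi$.)

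For (b) the forward implication is immediate: if $A$ is positive with $A^{-1}\in\B(V_\mathbb{H}^R)$, then $A\in GL(V_{\mathbb{H}}^{R})$ and Remark \ref{Re1} gives $m(A)>0$. For the converse I would assume $m(A)>0$; part (a) yields positivity, and (\ref{K1}) gives $\langle A\phi\mid\phi\rangle\geq m(A)\|\phi\|^{2}$, whence by Cauchy--Schwarz the lower bound $\|A\phi\|\geq m(A)\|\phi\|$. By Lemma \ref{L3} this makes $A$ injective, and by Proposition \ref{rcl} its range is closed. To obtain surjectivity I would pick $\psi\,\bot\,\mathrm{ran}(A)$; then $\langle A\psi\mid\psi\rangle=0$, which against $\langle A\psi\mid\psi\rangle\geq m(A)\|\psi\|^{2}$ forces $\psi=0$, so $\mathrm{ran}(A)=V_{\mathbb{H}}^{R}$. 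The same lower bound gives $\|A^{-1}\psi\|\leq m(A)^{-1}\|\psi\|$, so $A^{-1}\in\B(V_\mathbb{H}^R)$.

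For (c), $A^{-1}$ exists and is bounded by (b). Self-adjointness follows from $\langle A^{-1}\phi\mid\psi\rangle=\langle A^{-1}\phi\mid A A^{-1}\psi\rangle=\langle A^{\dagger}A^{-1}\phi\mid A^{-1}\psi\rangle=\langle\phi\mid A^{-1}\psi\rangle$ using $A^{\dagger}=A$; positivity follows by setting $\psi=A\phi$ and noting $\langle A^{-1}\psi\mid\psi\rangle=\langle\phi\mid A\phi\rangle=\overline{\langle A\phi\mid\phi\rangle}\geq 0$, since $A\geq 0$ renders this quantity real and nonnegative. The core of the argument is the spectral relation between $A$ and $A^{-1}$. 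By Remark \ref{Re1} and Proposition \ref{Pr04} one has $\sigma_S(A)\subset[m(A),M(A)]\subset(0,\infty)$, so every $\lambda\in\sigma_S(A)$ is a positive real and $R_{\lambda}(A)=(A-\lambda\IV)^{2}$. The key identity I would establish is
\[
R_{1/\lambda}(A^{-1})=\Bigl(A^{-1}-\tfrac{1}{\lambda}\IV\Bigr)^{2}=\frac{1}{\lambda^{2}}A^{-2}(A-\lambda\IV)^{2}=\frac{1}{\lambda^{2}}A^{-2}\,R_{\lambda}(A),
\]
which is valid because $\lambda$ is real (so scalar multiples act classically) and because $A$, $A^{-1}$ and $\IV$ mutually commute, allowing the two interior $A^{-1}$ factors to be collected into $A^{-2}$. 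As $\lambda^{-2}A^{-2}$ is invertible, $R_{\lambda}(A)$ is invertible if and only if $R_{1/\lambda}(A^{-1})$ is, giving $\lambda\in\sigma_S(A)\iff\lambda^{-1}\in\sigma_S(A^{-1})$. Applying Proposition \ref{Pr6} to the positive self-adjoint operator $A^{-1}$ identifies $\min\sigma_S(A^{-1})=m(A^{-1})$ and $\max\sigma_S(A^{-1})=M(A^{-1})$; since $m(A),M(A)\in\sigma_S(A)$ by Proposition \ref{Pr4}, the decreasing bijection $\lambda\mapsto\lambda^{-1}$ sends these endpoints to $M(A)^{-1}$ and $m(A)^{-1}$ respectively, which is the claim.

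The main obstacle will be the resolvent identity in (c): one must make sure that the non-commutative definition of operator scalar multiples in (\ref{act}) does not interfere. This is precisely why it is essential that $\sigma_S(A)\subset\R$, so that only real shifts $\lambda\IV$ occur, and that $A$, $A^{-1}$ and $\IV$ commute, so the rearrangement into $A^{-2}R_{\lambda}(A)$ is legitimate.
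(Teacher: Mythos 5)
Your proposal is correct, and at the crucial point it takes a genuinely different---and in fact more rigorous---route than the paper. Part (a), and the self-adjointness and positivity of $A^{-1}$ in (c), coincide with the paper's argument. In the converse of (b) the paper proceeds spectrally: $\lambda\geq m(A)>0$ for all $\lambda\in\sigma_S(A)$, so $0\notin\sigma_S(A)$ and $A$ is invertible, plus the same bound $\|A\phi\|\geq m(A)\|\phi\|$; you instead give the direct operator-theoretic argument (injectivity via Lemma \ref{L3}, closed range via Proposition \ref{rcl}, surjectivity via orthogonality), which is the very technique the paper uses in Proposition \ref{Prds}---both are fine, yours is slightly more self-contained. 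The real divergence is the spectral mapping in (c). The paper works with kernel vectors: from $M(A)\in\sigma_S(A)$ it asserts the existence of $\phi\neq 0$ with $R_{M(A)}(A)\phi=0$, multiplies by $(A^{-1})^{2}$, and reads off $M(A)^{-1}\in\sigma_S(A^{-1})$, and argues conversely for each $\mu\in\sigma_S(A^{-1})$. That step tacitly identifies ``not invertible'' with ``has nontrivial kernel,'' which is not justified for bounded operators on an infinite-dimensional space (an injective self-adjoint operator with dense range can still fail to have a bounded inverse). Your factorization $R_{1/\lambda}(A^{-1})=\lambda^{-2}A^{-2}R_{\lambda}(A)$, legitimate for real $\lambda\neq 0$ since $A$, $A^{-1}$ and $\IV$ commute and real scalars act classically, avoids this entirely: the two operators differ by the invertible factor $\lambda^{-2}A^{-2}$, so one is invertible exactly when the other is, giving the clean bijection $\sigma_S(A^{-1})=\{\lambda^{-1}:\lambda\in\sigma_S(A)\}$ once both spectra are known to be compact subsets of $(0,\infty)$ (Propositions \ref{Pr1}, \ref{Pr2}, \ref{PP1}, \ref{inv} applied to $A$ and to $A^{-1}$); the inversion map then exchanges the endpoints supplied by Propositions \ref{Pr4} and \ref{Pr6}. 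In short, your route buys rigor precisely where the paper's proof is weakest, at the cost of having to verify the commutation/factorization identity; the paper's route is computationally shorter but rests on an unproved eigenvector assumption.
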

\begin{proof}
For (a), suppose that $A$ is positive, then 
$\left\langle A\varphi|\varphi\right\rangle\geq0$ for all $\varphi\in V_{\mathbb{H}}^{R}.$
Thus $m(A)=\displaystyle\inf_{\|\varphi\|=1}\left\langle A\varphi|\varphi\right\rangle\geq 0.$ Conversely suppose that $m(A)\geq0$. Now let $\phi\in V_{\mathbb{H}}^{R}$ with $\phi\neq0$. If we take $\psi=\displaystyle\frac{\phi}{\left\|\phi\right\|}$, then $\|\psi\|=1$ and by the definition of $m(A)$, we get $\langle A\psi|\psi\rangle\geq 0$ and thereby $\langle A\phi|\phi\rangle\geq 0$. Hence $A$ is positive.For (b), suppose that $A$ is positive and invertible. Since $A$ is positive we have $m(A)\geq 0$. So it is enough to show that $m(A)>0$. On the contrary assume that $m(A)=0.$
Since $m(A)\in\sigma_{S}(A)$, $A^{2}-2m(A)A+m(A)^{2}I_{V_{\mathbb{H}}^{R}}$ is not invertible. That is, there exists $\varphi\neq0$ such that  $(A^{2}-2m(A)A+m(A)^{2}I_{V_{\mathbb{H}}^{R}})\varphi=0.$ Since $m(A)=0$, we have $A^2\varphi=0$. From the invertibility of $A$, it follows that
$\varphi=0$ which is a contradiction. Hence $m(A)>0$. 
Conversely suppose that $m(A)>0$. Therefore from (a), $A$ is positive. Now since $m(A)>0$, by (\ref{sp1}), it is easy to see that
$\lambda\geq m(A)>0$ for all $\lambda\in\sigma_{S}(A)$. Hence by the proposition (\ref{inv}), $A$ is invertible. Further, from (\ref{K1}) using Cauchy-Schwarz inequality, it directly follows that for all $\phi\in V_{\mathbb{H}}^{R}$, $\|A\phi\|\geq m(A)\|\phi\|$. This suffices to conclude that $A^{-1}\in\mathcal{B}(V_{\mathbb{H}}^{R})$.
For (c), suppose $m(A)>0$. By (b), $A$ is positive and invertible. Since $A$ is self-adjoint, clearly $A^{-1}$ is self-adjoint. From the existence of $A^{-1}$ we have that, for any given $\psi\in V_{\mathbb{H}}^{R}$, there exists $\phi\in V_{\mathbb{H}}^{R}$ such that $\phi=A^{-1}\psi.$ Then 
\begin{eqnarray*}
\langle A^{-1}\psi\mid \psi\rangle
&=&\langle \psi\mid (A^{-1})^{\dagger}\psi\rangle\\
&=&\langle \psi\mid A^{-1}\psi\rangle~~\mbox{~(as~}~~A^{-1}~~\mbox{~is self adjoint)}\\
&=&\langle A\phi\mid \phi\rangle\geq0~~\mbox{~(as~}~~A~~\mbox{~is positive)}.
\end{eqnarray*}
Thus $A^{-1}$ is positive. Since $M(A)\in \sigma_{S}(A)$,
$(A^{2}-2M(A)A+M(A)^{2}I_{V_{\mathbb{H}}^{R}})\phi=0,$
for some $\phi\neq0.$ 
Since $A$ is invertible, we get
$I_{V_{\mathbb{H}}^{R}}\phi-2M(A)A^{-1}\phi+M(A)^{2}(A^{-1})^{2}\phi=0.$
Hence
$$((A^{-1})^{2}-\frac{2}{M(A)} A^{-1}+\frac{1}{M(A)^{2}}I_{V_{\mathbb{H}}^{R}})\phi=0,$$
Therefore $\displaystyle M(A)^{-1}\in\sigma_{S}(A^{-1}).$
Hence
$M(A)^{-1}\geq\min\sigma_{S}(A^{-1}).$
On the other hand, let $\mu\in\sigma_{S}(A^{-1})$, then
$((A^{-1})^{2}-2\mu A^{-1}+\mu^{2}I_{V_{\mathbb{H}}^{R}})\phi=0,$
for some $\phi\neq0.$ Since $A^{-1}$ exists, we get
$(I_{V_{\mathbb{H}}^{R}}-2\mu A+\mu^{2}A^{2})\phi=0$
and
$\displaystyle(A^{2}-2\frac{1}{\mu}A+\frac{1}{\mu^{2}}I_{V_{\mathbb{H}}^{R}})\phi=0.$ 
Thus
$\displaystyle\frac{1}{\mu}\in\sigma_{S}(A).$
Hence from (\ref{sp2}), we have
$\frac{1}{\mu}\leq M(A).$
That is,
$M(A)^{-1}\leq \mu,$
for all $\mu\in\sigma_{S}(A^{-1})$. Thereby
$
M(A)^{-1}\leq\min\sigma_{S}(A^{-1})
$
and we conclude that
$M(A)^{-1}=\min\sigma_{S}(A^{-1}).
$
In a similar manner one can also prove that
$m(A)^{-1}=\max\sigma_{S}(A^{-1}).$
\end{proof}.
\begin{proposition}\label{Pr8}
Let (the class of unitary operators)  $$\mathcal{U}(V_{\mathbb{H}}^{R})\colon=\{U\colon V_{\mathbb{H}}^{R}\longrightarrow V_{\mathbb{H}}^{R}~\colon~UU^{\dagger}=U^{\dagger}U=I_{V_\mathbb{H}^R}\}.$$ Let $A\in GL(V_{\mathbb{H}}^{R})$, 
then $\sigma_{S}(A)=\sigma_{S}(UAU^{\dagger}),~~\mbox{~for all~}~~U\in\mathcal{U}(V_{\mathbb{H}}^{R}).$
\end{proposition}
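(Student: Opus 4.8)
The plan is to reduce the claim to the invariance of invertibility under unitary conjugation, exploiting the fact that the operator $R_\lambda$ involves only the real quantities $2\,\mathrm{Re}(\lambda)$ and $\abs{\lambda}^2$. First I would record that $UAU^\dagger\in GL(V_{\mathbb{H}}^{R})$ whenever $A\in GL(V_{\mathbb{H}}^{R})$ and $U\in\mathcal{U}(V_{\mathbb{H}}^{R})$, so that both S-spectra are well defined; this is immediate from $U^{-1}=U^\dagger$ together with the group structure of $GL(V_{\mathbb{H}}^{R})$.

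The core computation is the identity $R_\lambda(UAU^\dagger)=U\,R_\lambda(A)\,U^\dagger$ for every $\lambda\in\mathbb{H}$. To establish it I would expand $(UAU^\dagger)^2=UA(U^\dagger U)AU^\dagger=UA^2U^\dagger$ using $U^\dagger U=\IV$, write the constant term as $\IV=UU^\dagger$, and then factor. Because $2\,\mathrm{Re}(\lambda)$ and $\abs{\lambda}^2$ are real scalars, they are central and pass through the conjugation, giving
\begin{equation*}
R_\lambda(UAU^\dagger)=UA^2U^\dagger-2\,\mathrm{Re}(\lambda)\,UAU^\dagger+\abs{\lambda}^2\,UU^\dagger=U\bigl(A^2-2\,\mathrm{Re}(\lambda)A+\abs{\lambda}^2\IV\bigr)U^\dagger.
\end{equation*}

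Next I would argue that $R_\lambda(A)$ is invertible in $\B(V_{\mathbb{H}}^{R})$ if and only if $U\,R_\lambda(A)\,U^\dagger$ is: if $R_\lambda(A)^{-1}$ exists and is bounded, then $U\,R_\lambda(A)^{-1}\,U^\dagger$ is a bounded two-sided inverse of $U\,R_\lambda(A)\,U^\dagger$, and conversely conjugating back by $U^\dagger$ recovers a bounded inverse of $R_\lambda(A)$. Consequently $\lambda\in\sigma_S(A)$ exactly when $\lambda\in\sigma_S(UAU^\dagger)$, which yields the desired equality of the two sets.

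The argument is essentially formal, so I do not anticipate a genuine obstacle; the only point requiring care is verifying that the non-commutativity of $\mathbb{H}$ does not spoil the factorization of $R_\lambda(UAU^\dagger)$. This is precisely where the structure of $R_\lambda$ helps: since its only scalar coefficients are the real numbers $2\,\mathrm{Re}(\lambda)$ and $\abs{\lambda}^2$, they commute through $U$ and $U^\dagger$ without obstruction, so the conjugation identity holds verbatim as in the complex case and the proof goes through unchanged.
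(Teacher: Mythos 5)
Your proposal is correct and follows essentially the same route as the paper: the paper's proof also rests on the identity $R_{\lambda}(UAU^{\dagger})=UR_{\lambda}(A)U^{\dagger}$ and the observation that this conjugate is invertible in $\B(V_{\mathbb{H}}^{R})$ if and only if $R_{\lambda}(A)$ is. You merely spell out the expansion and the centrality of the real scalars $2\,\mathrm{Re}(\lambda)$ and $\abs{\lambda}^{2}$, which the paper compresses into ``one can easily obtain.''
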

\begin{proof}
One can easily obtain that $R_{\lambda}(UAU^{\dagger})=UR_{\lambda}(A)U^{\dagger}$ has an inverse in $\mathcal{B}(V_{\mathbb{H}}^{R})$ if and only if $R_{\lambda}(A)$ has such an inverse. Thus if $\lambda\notin\sigma_{S}(UAU^{\dagger})$ then $\lambda\notin\sigma_{S}(A)$, and vice versa. Hence
	$\sigma_{S}(A)=\sigma_{S}(UAU^{\dagger})$.
\end{proof}

\section{Discrete frames on a finite dimensional right quaternion Hilbert space}
In a finite dimensional quaternion Hilbert space the discrete frame theory does not deviate from its complex counterpart. In fact the non-commutativity of quaternions plays a very little part if one follows the lines of the complex theory presented in \cite{Ole}. However, in duplicating it one should be aware of the conventions of a right quaternion Hilbert space. An interested reader can consult \cite{KH}, where we have verified  this fact. In this regard, we write down the definition and move to the continuous theory.
\begin{definition}\label{D3}
Let $V_\mathbb{H}^R$ be a finite dimensional right quaternion Hilbert space. A countable family of elements $\left\{\phi_{k}\right\}_{k\in I}$ in $V^{R}_{\mathbb{H}}$ is a frame for $V^{R}_{\mathbb{H}}$ if there exist constants $A,B>0$ such that
\begin{equation}\label{eq5}
A\left\|\phi\right\|^{2}\leq\displaystyle\sum_{k\in I}\left|\left\langle \phi|\phi_{k}\right\rangle\right|^{2}\leq B\left\|\phi\right\|^{2},\quad~\text{for all}~ \phi\in V^{R}_{\mathbb{H}}.
\end{equation}
\end{definition} 
The numbers $A$ and $B$ are called frame bounds. They are not unique. The \textit{optimal lower frame bound }is the supremum over all lower frame bounds, and the \textit{optimal upper frame bound} is the infimum over all upper frame bounds.  A frame is said to be normalized if $\left\|\phi_{k}\right\|=1,~\mbox{~~\text{for all}~} k\in I.$ 
\section{Continuous frames in a right quaternion Hilbert space}
In this section we shall investigate the rank $n$ continuous frames on a right quaternion Hilbert space. In particular, we shall follow the lines of arguments presented in \cite{A1} for the complex theory. The difficulty in transferring the complex theory to quaternions arises, besides the non-commutativity of quaternions, in the use of the spectral theory of operators.  Since we did not have a viable spectral theory for quaternion operators, this task appeared difficult so far.  However, the recently studied S-spectral theory for quaternion operators is promising. Using the S-spectrum of a positive bounded self-adjoint right quaternion operator, we present  rank $n$ continuous frames in the following.
\begin{theorem}\label{T2}
For each $q\in \mathbb{H},$ let the set $\{\eta_{q}^{i}: i=1,2,\cdots,n\}$ be  linearly independent in $V_{\mathbb{H}}^{R}.$ We define an operator $A$ by 
\begin{equation}\label{E3}
\sum_{i=1}^{n}\int_{\mathbb{H}}\left|\eta_{q}^{i}\right\rangle\left\langle \eta_{q}^{i}\right|d\mu(q)=A
\end{equation}
and we always assume that $A\in GL( V_{\mathbb{H}}^{R}).$ Then the operator $A$ is positive and self adjoint.
\end{theorem}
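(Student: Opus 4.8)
The plan is to reduce both assertions to a single computation of the quadratic form $\langle A\phi\mid\phi\rangle$ and then to read off self-adjointness from Proposition \ref{SAD} and positivity from Definition \ref{Def1}. First I would fix the meaning of \eqref{E3}. Since $A\in GL(V_{\mathbb{H}}^{R})$ is assumed, $A$ is already a genuine bounded operator, so there is no separate existence or boundedness question to settle; the rank-$n$ integral is to be read weakly, meaning that for every $\phi\in V_{\mathbb{H}}^{R}$ one has $A\phi=\sum_{i=1}^{n}\int_{\mathbb{H}}\mid\eta_{q}^{i}\rangle\langle\eta_{q}^{i}\mid\phi\rangle\,d\mu(q)$, where the quaternionic scalar $\langle\eta_{q}^{i}\mid\phi\rangle$ multiplies the vector $\mid\eta_{q}^{i}\rangle$ on the right in accordance with the convention $\mid\phi\rangle q=\mid\phi q\rangle$.

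Next I would compute the quadratic form. Using additivity of $\langle\cdot\mid\cdot\rangle$ in its first argument (the conjugate of property (iii)) together with its continuity, I would pull the inner product through the finite sum and through the weak integral, so that
$$\langle A\phi\mid\phi\rangle=\sum_{i=1}^{n}\int_{\mathbb{H}}\langle\eta_{q}^{i}\langle\eta_{q}^{i}\mid\phi\rangle\mid\phi\rangle\,d\mu(q).$$
For each fixed $q$ and $i$, property (v) of the inner product applied with the scalar $\langle\eta_{q}^{i}\mid\phi\rangle$ turns the integrand into $\overline{\langle\eta_{q}^{i}\mid\phi\rangle}\,\langle\eta_{q}^{i}\mid\phi\rangle=\mid\langle\eta_{q}^{i}\mid\phi\rangle\mid^{2}$, hence
$$\langle A\phi\mid\phi\rangle=\sum_{i=1}^{n}\int_{\mathbb{H}}\mid\langle\eta_{q}^{i}\mid\phi\rangle\mid^{2}\,d\mu(q),$$
which is a finite sum of integrals of non-negative reals, so it lies in $\mathbb{R}$ and is $\geq 0$ for every $\phi$.

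From here the conclusion is immediate. Since $\langle A\phi\mid\phi\rangle\in\mathbb{R}$ for all $\phi\in V_{\mathbb{H}}^{R}$ and $A\in\B(V_{\mathbb{H}}^{R})$, Proposition \ref{SAD} yields that $A$ is self-adjoint. Since moreover $\langle A\phi\mid\phi\rangle\geq 0=\langle 0\cdot\phi\mid\phi\rangle$ for all $\phi$, Definition \ref{Def1} (now legitimately applicable, as $A$ has just been shown self-adjoint) gives $A\geq 0$, that is, $A$ is positive.

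The only real obstacle is the bookkeeping with the non-commuting scalars: one must move the inner product past the weak integral and then apply property (v) in the correct order, so that $\langle\eta_{q}^{i}\mid\phi\rangle$ meets its own conjugate on the correct side and combines into a modulus squared rather than an ill-ordered quaternionic product. Because $n$ is finite the sum poses no convergence subtlety, and because $A$ is assumed to lie in $GL(V_{\mathbb{H}}^{R})$ the weak integral already represents a bounded operator, so the argument requires no further analytic justification. An equivalent route, if one prefers, is to first check that each rank-one block $\mid\eta_{q}^{i}\rangle\langle\eta_{q}^{i}\mid$ is self-adjoint and positive by the same use of property (v), and then observe that these properties pass to the weak integral; the direct computation of $\langle A\phi\mid\phi\rangle$ above simply carries out both checks at once.
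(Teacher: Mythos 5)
Your proof is correct, and its computational core coincides with the paper's; the genuine difference is in how self-adjointness is extracted. The paper verifies the two-vector identity $\left\langle A\phi\mid\psi\right\rangle=\left\langle \phi\mid A\psi\right\rangle$ directly, by taking the dagger of the ket $A\left|\phi\right\rangle$ inside the sum and the (weakly interpreted) integral, and then carries out a \emph{separate} quadratic-form computation $\left\langle A\phi\mid\phi\right\rangle=\sum_{i=1}^{n}\int_{\mathbb{H}}\left|\left\langle \phi\mid\eta_{q}^{i}\right\rangle\right|^{2}d\mu(q)\geq 0$ to get positivity. You perform only the quadratic-form computation and then obtain self-adjointness from Proposition \ref{SAD} (reality of $\left\langle A\phi\mid\phi\right\rangle$ for all $\phi$ implies $A=A^{\dagger}$ for bounded operators), which is legitimate here since $A\in GL(V_{\mathbb{H}}^{R})\subset\B(V_\mathbb{H}^R)$ by hypothesis. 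Your route is more economical, one computation yielding both conclusions, and it is in fact exactly the argument the paper itself deploys later for the operator $A_{\eta}$ in Proposition \ref{RP03}(b); your handling of the non-commutative scalars via property (v), turning $\overline{\left\langle \eta_{q}^{i}\mid\phi\right\rangle}\left\langle \eta_{q}^{i}\mid\phi\right\rangle$ into $\left|\left\langle \eta_{q}^{i}\mid\phi\right\rangle\right|^{2}$, is the same bookkeeping the paper does in bra-ket form. What the paper's route buys is self-containedness: the direct verification of $\left\langle A\phi\mid\psi\right\rangle=\left\langle \phi\mid A\psi\right\rangle$ does not lean on Proposition \ref{SAD}, whose proof the paper only cites from the literature, and it avoids invoking a fact (reality of the quadratic form forces self-adjointness) that is special to complex and quaternionic scalars and fails over real Hilbert spaces; your proof, conversely, makes that dependence explicit and so is only as solid as Proposition \ref{SAD} itself. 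Either way, both arguments share the same implicit analytic assumption, namely that \eqref{E3} is read weakly so that inner products pass through the integral, which you state more carefully than the paper does.
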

\begin{proof}
Let $\phi,\psi\in V_\mathbb{H}^R$ then
$A(\left|\phi\right\rangle)=\displaystyle\sum_{i=1}^{n}\int_{\mathbb{H}}\left|\eta_{q}^{i}\right\rangle\left\langle \eta_{q}^{i}|\phi\right\rangle d\mu(q).$
Since $\left\langle A\phi|\psi\right\rangle=(A\left|\phi\right\rangle)^{\dagger}\left|\psi\right\rangle,$ we have
\begin{eqnarray*} 
\left\langle A\phi|\psi\right\rangle&=&
 \left(\sum_{i=1}^{n}\int_{\mathbb{H}}\left|\eta_{q}^{i}\right\rangle\left\langle \eta_{q}^{i}|\phi\right\rangle d\mu(q)\right)^{\dagger}\left|\psi\right\rangle\\
 &=&\left(\sum_{i=1}^{n}\int_{\mathbb{H}}\left\langle \phi|\eta_{q}^{i}\right\rangle \left\langle \eta_{q}^{i}\right| d\mu(q)\right)\left|\psi\right\rangle\\
 &=&\sum_{i=1}^{n}\int_{\mathbb{H}}\left\langle \phi|\eta_{q}^{i}\right\rangle \left\langle \eta_{q}^{i}|\psi\right\rangle d\mu(q)=\left\langle \phi|A\psi\right\rangle.
\end{eqnarray*}
Therefore, $A$ is  self adjoint. Since, for $\phi\in V_\mathbb{H}^R,$
\begin{eqnarray*}
 \left\langle A\phi|\phi\right\rangle&=&
\sum_{i=1}^{n}\int_{\mathbb{H}}\left\langle \phi|\eta_{q}^{i}\right\rangle\overline{\left\langle \phi|\eta_{q}^{i}\right\rangle} d\mu(q)
 =\sum_{i=1}^{n}\int_{\mathbb{H}}\left|\left\langle \phi|\eta_{q}^{i}\right\rangle\right|^{2} d\mu(q)
 \geq 0,
 \end{eqnarray*} 
$A$ is positive.
 \end{proof}
\begin{theorem}\label{T4}
For $\phi\in V_\mathbb{H}^R$, we have
\begin{equation}\label{E23}
m(A)\|\phi\|^2\leq\sum_{i=1}^{n}\int_{\mathbb{H}}|\langle\eta_{q}^i|\phi\rangle|^2 d\mu(q)\leq M(A)\|\phi\|^2.
\end{equation}
\end{theorem}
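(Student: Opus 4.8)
The plan is to recognize the middle expression in (\ref{E23}) as the quadratic form $\langle A\phi\mid\phi\rangle$ of the operator $A$ and then to bound it using the characterizations of $m(A)$ and $M(A)$ already established in Proposition \ref{Pr5}.

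First I would observe that, by property (i) of the inner product together with the fact that $|\overline{q}|=|q|$ for every quaternion $q$, one has $|\langle\eta_q^i\mid\phi\rangle|=|\langle\phi\mid\eta_q^i\rangle|$ for each $i$ and each $q$. Hence the middle term of (\ref{E23}) coincides with $\sum_{i=1}^{n}\int_{\mathbb{H}}|\langle\phi\mid\eta_q^i\rangle|^2\,d\mu(q)$, which is precisely the quantity computed in the proof of Theorem \ref{T2} to equal $\langle A\phi\mid\phi\rangle$. Since $A$ is self-adjoint, Proposition \ref{SAD} guarantees $\langle A\phi\mid\phi\rangle\in\mathbb{R}$, so all three members of (\ref{E23}) are real and the asserted inequalities are meaningful.

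Next I would invoke Proposition \ref{Pr5}: equations (\ref{K1}) and (\ref{K2}) yield the operator inequalities $m(A)\IV\leq A\leq M(A)\IV$. Evaluating these at the vector $\phi$ via Definition \ref{Def1} gives $\langle m(A)\IV\,\phi\mid\phi\rangle\leq\langle A\phi\mid\phi\rangle\leq\langle M(A)\IV\,\phi\mid\phi\rangle$; since $m(A)$ and $M(A)$ are real scalars and $\langle\phi\mid\phi\rangle=\|\phi\|^2$, this reads $m(A)\|\phi\|^2\leq\langle A\phi\mid\phi\rangle\leq M(A)\|\phi\|^2$. Combining this with the identification of the previous step completes the proof.

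There is no serious obstacle here, as the content is entirely packaged into the earlier results; the only mild points to verify are the conjugation-invariance of the modulus (so that the two forms of the summand agree) and the reality of the quadratic form. Alternatively, one could bypass Proposition \ref{Pr5} and argue directly from the definitions (\ref{E7}): normalize $\phi\neq 0$ to a unit vector $\psi=\phi/\|\phi\|$, use right linearity in the real scalar $1/\|\phi\|$ together with inner-product properties (iv) and (v) to obtain $\langle A\phi\mid\phi\rangle=\|\phi\|^2\langle A\psi\mid\psi\rangle$, and then apply the definitions of $m(A)$ and $M(A)$ as the infimum and supremum of $\langle A\psi\mid\psi\rangle$ over the unit sphere.
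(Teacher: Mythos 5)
Your proposal is correct and follows essentially the same route as the paper: identify the middle term of (\ref{E23}) with the quadratic form of $A$ as computed in Theorem \ref{T2}, then apply the operator inequalities $m(A)\IV\leq A\leq M(A)\IV$ from Proposition \ref{Pr5} and evaluate at $\phi$. Your additional remarks (conjugation-invariance of the modulus, reality of $\langle A\phi\mid\phi\rangle$ via Proposition \ref{SAD}) just make explicit small points the paper leaves implicit.
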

\begin{proof}
 To prove (\ref{E23}), as in Theorem \ref{T2},
 we have
\begin{equation}\label{E24}
 \left\langle \phi|A\phi\right\rangle=
\sum_{i=1}^{n}\int_{\mathbb{H}}\left|\left\langle \phi|\eta_{q}^{i}\right\rangle\right|^{2} d\mu(q).
 \end{equation} 
From proposition (\ref{Pr5}) we have
\begin{eqnarray*}
m(A)I_{V_\mathbb{H}^R}\leq A\leq M(A)I_{V_\mathbb{H}^R}.
\end{eqnarray*}
Therefore, from (\ref{E24}), we get
\begin{equation*}
m(A)\|\phi\|^2\leq\sum_{i=1}^{n}\int_{\mathbb{H}}|\langle\eta_{q}^i|\phi\rangle|^2 d\mu(q)\leq M(A)\|\phi\|^2.
\end{equation*}
\end{proof}
The inequality (\ref{E23}) presents the frame condition for the set of vectors
$$\{\eta^i_q\in V_\mathbb{H}^R~|~i=1,2,\cdots, n,~q\in \mathbb{H}\}$$
with frame bounds $m(A)$ and $M(A)$. Now let us define a rank $n$ frame.
\begin{definition}\label{CFD1}
A set of vectors $\{\eta^i_q\in V_\mathbb{H}^R~|~i=1,2,\cdots, n,~q\in \mathbb{H}\}$ constitute a rank $n$ right quaternionic continuous frame, denoted by $F(\eta^i_q, A, n)$, if
\begin{enumerate}
\item[(i)] for each $q\in \mathbb{H}$, the set of vectors $\{\eta^i_q\in V_\mathbb{H}^R~|~i=1,2,\cdots, n\}$ is a linearly independent set.
\item[(ii)]there exists a positive operator $A\in GL(V_\mathbb{H}^R)$ such that
$$\sum_{i=1}^{n}\int_{\mathbb{H}}|\eta^i_q\rangle\langle \eta^i_q|d\mu(q)=A.$$
\end{enumerate}
\end{definition}
\begin{theorem}\label{CFT1}
Let  $F(\eta^i_q, A, n)$ be as in definition (\ref{CFD1}). Define $\overline{\eta}^i_q=A^{-1}\eta_q^i;\quad i=1,2,...,n,~q\in \mathbb{H}$. Then $F(\overline{\eta}_q^i, A^{-1}, n)$ forms a rank $n$ quaternionic continuous frame and it is called the dual frame of  $F(\eta^i_q, A, n)$.
\end{theorem}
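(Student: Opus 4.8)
The plan is to verify the two defining conditions of Definition \ref{CFD1} for the family $\{\overline{\eta}^i_q\}$ and, in doing so, to identify its frame operator as $A^{-1}$. First I would record the properties of $A^{-1}$ that make everything run. Since $A\in GL(V_\mathbb{H}^R)$ is positive and self-adjoint, Remark \ref{Re1} gives $0<m(A),M(A)<\infty$, so Proposition \ref{Pr7}(c) applies and guarantees that $A^{-1}$ is again a positive self-adjoint operator; moreover $A^{-1}\in GL(V_\mathbb{H}^R)$ because $GL(V_\mathbb{H}^R)$ is a group and $(A^{-1})^{-1}=A$ is bounded. These facts already supply the positivity, self-adjointness and invertibility demanded of the frame operator in Definition \ref{CFD1}(ii); in particular $(A^{-1})^{\dagger}=A^{-1}$, which will be used repeatedly.

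Next I would dispose of condition (i), the linear independence of $\{\overline{\eta}^i_q:i=1,\dots,n\}$ for fixed $q$. Suppose $\sum_{i=1}^n \overline{\eta}^i_q c_i=0$ for scalars $c_i\in\mathbb{H}$. Since $A^{-1}$ is right linear, $\sum_i \overline{\eta}^i_q c_i=\sum_i (A^{-1}\eta^i_q)c_i=A^{-1}\left(\sum_i \eta^i_q c_i\right)$, and the injectivity of $A^{-1}$ (it has inverse $A$, cf. Lemma \ref{L3}) forces $\sum_i \eta^i_q c_i=0$; the linear independence of $\{\eta^i_q\}$ then yields $c_i=0$ for all $i$. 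Thus an invertible operator preserves linear independence and (i) holds.

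The heart of the argument is condition (ii), namely $\sum_{i=1}^n\int_{\mathbb{H}}|\overline{\eta}^i_q\rangle\langle\overline{\eta}^i_q|\,d\mu(q)=A^{-1}$. Formally, the rank-one convention (\ref{Rank}) gives $|\overline{\eta}^i_q\rangle\langle\overline{\eta}^i_q|=|A^{-1}\eta^i_q\rangle\langle A^{-1}\eta^i_q|=A^{-1}|\eta^i_q\rangle\langle\eta^i_q|A^{-1}$, using $(A^{-1})^{\dagger}=A^{-1}$, so pulling the constant operator $A^{-1}$ through the sum and integral would produce $A^{-1}\left(\sum_i\int|\eta^i_q\rangle\langle\eta^i_q|\,d\mu(q)\right)A^{-1}=A^{-1}AA^{-1}=A^{-1}$ by (\ref{E3}). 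Rather than justify this interchange at the operator level, I would verify the identity weakly, exactly in the style of the proof of Theorem \ref{T2}: for arbitrary $\phi,\psi\in V_\mathbb{H}^R$, self-adjointness of $A^{-1}$ gives $\langle\phi\mid\overline{\eta}^i_q\rangle=\langle A^{-1}\phi\mid\eta^i_q\rangle$ and $\langle\overline{\eta}^i_q\mid\psi\rangle=\langle\eta^i_q\mid A^{-1}\psi\rangle$, whence $\sum_i\int\langle\phi\mid\overline{\eta}^i_q\rangle\langle\overline{\eta}^i_q\mid\psi\rangle\,d\mu(q)$ reduces, via the weak form of the defining identity (\ref{E3}), to $\langle A^{-1}\phi\mid A\,A^{-1}\psi\rangle=\langle A^{-1}\phi\mid\psi\rangle=\langle\phi\mid A^{-1}\psi\rangle$. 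Since $\phi,\psi$ are arbitrary, the frame operator equals $A^{-1}$.

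The only genuine subtlety I anticipate is precisely this interchange of $A^{-1}$ with the operator-valued integral, which the weak computation sidesteps: for fixed $\phi,\psi$ the left side is an ordinary scalar integral that is recognized as the weak form of $A$ evaluated at $A^{-1}\phi$ and $A^{-1}\psi$, so no limiting or Fubini-type argument beyond the boundedness and self-adjointness of $A^{-1}$ is needed. Combining the three steps, the new family meets both conditions of Definition \ref{CFD1} with frame operator $A^{-1}\in GL(V_\mathbb{H}^R)$ positive and self-adjoint; hence $F(\overline{\eta}^i_q,A^{-1},n)$ is a rank $n$ quaternionic continuous frame, and the terminology \emph{dual frame} is justified.
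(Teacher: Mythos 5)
Your proposal is correct and follows essentially the same route as the paper: both arguments identify the frame operator of $\{\overline{\eta}_q^i\}$ as $A^{-1}AA^{-1}=A^{-1}$ via the rank-one convention (\ref{Rank}) together with the self-adjointness of $A^{-1}$ (your weak computation against arbitrary $\phi,\psi$ is a careful justification of what the paper simply declares ``clear''), and both lean on Proposition \ref{Pr7} for the required properties of $A^{-1}$. The differences are only in emphasis: you explicitly verify condition (i) of Definition \ref{CFD1} (linear independence of the dual vectors) and the membership $A^{-1}\in GL(V_\mathbb{H}^R)$, which the paper leaves implicit, while the paper instead records the explicit frame bounds $M(A)^{-1}\|\phi\|^2\leq\sum_{i=1}^{n}\int_{\mathbb{H}}|\langle\overline{\eta}_{q}^i|\phi\rangle|^2\,d\mu(q)\leq m(A)^{-1}\|\phi\|^2$ from the operator inequality $M(A)^{-1}I_{V_\mathbb{H}^R}\leq A^{-1}\leq m(A)^{-1}I_{V_\mathbb{H}^R}$ --- an omission on your part that is harmless, since Definition \ref{CFD1} requires only conditions (i) and (ii), and the bounds then follow automatically from Theorem \ref{T4} applied to the new frame.
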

\begin{proof}
From (\ref{Rank}) it is clear that
\begin{equation}\label{E25}
\sum_{i=1}^{n}\int_{\mathbb{H}}\left|\overline{\eta}_{q}^{i}\right\rangle\left\langle \overline{\eta}_{q}^{i}\right|d\mu(q)=A^{-1}.
 \end{equation} 
From proposition (\ref{Pr7}) we have
\begin{eqnarray*}
M(A)^{-1}I_{V_\mathbb{H}^R}\leq A^{-1}\leq m(A)^{-1}I_{V_\mathbb{H}^R},
\end{eqnarray*}
therefore we readily obtain the frame condition,
for $\phi\in V_\mathbb{H}^R$, 
\begin{equation}\label{E26}
M(A)^{-1}\|\phi\|^2\leq\sum_{i=1}^{n}\int_{\mathbb{H}}|\langle\overline{\eta}_{q}^i|\phi\rangle|^2 d\mu(q)\leq m(A)^{-1}\|\phi\|^2.
\end{equation}
\end{proof}
\begin{remark}\hfill
\begin{enumerate}
\item[$\bullet$]The quantity $\displaystyle\mathcal{W}(F(\eta^i_q, A, n))=\frac{M(A)-m(A)}{M(A)+m(A)}$ is called the width or snugness of the frame $F(\eta^i_q, A, n)$. Clearly $0\leq\mathcal{W}(F(\eta^i_q, A, n))<1$, and $\mathcal{W}(F(\eta^i_q, A, n))$, in a manner, measures the S-spectral width of the frame operator $A$.
\item[$\bullet$]A frame $F(\eta^i_q, A, n)$ is called tight if $\mathcal{W}(F(\eta^i_q, A, n))=0$, that is $A=M(A)I_{V_\mathbb{H}^R},$ and it is called self-dual if $A=I_{V_\mathbb{H}^R}$.
\item[$\bullet$]A frame $F(\eta^i_q, A, n)$ and its dual $F(\overline{\eta}_q^i, A^{-1}, n)$ have the same frame width.
\end{enumerate}
\end{remark}
Associated naturally to any frame $F(\eta^i_q, A, n)$ there is a self-dual tight frame. We present it next.
\begin{proposition}\label{CPr1}
Let $F(\eta^i_q, A, n)$ be a frame and let $\hat{\eta}_q^i=A^{-1/2}\eta_q^i$, then $F(\hat{\eta}^i_q, I_{V_\mathbb{H}^R}, n)$ is a tight frame and it is self-dual.
\end{proposition}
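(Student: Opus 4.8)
The plan is to show that the frame operator of the family $\{\hat{\eta}_q^i\}$ is exactly $\IV$, after which tightness and self-duality are immediate from the definitions. The whole argument rests on the operator $A^{-1/2}$, so first I would establish that it is a well-defined, bounded, self-adjoint and invertible operator. Since $A$ is the frame operator of $F(\eta_q^i,A,n)$ it is positive, self-adjoint and lies in $GL(V_{\mathbb{H}}^{R})$ by Definition \ref{CFD1} and Theorem \ref{T2}; hence Theorem \ref{IT1} produces the unique positive square root $A^{1/2}\in\B(V_\mathbb{H}^R)$, while Remark \ref{Re1} gives $0<m(A),M(A)<\infty$. Because $A=A^{1/2}A^{1/2}$ is invertible, $A^{1/2}$ is injective and its range contains $\mbox{ran}(A)=V_{\mathbb{H}}^{R}$, so it is a bijection whose inverse exists by Lemma \ref{L3}. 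The clean way to obtain boundedness is to note that $A^{1/2}$ commutes with $A$ and hence with $A^{-1}$, so that $A^{1/2}A^{-1}$ is a two-sided inverse of $A^{1/2}$; thus $A^{-1/2}:=(A^{1/2})^{-1}=A^{1/2}A^{-1}\in\B(V_\mathbb{H}^R)$, and it is self-adjoint as the inverse of a self-adjoint operator.

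With $A^{-1/2}$ in hand I would verify the two requirements of Definition \ref{CFD1} for $F(\hat{\eta}_q^i,\IV,n)$. For (i), fix $q$: since $A^{-1/2}$ is injective and right-linear it carries the linearly independent family $\{\eta_q^i:i=1,\dots,n\}$ to the linearly independent family $\{\hat{\eta}_q^i:i=1,\dots,n\}$. For (ii), I would compute the frame operator using the convention (\ref{Rank}): since $(A^{-1/2})^{\dagger}=A^{-1/2}$ we have
\[
\sum_{i=1}^{n}\int_{\mathbb{H}}\left|\hat{\eta}_q^i\right\rangle\left\langle\hat{\eta}_q^i\right|d\mu(q)
=A^{-1/2}\left(\sum_{i=1}^{n}\int_{\mathbb{H}}\left|\eta_q^i\right\rangle\left\langle\eta_q^i\right|d\mu(q)\right)A^{-1/2}
=A^{-1/2}AA^{-1/2}=\IV,
\]
where the bounded operator $A^{-1/2}$ is pulled through the integral exactly as in the proof of Theorem \ref{CFT1}, and the last equality uses $A=A^{1/2}A^{1/2}$ together with $A^{-1/2}A^{1/2}=\IV$. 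As $\IV$ is plainly positive and lies in $GL(V_{\mathbb{H}}^{R})$, the family $F(\hat{\eta}_q^i,\IV,n)$ is a genuine rank $n$ continuous frame.

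Finally I would read off the two claimed properties. Its frame operator being $\IV$ forces $m(\IV)=M(\IV)=1$, so the frame width $\frac{M(\IV)-m(\IV)}{M(\IV)+m(\IV)}$ vanishes and the frame is tight; moreover the frame operator equals the identity, which is precisely the definition of self-dual recorded in the remark preceding this proposition (equivalently, its dual vectors $\IV^{-1}\hat{\eta}_q^i=\hat{\eta}_q^i$ return the same family). The only genuinely technical point is the first paragraph, namely checking that $A^{-1/2}$ is bounded, self-adjoint and invertible; once that is secured, the frame-operator computation and the conclusions are purely formal. I do not expect the non-commutativity of quaternions to intervene, since every step respects the right-module structure and relies on the rank-one convention (\ref{Rank}), which already encodes the correct placement of scalars.
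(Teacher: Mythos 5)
Your proposal is correct and takes essentially the same route as the paper's proof: both rest on the single computation, via the convention (\ref{Rank}), that $\sum_{i=1}^{n}\int_{\mathbb{H}}|\hat{\eta}_{q}^{i}\rangle\langle \hat{\eta}_{q}^{i}|d\mu(q)=A^{-1/2}AA^{-1/2}=I_{V_\mathbb{H}^R}$, from which tightness and self-duality are read off immediately. Your preliminary paragraph establishing that $A^{-1/2}$ is bounded, self-adjoint and invertible (via Theorem \ref{IT1} and the factorization $A^{-1/2}=A^{1/2}A^{-1}$), and your check that linear independence is preserved, simply make explicit what the paper's one-line proof treats as clear.
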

\begin{proof}
From (\ref{Rank}) it is clear that
\begin{equation}\label{E27}
\sum_{i=1}^{n}\int_{\mathbb{H}}\left|\hat{\eta}_{q}^{i}\right\rangle\left\langle \hat{\eta}_{q}^{i}\right|d\mu(q)=I_{V_\mathbb{H}^R}
 \end{equation} 
and thereby, for $\phi\in V_\mathbb{H}^R$,
$\sum_{i=1}^{n}\int_{\mathbb{H}}|\langle\hat{\eta}_{q}^i|\phi\rangle|^2 d\mu(q)=\|\phi\|^2.$
\end{proof}
\begin{proposition}\label{T-frame}
Let $F(\eta^i_q, A, n)$ be a frame and $T\in GL(V_\mathbb{H}^R)$. If we take $\tilde{\eta}_q^i=T\eta_q^i$, then $F(\tilde{\eta}_q^i, TAT^{\dagger},n)$ is a frame.
\end{proposition}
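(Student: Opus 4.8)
The plan is to verify that the family $\{\tilde\eta_q^i=T\eta_q^i\}$ satisfies the two requirements of Definition \ref{CFD1} with frame operator $TAT^\dagger$. First I would dispose of condition (i). Fix $q\in\mathbb{H}$ and suppose $\sum_{i=1}^n\tilde\eta_q^i c_i=0$ for quaternions $c_1,\dots,c_n$. Since $T$ is right linear this reads $T\left(\sum_{i=1}^n\eta_q^i c_i\right)=0$, and because $T\in GL(V_\mathbb{H}^R)$ is injective (by Lemma \ref{L3}, as $T^{-1}$ exists), we get $\sum_{i=1}^n\eta_q^i c_i=0$. The linear independence of $\{\eta_q^i\}$ then forces $c_i=0$ for all $i$, so $\{\tilde\eta_q^i\}$ is linearly independent for each $q$.

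Next I would establish the frame operator identity. Using the convention (\ref{Rank}), each rank-one term transforms as $|T\eta_q^i\rangle\langle T\eta_q^i|=T|\eta_q^i\rangle\langle\eta_q^i|T^\dagger$. Since $T$ and $T^\dagger$ are bounded they may be pulled outside the sum and the integral, giving
$$\sum_{i=1}^n\int_{\mathbb{H}}|\tilde\eta_q^i\rangle\langle\tilde\eta_q^i|\,d\mu(q)=T\left(\sum_{i=1}^n\int_{\mathbb{H}}|\eta_q^i\rangle\langle\eta_q^i|\,d\mu(q)\right)T^\dagger=TAT^\dagger.$$

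It then remains to show that $TAT^\dagger$ is a positive element of $GL(V_\mathbb{H}^R)$. Self-adjointness is immediate: $(TAT^\dagger)^\dagger=TA^\dagger T^\dagger=TAT^\dagger$ because $A=A^\dagger$. For positivity, for any $\phi\in V_\mathbb{H}^R$ I would move $T$ across the inner product via (\ref{Ad1}) to obtain $\langle TAT^\dagger\phi\mid\phi\rangle=\langle AT^\dagger\phi\mid T^\dagger\phi\rangle\geq 0$, the last inequality holding because $A$ is positive; note that $\langle TAT^\dagger\phi\mid\phi\rangle\in\mathbb{R}$ by Proposition \ref{SAD}, so this comparison is meaningful. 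Finally, $T^\dagger\in GL(V_\mathbb{H}^R)$ since $(T^{-1})^\dagger$ is a bounded two-sided inverse of $T^\dagger$ (from $(T^{-1}T)^\dagger=(TT^{-1})^\dagger=I_{V_\mathbb{H}^R}$); as $GL(V_\mathbb{H}^R)$ is a group, the product $TAT^\dagger$ of the three invertible operators $T$, $A$, $T^\dagger$ lies in it.

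The only genuinely delicate point is the interchange of $T$ and $T^\dagger$ with the integral and the finite sum in the operator identity: this is exactly where the boundedness (equivalently continuity) of $T$ is used, and where one must respect the quaternionic convention (\ref{Rank}) rather than naively commuting scalars past operators. Everything else reduces to the group structure of $GL(V_\mathbb{H}^R)$ and the already-established characterizations of self-adjointness (Proposition \ref{SAD}) and positivity, so I expect no further obstacle.
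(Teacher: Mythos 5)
Your proposal is correct, and its core step is exactly the paper's: apply the convention (\ref{Rank}) to get $|T\eta_q^i\rangle\langle T\eta_q^i|=T|\eta_q^i\rangle\langle\eta_q^i|T^\dagger$ and pull the bounded operators $T$, $T^\dagger$ through the sum and integral to obtain the identity $\sum_{i=1}^n\int_{\mathbb{H}}|\tilde\eta_q^i\rangle\langle\tilde\eta_q^i|\,d\mu(q)=TAT^\dagger$. Where you diverge is in how the proof is finished: the paper disposes of the rest in one clause (``the frame condition follows from the frame condition of $A$ and the boundedness of $T$''), i.e.\ it implicitly appeals to the norm inequality of Theorem \ref{T4} transported by $T$, whereas you verify Definition \ref{CFD1} structurally --- linear independence of $\{\tilde\eta_q^i\}$ via injectivity of $T$, self-adjointness of $TAT^\dagger$, positivity via $\langle TAT^\dagger\phi\mid\phi\rangle=\langle AT^\dagger\phi\mid T^\dagger\phi\rangle\geq 0$, and membership in $GL(V_\mathbb{H}^R)$ via the group structure and $(T^\dagger)^{-1}=(T^{-1})^\dagger$. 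Your route is the more complete one relative to the paper's own definition of a frame, since Definition \ref{CFD1} demands precisely positivity, invertibility, and pointwise linear independence, none of which the paper's one-line conclusion actually checks; the paper's shortcut, by contrast, delivers the frame bounds $m$ and $M$ explicitly but leaves conditions (i) and (ii) of its own definition tacit. There is no gap in your argument.
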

\begin{proof}
From (\ref{Rank}) it is clear that
\begin{equation}\label{Ex28}
\sum_{i=1}^{n}\int_{\mathbb{H}}\left|\tilde{\eta}_{q}^{i}\right\rangle\left\langle \tilde{\eta}_{q}^{i}\right|d\mu(q)=TAT^{\dagger}.
 \end{equation} 
and the frame condition follows from the frame condition of $A$ and the boundedness of $T$.
\end{proof}
In the above proposition when $T$ is unitary we obtain an interesting new class of frames. We present it next.
\begin{theorem}\label{CFT2}
Let $U\in GL(V_\mathbb{H}^R)$ be a unitary operator and  $F(\eta^i_q, A, n)$ be a frame as in definition (\ref{CFD1}). Define $\tilde{\eta}_q^i=U\eta_q^i;~~i=1,2,...,n$ and $\tilde{A}=UAU^{\dagger}$, then $F(\tilde{\eta}_q^i, \tilde{A}, n)$ is a rank $n$ continuous frame. The frames $F(\eta^i_q, A, n)$ and $F(\tilde{\eta}_q^i, \tilde{A}, n)$ are said to be unitarily equivalent and  $\mathcal{W}(F(\eta^i_q, A, n))=\mathcal{W}(F(\tilde{\eta}_q^i, \tilde{A}, n)).$
\end{theorem}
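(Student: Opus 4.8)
The plan is to derive the frame claim as a direct specialization of Proposition \ref{T-frame}, and then to establish the equality of frame widths using the unitary invariance of the S-spectrum recorded in Proposition \ref{Pr8}.

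First I would observe that a unitary $U\in\mathcal{U}(V_\mathbb{H}^R)$ which lies in $GL(V_\mathbb{H}^R)$ is in particular an element of $GL(V_\mathbb{H}^R)$ with $U^{-1}=U^\dagger$. Hence taking $T=U$ in Proposition \ref{T-frame} immediately yields that $F(\tilde\eta_q^i,UAU^\dagger,n)$ satisfies the frame inequality. To confirm that it is a genuine rank $n$ frame in the sense of Definition \ref{CFD1}, I would verify the two defining requirements. For linear independence at each fixed $q$, since $U$ is right linear and injective, $\sum_{i=1}^n U\eta_q^i c_i = U\bigl(\sum_{i=1}^n \eta_q^i c_i\bigr)=0$ forces $\sum_{i=1}^n \eta_q^i c_i=0$, whence all $c_i=0$ by the independence of $\{\eta_q^i\}$. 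For the resolution identity, the rank-one convention (\ref{Rank}) gives $|U\eta_q^i\rangle\langle U\eta_q^i|=U|\eta_q^i\rangle\langle\eta_q^i|U^\dagger$, so that $\sum_{i=1}^n\int_{\mathbb{H}}|\tilde\eta_q^i\rangle\langle\tilde\eta_q^i|\,d\mu(q)=UAU^\dagger=\tilde A$. Moreover $\tilde A=UAU^\dagger$ is positive, self-adjoint and lies in $GL(V_\mathbb{H}^R)$: self-adjointness from $(UAU^\dagger)^\dagger=UA^\dagger U^\dagger=UAU^\dagger$, positivity from $\langle \tilde A\phi|\phi\rangle=\langle A(U^\dagger\phi)|U^\dagger\phi\rangle\geq 0$ (moving $U^\dagger$ across the inner product via (\ref{Ad1})), and invertibility from that of $U$ and $A$.

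For the width equality it suffices to show $m(\tilde A)=m(A)$ and $M(\tilde A)=M(A)$, since $\mathcal{W}$ is built only from these two numbers. Here I would invoke Proposition \ref{Pr6}, which identifies $m(\cdot)$ and $M(\cdot)$ of a bounded self-adjoint operator with $\min\sigma_S(\cdot)$ and $\max\sigma_S(\cdot)$ respectively; since both $A$ and $\tilde A$ are (positive) self-adjoint, this applies to each. Finally, Proposition \ref{Pr8} gives $\sigma_S(\tilde A)=\sigma_S(UAU^\dagger)=\sigma_S(A)$ (using $A\in GL(V_\mathbb{H}^R)$, which is exactly condition (ii) of Definition \ref{CFD1}), so the two S-spectra coincide and hence $m(\tilde A)=m(A)$ and $M(\tilde A)=M(A)$. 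Substituting into $\mathcal{W}=(M-m)/(M+m)$ yields the asserted equality $\mathcal{W}(F(\eta^i_q, A, n))=\mathcal{W}(F(\tilde\eta_q^i,\tilde A,n))$.

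The computations are all routine once the conventions are respected, so I do not expect a serious obstacle; the only points demanding care are the quaternionic adjoint bookkeeping in checking that $\tilde A$ is self-adjoint and positive, and confirming that the hypotheses of Propositions \ref{Pr6} and \ref{Pr8} are met. Everything else is inherited verbatim from the already-established Propositions \ref{T-frame}, \ref{Pr6} and \ref{Pr8}.
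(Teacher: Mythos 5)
Your proposal is correct and follows essentially the same route as the paper: the resolution identity via the convention (\ref{Rank}), the frame bounds via the unitary invariance of the S-spectrum in Proposition (\ref{Pr8}), and the width equality from $m(\tilde{A})=m(A)$, $M(\tilde{A})=M(A)$. The extra details you supply (linear independence of $\{U\eta_q^i\}$, self-adjointness, positivity and invertibility of $\tilde{A}$, and the explicit appeal to Proposition (\ref{Pr6})) are steps the paper leaves implicit or declares obvious, so your write-up is simply a more careful rendering of the same argument.
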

\begin{proof}
From (\ref{Rank}) it is clear that
\begin{equation}\label{E28}
\sum_{i=1}^{n}\int_{\mathbb{H}}\left|\tilde{\eta}_{q}^{i}\right\rangle\left\langle \tilde{\eta}_{q}^{i}\right|d\mu(q)=\tilde{A}.
 \end{equation} 
From proposition (\ref{Pr8}) we have
\begin{eqnarray*}
m(A)I_{V_\mathbb{H}^R}\leq \tilde{A}\leq M(A)I_{V_\mathbb{H}^R},
\end{eqnarray*}
therefore we readily obtain the frame condition,
for $\phi\in V_\mathbb{H}^R$, 
\begin{equation}\label{E29}
m(A)\|\phi\|^2\leq\sum_{i=1}^{n}\int_{\mathbb{H}}|\langle\tilde{\eta}_{q}^i|\phi\rangle|^2 d\mu(q)\leq M(A)\|\phi\|^2.
\end{equation}
The equality of the frame widths is obvious.
\end{proof}
In proposition (\ref{CPr1}) we obtained a self-dual tight frame $F(\hat{\eta}_q^i, I_{V_\mathbb{H}^R}, n)$ associated to the frame $F(\eta_q^i, A, n)$. Is proposition (\ref{CPr1}) the only way to obtain a self-dual tight frame from a frame $F(\eta_q^i, A, n)$? We answer this in the next proposition.
\begin{proposition}\label{CPr2}
Let $U\in GL(V_\mathbb{H}^R)$ be a unitary operator and $F(\eta^i_q, A, n)$ be a frame as in definition (\ref{CFD1}). Define $\hat{\eta}_q^i=UA^{-1/2}\eta_q^i;~~i=1,2,..,n$, then $F(\hat{\eta}_q^i, I_{V_\mathbb{H}^R}, n)$ is a self-dual tight frame.
\end{proposition}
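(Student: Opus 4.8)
The plan is to show that the new vectors $\hat{\eta}_q^i = UA^{-1/2}\eta_q^i$ produce the identity as their frame operator, exactly mirroring the proof of Proposition \ref{CPr1} but with the extra unitary factor $U$ inserted. First I would apply the convention (\ref{Rank}) to the rank-one operators $\abs{\hat{\eta}_q^i\rangle\langle\hat{\eta}_q^i}$. Writing $\hat{\eta}_q^i = (UA^{-1/2})\eta_q^i$, the convention (\ref{Rank}) gives
\begin{equation*}
\abs{\hat{\eta}_q^i\rangle\langle\hat{\eta}_q^i} = (UA^{-1/2})\,\abs{\eta_q^i\rangle\langle\eta_q^i}\,(UA^{-1/2})^{\dagger}.
\end{equation*}
Since $UA^{-1/2}$ does not depend on $q$, I can pull it outside the sum and integral, so that
\begin{equation*}
\sum_{i=1}^{n}\int_{\mathbb{H}}\abs{\hat{\eta}_q^i\rangle\langle\hat{\eta}_q^i}\,d\mu(q) = (UA^{-1/2})\left(\sum_{i=1}^{n}\int_{\mathbb{H}}\abs{\eta_q^i\rangle\langle\eta_q^i}\,d\mu(q)\right)(UA^{-1/2})^{\dagger}.
\end{equation*}

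Next I would substitute the defining relation of the frame $F(\eta_q^i, A, n)$, namely that the inner bracketed expression equals $A$, and simplify the adjoint. Because $A^{-1/2}$ is a positive self-adjoint operator by Theorem \ref{IT1} applied to $A^{-1}$ (which is positive and self-adjoint by Proposition \ref{Pr7}(c)), we have $(A^{-1/2})^{\dagger}=A^{-1/2}$, and $(UA^{-1/2})^{\dagger}=A^{-1/2}U^{\dagger}$. Thus the frame operator becomes $UA^{-1/2}\,A\,A^{-1/2}U^{\dagger}$. Using $A^{-1/2}AA^{-1/2}=A^{-1/2}A^{1/2}=I_{V_\mathbb{H}^R}$ (since all three commute as functions of the single self-adjoint operator $A$), this collapses to $UU^{\dagger}=I_{V_\mathbb{H}^R}$, because $U$ is unitary. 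Hence the frame operator of $\{\hat{\eta}_q^i\}$ is the identity, which by the remark following Theorem \ref{CFT1} makes the frame tight with $\mathcal{W}=0$ and self-dual.

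The one point requiring a little care is the algebraic identity $A^{-1/2}AA^{-1/2}=I_{V_\mathbb{H}^R}$: I must confirm that $A^{1/2}$, $A^{-1/2}$, and $A$ all commute so that the square roots cancel cleanly. Since $A^{1/2}$ is the unique positive square root of $A$ (Theorem \ref{IT1}), it commutes with $A$, and $A^{-1/2}=(A^{1/2})^{-1}$ inherits this commutativity; the noncommutativity of quaternions does not intrude here because these are operator identities obtained from the functional calculus of a single self-adjoint operator. I would also verify at the outset that the linear-independence condition (i) of Definition \ref{CFD1} persists under the invertible map $UA^{-1/2}\in GL(V_\mathbb{H}^R)$, which is immediate since an invertible operator carries a linearly independent set to a linearly independent set. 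I do not expect any genuine obstacle; the proof is a direct computation essentially identical to Proposition \ref{CPr1}, with $U$ absorbed harmlessly by its unitarity.
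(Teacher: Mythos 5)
Your proof is correct and takes essentially the same approach as the paper's: apply the convention (\ref{Rank}) to write the frame operator as $UA^{-1/2}A(UA^{-1/2})^{\dagger}$, then use the self-adjointness of $A^{-1/2}$ and the unitarity of $U$ to collapse it to $I_{V_\mathbb{H}^R}$. Your extra checks (the cancellation $A^{-1/2}AA^{-1/2}=I_{V_\mathbb{H}^R}$ via the functional calculus, and the preservation of linear independence under the invertible map $UA^{-1/2}$) are details the paper leaves implicit.
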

\begin{proof}
Since $U$ is unitary and $A^{-1/2}$ is self-adjoint, from (\ref{Rank}), we have
\begin{equation}\label{E30}
\sum_{i=1}^{n}\int_{\mathbb{H}}\left|\hat{\eta}_{q}^{i}\right\rangle\left\langle \hat{\eta}_{q}^{i}\right|d\mu(q)=UA^{-1/2}A(UA^{-1/2})^{\dagger}=UU^{\dagger}=I_{V_\mathbb{H}^R},
 \end{equation} 
and therefore  $F(\hat{\eta}_q^i, I_{V_\mathbb{H}^R}, n)$ is  self-dual and tight.
\end{proof}
Consider a frame $F(\eta_q^i, A, n)$. For each $q\in \mathbb{H}$ consider the positive operator
\begin{equation}\label{E31}
S(q)=\sum_{i=1}^n\mid\eta_q^i\rangle\langle\eta_q^i\mid
\end{equation}
then the frame operator is $\displaystyle\int_{\mathbb{H}}S(q)d\mu(q)=A.$ A natural question arises. Is there more than one set of linearly independent vectors for which (\ref{E31}) is satisfied? Indeed, for each $q\in \mathbb{H}$, the choice of linearly independent set $\{\eta_q^i\}_{i=1}^n$ is as large as $\mathcal{U}(n, \mathbb{H})$, the set of all $n\times n$ quaternionic unitary matrices.
\begin{theorem}\label{CFT3}
Let $F(\eta^i_q, A, n)$ be a frame and $S(q)$ is as in (\ref{E31}). Then $\{\xi_q^i\}_{i=1}^n$ is a linearly independent set of vectors for which 
\begin{equation}\label{E32}
\displaystyle S(q)=\sum_{i=1}^n\mid\xi_q^i\rangle\langle\xi_q^i\mid\end{equation}
 if and only if there exists a unitary matrix $u(q)=(u(q)_{ij})_{n\times n}\in\mathcal{U}(n,\mathbb{H})$ such that
\begin{equation}\label{E33}
\displaystyle\xi_q^i=\sum_{j=1}^n\eta_q^ju(q)_{ji};~~i=1,2,...,n.
\end{equation}
\end{theorem}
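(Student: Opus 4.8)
The plan is to prove both implications by converting the operator identity (\ref{E32}) into an identity of $n\times n$ quaternionic matrices, the algebraic engine being the conjugate-linearity of the bra, $\langle\phi q\mid=\overline{q}\langle\phi\mid$, together with the rank-one convention (\ref{Rank}). Throughout I fix $q$ and suppress it, writing $\eta^i$ for $\eta_q^i$ and $\xi^i$ for $\xi_q^i$.

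For the sufficiency direction I assume (\ref{E33}). Expanding the ket and the bra of $\xi^i$ gives $\mid\xi^i\rangle=\sum_j\mid\eta^j\rangle u(q)_{ji}$ and $\langle\xi^i\mid=\sum_k\overline{u(q)_{ki}}\,\langle\eta^k\mid$, so that
$$\sum_{i=1}^n\mid\xi^i\rangle\langle\xi^i\mid=\sum_{j,k}\mid\eta^j\rangle\Big(\sum_i u(q)_{ji}\,\overline{u(q)_{ki}}\Big)\langle\eta^k\mid=\sum_{j,k}\mid\eta^j\rangle(u(q)u(q)^\dagger)_{jk}\langle\eta^k\mid.$$
Since $u(q)$ is unitary, $(u(q)u(q)^\dagger)_{jk}=\delta_{jk}$ and the right-hand side collapses to $\sum_j\mid\eta^j\rangle\langle\eta^j\mid=S(q)$, which is (\ref{E32}). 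Linear independence of $\{\xi^i\}$ then follows from invertibility of $u(q)$: a right-linear relation $\sum_i\xi^i d_i=0$ forces $\sum_i u(q)_{ji}d_i=0$ for every $j$ by independence of $\{\eta^j\}$, hence $d=0$.

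For the necessity direction I would first record that $\mbox{ran}(S(q))=\text{span}\{\eta^i\}=\text{span}\{\xi^i\}=:V$, an $n$-dimensional subspace having each of the two sets as a basis. Indeed $S(q)\phi=\sum_i\eta^i\langle\eta^i\mid\phi\rangle$ lies in $\text{span}\{\eta^i\}$, while $\langle\phi\mid S(q)\phi\rangle=\sum_i\abs{\langle\eta^i\mid\phi\rangle}^2$ shows $S(q)$ is injective on $\text{span}\{\eta^i\}$, hence surjective onto it; the same for $\{\xi^i\}$. Because $\xi^i\in V=\text{span}\{\eta^j\}$ and $\{\eta^j\}$ is a basis, there are unique quaternions $c_{ji}$ with $\xi^i=\sum_j\eta^j c_{ji}$; set $u(q)=(c_{ji})$. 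Substituting into (\ref{E32}) and running the sufficiency computation in reverse yields
$$\sum_{j,k}\mid\eta^j\rangle(u(q)u(q)^\dagger)_{jk}\langle\eta^k\mid=\sum_{j,k}\mid\eta^j\rangle\delta_{jk}\langle\eta^k\mid.$$
Applying both operators to $\eta^\ell$ and using independence of $\{\eta^j\}$ to strip the left kets leaves $(u(q)u(q)^\dagger)G=G$ for the Gram matrix $G=(\langle\eta^j\mid\eta^k\rangle)$; since $G$ is positive definite, hence invertible, I conclude $u(q)u(q)^\dagger=I_n$, whence $u(q)^\dagger u(q)=I_n$ for a square matrix, so $u(q)\in\mathcal{U}(n,\mathbb{H})$ and (\ref{E33}) holds.

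The two bra-ket expansions are routine; the point requiring genuine care is the non-commutativity, which dictates that $u(q)_{ji}$ must sit to the right of $\mid\eta^j\rangle$ and $\overline{u(q)_{ki}}$ to the left of $\langle\eta^k\mid$, so that the middle factor is exactly $(u(q)u(q)^\dagger)_{jk}$ rather than its transpose or a mis-ordered product. The other delicate step is the invertibility of the Gram matrix $G$ and the right-cancellation in $(u(q)u(q)^\dagger)G=G$; I expect this invertibility -- equivalently the nondegeneracy forced by linear independence of $\{\eta^i\}$ -- to be the main structural obstacle, with everything else being formal manipulation.
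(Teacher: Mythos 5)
Your proposal is correct, and its skeleton matches the paper's: sufficiency by expanding $\sum_i|\xi_q^i\rangle\langle\xi_q^i|$ with the bra--ket convention and collapsing the middle factor $(u(q)u(q)^{\dagger})_{jk}$ to $\delta_{jk}$; necessity by observing that $\{\eta_q^i\}$ and $\{\xi_q^i\}$ are both bases of the same $n$-dimensional subspace (the paper phrases this via the projection $P(q)$ onto $\mathrm{ran}\,S(q)$, you via $\mathrm{ran}\,S(q)$ directly), so an invertible transition matrix $u(q)$ exists. Where you genuinely diverge is the final step, proving that this transition matrix is unitary. The paper deduces it from the identity $\sum_i|\langle\phi|\xi_q^i\rangle|^2=\sum_i|\langle\phi|\eta_q^i\rangle|^2$ for all $\phi$, i.e., from the fact that right multiplication by $u(q)$ preserves the $\ell^2$-norm of coefficient vectors; this silently uses that the coefficient map $\phi\mapsto(\langle\eta_q^j|\phi\rangle)_j$ hits all of $\mathbb{H}^n$ and that a norm-preserving quaternionic matrix is unitary (a polarization argument the paper never writes out). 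You instead strip the kets against the linearly independent $\eta_q^j$ to get the matrix identity $(u(q)u(q)^{\dagger})G=G$ with $G$ the Gram matrix, and invoke invertibility of $G$ plus the fact that one-sided inverses of square quaternionic matrices are two-sided. Your route is more algebraically self-contained and makes explicit exactly the two facts (positive-definiteness of $G$, right-cancellation) on which unitarity rests; it also supplies the linear independence of $\{\xi_q^i\}$ in the sufficiency direction, which the theorem statement requires but the paper's proof omits. Both finishing arguments are sound; the paper's is shorter once its implicit steps are granted, yours is the one a careful referee would not need to fill in.
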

\begin{proof}
From the bra-ket convention on a right quaternionic Hilbert space we have
$$|\eta_q^ju(q)_{ji}\rangle\langle\eta_q^ku(q)_{ki}|=|\eta_q^j\rangle u(q)_{ji}\overline{u(q)_{ki}}\langle\eta_q^k|$$
and from the unitarity of $u(q)$ we also have $\displaystyle\sum_{j=1}^nu(q)_{ij}\overline{u(q)_{kj}}=\delta_{ik}$. Therefore, if (\ref{E33}) holds true then we have (\ref{E32}). Conversely suppose that $\{\xi_q^i\}_{i=1}^n$ is a linearly independent set and (\ref{E32}) holds. Let $\phi\in V_\mathbb{H}^R$, then from (\ref{E32}) and (\ref{E31}) we have
$$\langle\phi|S(q)\phi\rangle=\sum_{i=1}^n|\langle\phi|\xi_q^i\rangle|^2
=\sum_{i=1}^n|\langle\phi|\eta_q^i\rangle|^2.$$
If we take $q_i'=\langle\xi_q^i|\phi\rangle\in \mathbb{H}$ and $q_i=\langle\eta_q^i|\phi\rangle\in \mathbb{H}$, then we can write
\begin{equation}\label{E031}
\sum_{i=1}^n|q_i'|^2=\sum_{i=1}^n|q_i|^2.
\end{equation}
Let $S(q)V_\mathbb{H}^R=\hat{\HI}_\mathbb{H}$ and $P(q):V_\mathbb{H}^R\longrightarrow\hat{\HI}_\mathbb{H}$ be the projection operator. Then $P(q)V_\mathbb{H}^R$ can be spanned by both set of vectors $\{\xi_q^i\}_{i=1}^n$ and $\{\eta_q^i\}_{i=1}^n$. That is, $\{\xi_q^i\}_{i=1}^n$ and $\{\eta_q^i\}_{i=1}^n$ are two different bases for $P(q)V_\mathbb{H}^R$. Therefore, there exists an invertible matrix $u(q)=(u(q)_{ij})_{n\times n}$ such that $\xi_q^i=\displaystyle\sum_{j=0}^n\eta_q^ju(q)_{ji}$. Thus, for $\phi\in V_\mathbb{H}^R$, we have
$$\langle\phi|\xi_q^i\rangle=\sum_{j=1}^n\langle\phi|\eta_q^j\rangle u(q)_{ji}=q'_i.$$
Therefore, by (\ref{E031}), $u(q)$ is unitary.
\end{proof}
\subsection{Frames and reproducing kernels}
In this subsection we shall investigate links between rank $n$ continuous frames and reproducing kernels. Using those links we shall also obtain equivalencies between various rank $n$ continuous frames.
Let $$\HI_\mathbb{H}=L^2_{\mathbb{H}^n}(\mathbb{H},\mu)=\left\{f:\mathbb{H}\longrightarrow \mathbb{H}^n~:~f(q)=(f_1(q),...,f_n(q)),~\sum_{i=1}^{n}\int_{\mathbb{H}}|f_i(q)|^2d\mu(q)<\infty\right\}$$
be a right quaternionic Hilbert space.
\begin{proposition}\label{RP1}
Let $V_\mathbb{H}^R$ be a separable Hilbert space and $F(\eta_q^i,A,n)$ be a rank $n$ frame. Then
\begin{equation}\label{RE1}
\W:V_\mathbb{H}^R\longrightarrow \HI_\mathbb{H}\quad\mbox{defined by}\quad(\W\phi)_i(q)=\langle\eta_q^i|\phi\rangle;\quad i=1,2,\cdots, n
\end{equation}
is a bounded linear map.
\end{proposition}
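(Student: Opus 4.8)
The plan is to unwind the definition of $\HI_\mathbb{H}$ and verify, in turn, the three things hidden in the phrase ``bounded linear map'': that $\W\phi$ genuinely lands in $\HI_\mathbb{H}$, that $\W$ respects the right-quaternionic linear structure, and that it admits a uniform norm bound. All three rest on the frame inequality already established in Theorem \ref{T4}, so almost no new work is needed; the whole proposition is essentially a reinterpretation of (\ref{E23}) as a statement about the operator $\W$.

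First I would settle well-definedness and boundedness simultaneously, since they are the same computation read two ways. For fixed $\phi\in V_\mathbb{H}^R$ the components of $\W\phi$ are $(\W\phi)_i(q)=\langle\eta_q^i\mid\phi\rangle$, so by the very definition of the norm on $\HI_\mathbb{H}$ we have $\|\W\phi\|^2=\sum_{i=1}^{n}\int_{\mathbb{H}}|\langle\eta_q^i\mid\phi\rangle|^2\,d\mu(q)$. The upper frame bound in (\ref{E23}) then gives $\|\W\phi\|^2\leq M(A)\|\phi\|^2$, where $M(A)<\infty$ because $A$ is bounded (indeed $0<m(A),M(A)<\infty$ by Remark \ref{Re1}). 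This single estimate shows both that $\W\phi\in\HI_\mathbb{H}$, so that $\W$ is a map into $\HI_\mathbb{H}$, and that $\|\W\phi\|\leq\sqrt{M(A)}\,\|\phi\|$, i.e. $\W$ is bounded with $\|\W\|\leq\sqrt{M(A)}$.

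For right linearity I would take $\phi,\psi\in V_\mathbb{H}^R$ and $p,r\in\mathbb{H}$ and compute componentwise. Property (iv) of the inner product yields $\langle\eta_q^i\mid\phi p+\psi r\rangle=\langle\eta_q^i\mid\phi\rangle p+\langle\eta_q^i\mid\psi\rangle r$ for every $i$ and every $q$, which says exactly that $(\W(\phi p+\psi r))_i(q)=\big((\W\phi)p+(\W\psi)r\big)_i(q)$; hence $\W(\phi p+\psi r)=(\W\phi)p+(\W\psi)r$ in $\HI_\mathbb{H}$.

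I do not expect a genuine obstacle. The only point demanding care is the quaternionic convention: one must use property (iv), in which the scalar sits on the \emph{right} of $\phi$ and passes out on the right, rather than property (v); this is what makes $\W$ right linear rather than conjugate-linear, and it is consistent with placing $\phi$ in the second slot of $\langle\eta_q^i\mid\phi\rangle$. Everything else is a direct appeal to Theorem \ref{T4} (which itself rests on Proposition \ref{Pr5}) and to the finiteness of $M(A)$ from Remark \ref{Re1}.
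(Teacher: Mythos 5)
Your proposal is correct and follows essentially the same route as the paper's proof: right linearity via property (iv) of the quaternionic inner product applied componentwise, and boundedness by identifying $\|\W\phi\|_{\HI_\mathbb{H}}^2$ with the middle term of the frame inequality (\ref{E23}) to get $\|\W\phi\|^2\leq M(A)\|\phi\|^2$. Your explicit remarks on well-definedness and on the finiteness of $M(A)$ are minor elaborations the paper leaves implicit, not a different argument.
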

\begin{proof}
For $\phi_1,\phi_2\in V_\mathbb{H}^R$ and $\alpha,\beta\in \mathbb{H}$, we have
$$\langle\eta_q^i|\phi_1\alpha+\phi_2\beta\rangle=\langle\eta_q^i|\phi_1\rangle\alpha+
\langle\eta_q^i|\phi_2\rangle\beta;\quad i=1,2,\cdots,n.$$
For $\phi\in V_\mathbb{H}^R$, from the frame condition, we have
$$\|\W\phi\|_{{\HI}_\mathbb{H}}^2=\sum_{i=1}^{n}\int_{\mathbb{H}}|\langle\eta_q^i|\phi\rangle|^2d\mu(q)\leq M(A)\|\phi\|^2.$$
Therefore $\W$ is linear and bounded.
\end{proof}
\begin{proposition}\label{RP2}
Let $\W:V_\mathbb{H}^R\longrightarrow{\HI}_\mathbb{H}$ be as in (\ref{RE1}). Then the image $\HI_{\eta}=\W(V_\mathbb{H}^R)$ is a closed subspace of ${\HI}_\mathbb{H}$ and therefore $\HI_{\eta}$ is itself a Hilbert space.
\end{proposition}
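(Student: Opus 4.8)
The plan is to deduce closedness of $\HI_\eta=\W(V_\mathbb{H}^R)$ from the fact that $\W$ is bounded below, i.e.\ that there is a constant $c>0$ with $\|\W\phi\|_{\HI_\mathbb{H}}\geq c\|\phi\|$ for every $\phi\in V_\mathbb{H}^R$. This is precisely the hypothesis of the closed-range criterion, Proposition \ref{rcl}, so once the lower bound is in hand the conclusion is essentially immediate; the final line then only needs the general fact that a closed subspace of a Hilbert space is again a Hilbert space.

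First I would establish the lower bound. By the definition of the norm on $\HI_\mathbb{H}$ and of $\W$ in (\ref{RE1}),
\begin{equation*}
\|\W\phi\|_{\HI_\mathbb{H}}^2=\sum_{i=1}^{n}\int_{\mathbb{H}}|\langle\eta_q^i|\phi\rangle|^2 d\mu(q)=\langle A\phi|\phi\rangle,
\end{equation*}
where the last equality is (\ref{E24}) together with $|\langle\eta_q^i|\phi\rangle|=|\langle\phi|\eta_q^i\rangle|$ and the self-adjointness of $A$. By the frame inequality (\ref{E23}) the right-hand side is at least $m(A)\|\phi\|^2$, and since $A\in GL(V_\mathbb{H}^R)$ is positive and self-adjoint, Remark \ref{Re1} gives $m(A)>0$. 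Hence $\|\W\phi\|_{\HI_\mathbb{H}}\geq\sqrt{m(A)}\,\|\phi\|$ with $\sqrt{m(A)}>0$.

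With the lower bound secured, I would invoke Proposition \ref{rcl} to conclude that $\mbox{ran}(\W)=\HI_\eta$ is closed in $\HI_\mathbb{H}$. Strictly speaking Proposition \ref{rcl} is stated for an operator of a right quaternion Hilbert space into itself, whereas $\W$ maps $V_\mathbb{H}^R$ into the \emph{different} space $\HI_\mathbb{H}$; the only point to check — and the main, though minor, obstacle — is that its proof transfers verbatim, since it uses only the completeness of the domain, the continuity of the operator, and the estimate $\|\phi_m-\phi_n\|\leq c^{-1}\|\W\phi_m-\W\phi_n\|$, none of which require the two spaces to coincide. Alternatively one simply reruns the Cauchy-sequence argument of Proposition \ref{rcl} directly for $\W$: if $\W\phi_n\to g$ in $\HI_\mathbb{H}$, the lower bound forces $\{\phi_n\}$ to be Cauchy, hence convergent to some $\phi\in V_\mathbb{H}^R$, and continuity of $\W$ yields $g=\W\phi\in\HI_\eta$. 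Finally, being a closed subspace of the Hilbert space $\HI_\mathbb{H}$, the space $\HI_\eta$ is complete in the inherited norm and carries the restriction of the inner product of $\HI_\mathbb{H}$, so it is itself a right quaternionic Hilbert space, which is the remaining assertion.
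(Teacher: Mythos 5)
Your proof is correct and follows essentially the same route as the paper: the lower bound $m(A)\|\phi\|^2\leq\|\W\phi\|_{\HI_\mathbb{H}}^2$ from Theorem \ref{T4}, followed by the closed-range criterion of Proposition \ref{rcl}. The details you add — that $m(A)>0$ via Remark \ref{Re1}, and that Proposition \ref{rcl} transfers to operators between two different spaces — are exactly the points the paper's terse proof leaves implicit, so they strengthen rather than alter the argument.
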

\begin{proof}
From the theorem (4.2), for each $\phi\in V_{\mathbb{H}}^{R}$, we have $$m(A)\|\phi\|^2\leq\|\W\phi\|_{\HI_\mathbb{H}}^2,$$ and hence from proposition (2.13) the result follows.
\end{proof}
Note that $\W^{-1}$ exists on the range $\HI_{\eta}$ of $\W$. An inner product and a norm on $\HI_{\eta}$ is given in the following proposition, (\ref{RP3}).
\begin{proposition}\label{RP03}
Let $\W$ be as in (\ref{RE1}), $F(\eta_q^i, A, n)$ be a rank $n$ continuous frame in $V_\mathbb{H}^R$ and $A_{\eta}=\W A\W^{-1}$. Then
\begin{itemize}
\item[(a)] $A=\W^{\dagger}\W$.
\item[(b)] $A_{\eta}$ is self-adjoint and positive.
\end{itemize}
\end{proposition}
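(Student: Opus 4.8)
The plan is to establish (a) by a direct computation of the $\HI_{\mathbb{H}}$-inner product of two images under $\W$, and then to deduce (b) from (a) together with the self-adjointness of $A$ already proved in Theorem \ref{T2}.

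First I would compute, for $\phi,\psi\in V_{\mathbb{H}}^R$, the quantity $\langle \W\psi\mid \W\phi\rangle_{\HI_{\mathbb{H}}}$. Unwinding the definition (\ref{RE1}) of $\W$ and the $L^2$-inner product on $\HI_{\mathbb{H}}$, this equals $\sum_{i=1}^n\int_{\mathbb{H}}\overline{\langle\eta_q^i\mid\psi\rangle}\,\langle\eta_q^i\mid\phi\rangle\,d\mu(q)$. Using property (i) of the inner product, $\overline{\langle\eta_q^i\mid\psi\rangle}=\langle\psi\mid\eta_q^i\rangle$, and then recognizing the resulting sum of integrals as exactly the expansion of $\langle\psi\mid A\phi\rangle$ coming from (\ref{E3}), I obtain $\langle\W\psi\mid\W\phi\rangle_{\HI_{\mathbb{H}}}=\langle\psi\mid A\phi\rangle$. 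The one point requiring care here is the order of quaternionic factors: property (iv), $\langle\psi\mid\eta_q^i c\rangle=\langle\psi\mid\eta_q^i\rangle c$, must be applied with $c=\langle\eta_q^i\mid\phi\rangle$ so that the scalar ends up on the correct side. To finish (a), I would invoke the definition (\ref{Ad1}) of the adjoint, which gives $\langle\W\psi\mid\W\phi\rangle_{\HI_{\mathbb{H}}}=\langle\W^{\dagger}\W\psi\mid\phi\rangle$. Combining this with the identity just derived yields $\langle\W^{\dagger}\W\psi\mid\phi\rangle=\langle\psi\mid A\phi\rangle=\langle A\psi\mid\phi\rangle$, the last step using $A=A^{\dagger}$ from Theorem \ref{T2}. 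Since this holds for all $\phi$, I conclude $\W^{\dagger}\W\psi=A\psi$ for all $\psi$, that is, $A=\W^{\dagger}\W$.

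For (b), I would use that every element of $\HI_{\eta}$ has the form $\W\phi$ and that $A_{\eta}\W\phi=\W A\W^{-1}\W\phi=\W A\phi$. Writing $f=\W\phi$, $g=\W\psi$ and applying (a) in the guise $\langle\W\alpha\mid\W\beta\rangle_{\HI_{\mathbb{H}}}=\langle A\alpha\mid\beta\rangle$, I compute $\langle A_{\eta}f\mid g\rangle_{\HI_{\mathbb{H}}}=\langle\W A\phi\mid\W\psi\rangle_{\HI_{\mathbb{H}}}=\langle A^2\phi\mid\psi\rangle$ and likewise $\langle f\mid A_{\eta}g\rangle_{\HI_{\mathbb{H}}}=\langle A\phi\mid A\psi\rangle=\langle A^2\phi\mid\psi\rangle$, where self-adjointness of $A$ (hence of $A^2$) is used repeatedly. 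Equality of these two expressions shows that $A_{\eta}$ is self-adjoint on $\HI_{\eta}$. Positivity then follows by taking $g=f$: $\langle A_{\eta}f\mid f\rangle_{\HI_{\mathbb{H}}}=\langle A^2\phi\mid\phi\rangle=\langle A\phi\mid A\phi\rangle=\|A\phi\|^2\geq 0$, so $A_{\eta}\geq 0$ in the sense of Definition \ref{Def1}. Boundedness of $A_{\eta}$, needed so that its adjoint is defined on all of $\HI_{\eta}$, comes from boundedness of $A$ together with the lower frame bound $m(A)\|\phi\|^2\le\|\W\phi\|_{\HI_{\mathbb{H}}}^2$ of Theorem \ref{T4}; since $m(A)>0$ by Remark \ref{Re1}, this makes $\W^{-1}$ bounded on the closed subspace $\HI_{\eta}$ furnished by Proposition \ref{RP2}.

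The main obstacle I anticipate is bookkeeping rather than conceptual: keeping the quaternionic scalars on the correct side throughout, since conjugation reverses products and the inner product is only real-bilinear, and being careful that $A_{\eta}$ and its adjoint are understood as operators on the closed subspace $\HI_{\eta}$ with the inherited inner product, where $\W^{-1}$ is available. Once identity (a) is in hand, part (b) reduces to substituting it into the defining inner-product relations and invoking $A=A^{\dagger}$.
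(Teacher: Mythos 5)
Your proof is correct. Part (a) is essentially the paper's own argument: both compute $\langle\W\phi\mid\W\psi\rangle_{\HI_{\mathbb{H}}}$ by unwinding the $L^2$ inner product, conjugating $(\W\phi)_i(q)=\langle\eta_q^i\mid\phi\rangle$, and recognizing $\langle\phi\mid A\psi\rangle$; the only cosmetic difference is which side you put the adjoint on. Part (b), however, goes by a genuinely different route. The paper invokes Proposition \ref{SAD} (the quaternionic criterion that a bounded operator is self-adjoint iff its quadratic form is real-valued) and makes a single computation,
\begin{equation*}
\langle A_{\eta}\Phi\mid\Phi\rangle_{\HI_{\mathbb{H}}}
=\langle A\W^{-1}\Phi\mid\W^{\dagger}\Phi\rangle
=\langle\W^{\dagger}\Phi\mid\W^{\dagger}\Phi\rangle
=\|\W^{\dagger}\Phi\|^{2},
\end{equation*}
which is real and non-negative, so self-adjointness and positivity drop out simultaneously. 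You instead verify the adjoint identity directly, pulling everything back to $V_{\mathbb{H}}^R$ and showing that both $\langle A_{\eta}f\mid g\rangle$ and $\langle f\mid A_{\eta}g\rangle$ equal $\langle A^{2}\phi\mid\psi\rangle$, then obtain positivity separately as $\|A\phi\|^{2}\geq 0$. Your version is slightly longer but more elementary (it needs only the definition of the adjoint, not Proposition \ref{SAD}), and it is more scrupulous on two points the paper glosses over: you keep $A_{\eta}$ and its adjoint confined to the closed subspace $\HI_{\eta}=\W(V_{\mathbb{H}}^R)$, where $\W^{-1}$ actually exists (the paper's proof writes ``for each $\Phi\in\mathfrak{H}_{\mathbb{H}}$,'' which is an imprecision), and you justify boundedness of $A_{\eta}$ via the lower frame bound and Remark \ref{Re1}, which the paper leaves implicit. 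The paper's approach buys brevity and showcases the quaternionic self-adjointness criterion; yours buys self-containedness and domain hygiene.
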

\begin{proof}
For any given $\phi,\psi\in V_{\mathbb{H}}^{R}$, we have 
\begin{eqnarray*}
\langle \W^{\dagger}\W\phi|\psi\rangle
&=&\langle \W\phi|\W\psi\rangle\\
&=&\sum_{i=1}^{n}\int_{\mathbb{H}}\overline{(\W\phi)_{i}(q)}(\W\psi)_{i}(q)\,d\mu(q)\\
&=&\sum_{i=1}^{n}\int_{\mathbb{H}}\langle\phi|\eta_q^i\rangle\langle\eta_q^i|\psi\rangle d\mu(q)\\
&=&\langle \phi|A\psi\rangle=\langle A\phi|\psi\rangle.
\end{eqnarray*}
Hence (a) follows. To prove the result (b), from the proposition (\ref{SAD}), it suffices to prove that, for each $\Phi\in \mathfrak{H}_\mathbb{H}$, $\langle A_{\eta}\Phi\mid\Phi\rangle_{\HI_\mathbb{H}}$ is real. Now for each $\Phi\in\mathfrak{H}_\mathbb{H}$,
\begin{eqnarray*}
\langle A_{\eta}\Phi|\Phi\rangle_{\HI_\mathbb{H}}
&=&\langle \W A\W^{-1}\Phi|\Phi\rangle_{\HI_\mathbb{H}}\\
&=&\langle A\W^{-1}\Phi|\W^{\dagger}\Phi\rangle\\
&=&\langle \W^{\dagger}\Phi|\W^{\dagger}\Phi\rangle
~~\mbox{~~as~~}~~A=\W^{\dagger}\W\\
&=& \|\W^{\dagger}\Phi\|^2\in\mathbb{R}.
\end{eqnarray*}
This concludes the proof.
\end{proof}
\begin{proposition}\label{RP3}
Let $\W$ be as in (\ref{RE1}), $F(\eta_q^i, A, n)$ be a rank $n$ continuous frame in $V_\mathbb{H}^R$ and $A_{\eta}=\W A\W^{-1}$. Then, for $\Phi,\Psi\in\HI_{\eta}$,
\begin{equation}\label{RE2}
\langle\Phi|\Psi\rangle_{\eta}=\langle\Phi|A_{\eta}^{-1}\Psi\rangle_{\HI_{\eta}}
\end{equation}
is an inner product on $\HI_{\eta}$ and if we take $\|\Phi\|_{\eta}=\langle\Phi|\Phi\rangle_{\eta}$, then $\|\cdot\|_{\eta}$ is a norm on $\HI_{\eta}$.
\end{proposition}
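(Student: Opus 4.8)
The plan is to verify directly that the pairing $\langle\Phi|\Psi\rangle_{\eta}=\langle\Phi|A_{\eta}^{-1}\Psi\rangle_{\HI_{\eta}}$ satisfies the five defining axioms (i)--(v) of a quaternionic inner product, and then to read off the norm claim, since any such inner product automatically induces a norm. The whole argument turns on the operator $A_{\eta}^{-1}$, so the first task is to make sense of it: because $A\in GL(V_{\mathbb{H}}^{R})$ and, by Propositions \ref{RP2} and \ref{RP03}, $\W$ is a right-linear bijection of $V_{\mathbb{H}}^{R}$ onto the Hilbert space $\HI_{\eta}$ which is bounded below, the operator $A_{\eta}=\W A\W^{-1}$ is an invertible right-linear operator on $\HI_{\eta}$ with $A_{\eta}^{-1}=\W A^{-1}\W^{-1}$. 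Right linearity of $A_{\eta}^{-1}$ follows from that of $A^{-1}$ and of $\W,\W^{-1}$.

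The key structural step is to show that $A_{\eta}^{-1}$ is self-adjoint and positive on $\HI_{\eta}$, mirroring the proof of Proposition \ref{RP03}(b). Writing an arbitrary $\Phi\in\HI_{\eta}$ as $\Phi=\W\phi$ with $\phi=\W^{-1}\Phi\in V_{\mathbb{H}}^{R}$, and using $A=\W^{\dagger}\W$ from Proposition \ref{RP03}(a) together with the self-adjointness of $A$, I would compute
\begin{equation*}
\langle A_{\eta}^{-1}\Phi|\Phi\rangle_{\HI_{\eta}}=\langle \W A^{-1}\phi|\W\phi\rangle_{\HI_{\eta}}=\langle A^{-1}\phi|\W^{\dagger}\W\phi\rangle=\langle A^{-1}\phi|A\phi\rangle=\langle\phi|\phi\rangle=\|\W^{-1}\Phi\|^{2}.
\end{equation*}
This quantity is real and nonnegative, so Proposition \ref{SAD} yields that $A_{\eta}^{-1}$ is self-adjoint, and Definition \ref{Def1} that it is positive.

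With this in hand the axioms fall out. Hermitian symmetry (i) follows from $\overline{\langle\Phi|A_{\eta}^{-1}\Psi\rangle_{\HI_{\eta}}}=\langle A_{\eta}^{-1}\Psi|\Phi\rangle_{\HI_{\eta}}=\langle\Psi|A_{\eta}^{-1}\Phi\rangle_{\HI_{\eta}}$, the last equality being the self-adjointness of $A_{\eta}^{-1}$. Additivity (iii) is immediate from additivity of $\langle\cdot|\cdot\rangle_{\HI_{\eta}}$; right linearity (iv) uses $A_{\eta}^{-1}(\Psi q)=(A_{\eta}^{-1}\Psi)q$ together with axiom (iv) on $\HI_{\eta}$; and the conjugate-linearity (v) in the first slot is inherited verbatim from axiom (v) on $\HI_{\eta}$, since $A_{\eta}^{-1}$ touches only the second slot. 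Finally, positive-definiteness (ii) is exactly the displayed computation specialized to $\Psi=\Phi$, giving $\langle\Phi|\Phi\rangle_{\eta}=\|\W^{-1}\Phi\|^{2}$, which is strictly positive whenever $\Phi\neq 0$ because $\W^{-1}$ is well defined (equivalently, $\W$ is injective, which holds since $m(A)>0$). Having established (i)--(v), the map $\Phi\mapsto\|\Phi\|_{\eta}$ is the canonical norm attached to an inner product, which finishes the argument.

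I expect the only genuinely delicate point to be the honest justification that $A_{\eta}^{-1}$ exists as a bounded, positive, self-adjoint operator on $\HI_{\eta}$ --- that is, carefully combining the invertibility of $A$ in $GL(V_{\mathbb{H}}^{R})$ with the fact, from Propositions \ref{RP2} and \ref{RP03}, that $\W$ is a bounded bijection onto $\HI_{\eta}$ bounded below. Everything after that is bookkeeping, though one must stay attentive to the non-commutativity when checking axioms (iv) and (v), keeping each quaternionic scalar on its correct side throughout.
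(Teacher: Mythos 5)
Your proposal is correct and takes essentially the same approach as the paper: conjugate symmetry is derived exactly as in the paper's proof, from the self-adjointness of $A_{\eta}^{-1}$ (which the paper imports from Proposition \ref{RP03} and you re-derive directly via $A=\W^{\dagger}\W$ and Proposition \ref{SAD}). The remaining axioms, which the paper dismisses as ``direct calculations,'' you verify explicitly; in particular your identity $\langle\Phi|\Phi\rangle_{\eta}=\|\W^{-1}\Phi\|^{2}$, which settles positive-definiteness, is precisely the isometry computation the paper records separately in Proposition \ref{Isem}.
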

\begin{proof}
Let $\Phi,\Psi\in\HI_{\eta}$,  then
	\begin{eqnarray*}
		\overline{\langle \Phi|\Psi \rangle_{\eta}}
		&=&\overline{\langle\Phi|A_{\eta}^{-1}\Psi\rangle_{\HI_\mathbb{H}}}
		=\langle A_{\eta}^{-1}\Psi|\Phi\rangle_{\HI_\mathbb{H}}
		=\langle \Psi|(A_{\eta}^{-1})^{\dagger}\Phi\rangle_{\HI_\mathbb{H}}\\
		&=&\langle \Psi|A_{\eta}^{-1}\Phi\rangle_{\HI_\mathbb{H}}~~\mbox{~~as~~}  A_{\eta}^{-1} ~~\mbox{~~is self-adjoint~~}\\
		&=&\langle \Psi|\Phi\rangle_{\eta}.     
	\end{eqnarray*}
Therefore, we have the conjugate symmetry. Linearity in the second argument and positive-definiteness can be easily verified from direct calculations. Hence the result holds. 
\end{proof}
\begin{proposition}\label{Isem}
Let $\W:V_\mathbb{H}^R\longrightarrow{\HI}_\mathbb{H}$ be as in (\ref{RE1}). Then $\W$ is an isometric embedding.
\end{proposition}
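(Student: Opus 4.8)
The plan is to read ``isometric embedding'' in the sense dictated by the preceding development: $\W$ should map $V_\mathbb{H}^R$ isometrically onto the subspace $\HI_{\eta}$ \emph{when the latter carries the inner product} $\langle\cdot\mid\cdot\rangle_{\eta}$ introduced in Proposition \ref{RP3}, and \emph{not} the ambient inner product of $\HI_\mathbb{H}$ (with respect to which $\W$ is only bounded below and above, by Theorem \ref{T4}). Concretely, I would establish the identity $\langle\W\phi\mid\W\psi\rangle_{\eta}=\langle\phi\mid\psi\rangle$ for all $\phi,\psi\in V_\mathbb{H}^R$, from which $\|\W\phi\|_{\eta}=\|\phi\|$ follows by taking $\psi=\phi$. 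Injectivity of $\W$ is already supplied by the lower frame bound $m(A)\|\phi\|^2\leq\|\W\phi\|_{\HI_\mathbb{H}}^2$ of Theorem \ref{T4} together with $m(A)>0$ (Remark \ref{Re1}), so once the isometry identity is in hand the embedding claim is immediate.

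First I would record the two structural inputs. Since $\W\colon V_\mathbb{H}^R\to\HI_{\eta}$ is a bijection onto its range (Proposition \ref{RP2}), the operator $A_{\eta}=\W A\W^{-1}$ is invertible on $\HI_{\eta}$ with
$$A_{\eta}^{-1}=\W A^{-1}\W^{-1},$$
as one checks from $(\W A\W^{-1})(\W A^{-1}\W^{-1})=\W A A^{-1}\W^{-1}=I_{\HI_{\eta}}$ and the symmetric identity on the other side. The second input is Proposition \ref{RP03}(a), namely $A=\W^{\dagger}\W$, which via the adjoint relation (\ref{Ad1}) is exactly the statement that $\langle\W\phi\mid\W\chi\rangle_{\HI_\mathbb{H}}=\langle\phi\mid A\chi\rangle$ for all $\phi,\chi\in V_\mathbb{H}^R$.

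The main computation then simply chains these facts, using the definition (\ref{RE2}) of $\langle\cdot\mid\cdot\rangle_{\eta}$, the formula $A_{\eta}^{-1}=\W A^{-1}\W^{-1}$ (so that $A_{\eta}^{-1}\W\psi=\W A^{-1}\psi$), and $A=\W^{\dagger}\W$:
\begin{eqnarray*}
\langle\W\phi\mid\W\psi\rangle_{\eta}
&=&\langle\W\phi\mid A_{\eta}^{-1}\W\psi\rangle_{\HI_\mathbb{H}}\\
&=&\langle\W\phi\mid\W A^{-1}\psi\rangle_{\HI_\mathbb{H}}\\
&=&\langle\phi\mid A A^{-1}\psi\rangle=\langle\phi\mid\psi\rangle.
\end{eqnarray*}
Setting $\psi=\phi$ gives $\|\W\phi\|_{\eta}^2=\|\phi\|^2$, which together with injectivity establishes that $\W$ is an isometric embedding.

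I expect no genuine obstacle here, only the bookkeeping that the quaternionic setting demands. One must keep track of \emph{which} space each operator acts on---$A_{\eta}$ and $A_{\eta}^{-1}$ live on $\HI_{\eta}$ while $A$ and $A^{-1}$ live on $V_\mathbb{H}^R$, and the adjoint identity transfers between them---and one must keep all quaternionic scalars on the correct side throughout, since $\langle\cdot\mid\cdot\rangle$ is only right linear (property (iv)--(v)). Both concerns are already controlled by Propositions \ref{RP03} and \ref{RP3}, so the argument requires no new estimate beyond the identities recalled above.
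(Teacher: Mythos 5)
Your proof is correct and follows essentially the same route as the paper: both use Theorem \ref{T4} (with $m(A)>0$) for injectivity, and both compute the $\eta$-norm of $\W\phi$ by unwinding the definition (\ref{RE2}) via the two identities $A_{\eta}^{-1}\W=\W A^{-1}$ and $A=\W^{\dagger}\W$, which is exactly the paper's chain $\|\W\phi\|_{\eta}^{2}=\langle\W\phi\mid A_{\eta}^{-1}\W\phi\rangle_{\HI_\mathbb{H}}=\langle\phi\mid\W^{\dagger}A_{\eta}^{-1}\W\phi\rangle=\|\phi\|^{2}$. Your version is marginally more explicit (stating $A_{\eta}^{-1}=\W A^{-1}\W^{-1}$ and proving the polarized identity $\langle\W\phi\mid\W\psi\rangle_{\eta}=\langle\phi\mid\psi\rangle$ before specializing $\psi=\phi$), but the substance is identical.
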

\begin{proof}
Theorem (\ref{T4}) provides sufficient condition to have the injectivity and continuity of $\W$.
Since, for each $\phi\in V_{\mathbb{H}}^{R}$, as $A=\W^{\dagger}\W,$ we have
\begin{eqnarray*}
\|\W\phi\|_{\eta}^{2}
&=&\langle\W\phi|A_{\eta}^{-1}\W\phi\rangle_{\HI_\mathbb{H}}
=\langle\phi|\W^{\dagger}A_{\eta}^{-1}\W\phi\rangle
=\langle\phi|\phi\rangle=\|\phi\|^2.
\end{eqnarray*}
Thus $\W$ is an isometry.
\end{proof}
\begin{remark}
By definition, the operator $A_{\eta}$ is an isomorphism.
\end{remark}
\begin{proposition}\label{RP4}
Let the assumptions of proposition (\ref{RP3}) hold, then
\begin{enumerate}
\item[(a)] $\|\cdot\|_{\eta}$ and $\|\cdot\|_{{\HI}_\mathbb{H}}$ are equivalent norms,
\item[(b)] $\HI_{\eta}$ is closed with respect to the norm $\|\cdot\|_{\eta}$,
\item[(c)] $(\HI_{\eta}, \|\cdot\|_{\eta})$ is a Hilbert space,
\end{enumerate}
\end{proposition}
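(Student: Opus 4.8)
The plan is to derive all three parts from the norm equivalence in (a), which itself rests on the fact that $A_{\eta}^{-1}$ is a bounded, positive, self-adjoint operator on $\HI_{\eta}$ whose lower spectral bound is strictly positive. First I would observe that $A_{\eta}=\W A\W^{-1}$ belongs to $GL(\HI_{\eta})$: indeed $A\in GL(V_\mathbb{H}^R)$, and $\W$ together with $\W^{-1}$ on $\HI_{\eta}$ is an isometric isomorphism (Proposition \ref{Isem}), so $A_{\eta}$ is bounded with bounded inverse $A_{\eta}^{-1}=\W A^{-1}\W^{-1}$. By Proposition \ref{RP03}(b), $A_{\eta}$ is positive and self-adjoint; hence by Proposition \ref{Pr7} its inverse $A_{\eta}^{-1}$ is again positive and self-adjoint, and by Remark \ref{Re1} (applied to $A_{\eta}^{-1}\in GL(\HI_{\eta})$) we get $0<m(A_{\eta}^{-1})$ and $M(A_{\eta}^{-1})<\infty$.

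For (a), I would start from the definition $\|\Phi\|_{\eta}^{2}=\langle\Phi\mid A_{\eta}^{-1}\Phi\rangle_{\HI_\mathbb{H}}$ of Proposition \ref{RP3}, and use self-adjointness of $A_{\eta}^{-1}$ together with the adjoint relation (\ref{Ad1}) to rewrite this as $\langle A_{\eta}^{-1}\Phi\mid\Phi\rangle_{\HI_\mathbb{H}}$. Applying Proposition \ref{Pr5} to the self-adjoint operator $A_{\eta}^{-1}$ on the Hilbert space $\HI_{\eta}$ gives $m(A_{\eta}^{-1})I\leq A_{\eta}^{-1}\leq M(A_{\eta}^{-1})I$, where $I$ is the identity on $\HI_{\eta}$, and therefore
$$m(A_{\eta}^{-1})\,\|\Phi\|_{\HI_\mathbb{H}}^{2}\leq\|\Phi\|_{\eta}^{2}\leq M(A_{\eta}^{-1})\,\|\Phi\|_{\HI_\mathbb{H}}^{2}\quad\text{for all }\Phi\in\HI_{\eta}.$$
Since $m(A_{\eta}^{-1})>0$ and $M(A_{\eta}^{-1})<\infty$, taking square roots produces two-sided positive constants, so $\|\cdot\|_{\eta}$ and $\|\cdot\|_{\HI_\mathbb{H}}$ are equivalent norms on $\HI_{\eta}$.

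Parts (b) and (c) then follow by routine topology. For (b), a $\|\cdot\|_{\eta}$-Cauchy sequence in $\HI_{\eta}$ is, by the equivalence in (a), also $\|\cdot\|_{\HI_\mathbb{H}}$-Cauchy; since $\HI_{\eta}$ is closed in the Hilbert space $\HI_\mathbb{H}$ (Proposition \ref{RP2}), it converges in $\|\cdot\|_{\HI_\mathbb{H}}$ to a limit in $\HI_{\eta}$, and equivalence transfers this convergence back to $\|\cdot\|_{\eta}$. Thus $\HI_{\eta}$ is complete, i.e. closed, with respect to $\|\cdot\|_{\eta}$. For (c), I would combine (b) with Proposition \ref{RP3}, which shows $\langle\cdot\mid\cdot\rangle_{\eta}$ is an inner product inducing $\|\cdot\|_{\eta}$: a complete inner-product space is by definition a Hilbert space, so $(\HI_{\eta},\|\cdot\|_{\eta})$ is a Hilbert space.

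The only step demanding genuine care is (a): one must confirm that $A_{\eta}^{-1}$ inherits positivity and self-adjointness and, crucially, that $m(A_{\eta}^{-1})>0$ so that the lower constant does not degenerate. This is precisely where the standing hypothesis $A\in GL(V_\mathbb{H}^R)$ (equivalently $m(A)>0$ via Proposition \ref{Pr7}) is indispensable; without invertibility the lower frame bound would vanish and the norms would fail to be equivalent. Once the two-sided estimate is established, (b) and (c) are purely formal consequences of passing between equivalent norms on a space already known to be complete.
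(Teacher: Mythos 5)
Your proposal is correct in substance, but it reaches part (a) by a genuinely different route than the paper, and one of your citations needs repair. The paper's proof is more elementary: the upper bound $\|\Phi\|_{\eta}^{2}\leq\|A_{\eta}^{-1}\|\,\|\Phi\|_{\HI_\mathbb{H}}^{2}$ comes straight from the Cauchy--Schwarz inequality, and the lower bound comes from writing $\Phi=A_{\eta}\Psi$ and applying Proposition \ref{LPr4}(b) to $A_{\eta}$, which yields $\|\Phi\|_{\HI_\mathbb{H}}^{2}\leq\|A_{\eta}\|\,\|\Phi\|_{\eta}^{2}$; no spectral facts about $A_{\eta}^{-1}$ are invoked. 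You instead run everything through the S-spectral machinery (Propositions \ref{Pr5} and \ref{Pr7}, Remark \ref{Re1}) applied to $A_{\eta}^{-1}$, obtaining $m(A_{\eta}^{-1})I\leq A_{\eta}^{-1}\leq M(A_{\eta}^{-1})I$ with $m(A_{\eta}^{-1})>0$. Since $m(A_{\eta}^{-1})=M(A_{\eta})^{-1}=\|A_{\eta}\|^{-1}$ and $M(A_{\eta}^{-1})=\|A_{\eta}^{-1}\|$ by Propositions \ref{Pr6} and \ref{Pr7}(c), you arrive at exactly the paper's constants; your route costs more machinery but makes their spectral meaning transparent, while the paper's argument is shorter and self-contained. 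Your handling of (b) and (c) coincides with the paper's, which simply cites (a) and Proposition \ref{RP2}, with you spelling out the routine Cauchy-sequence transfer.

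The step you must fix is the appeal to Proposition \ref{Isem}. That proposition asserts that $\W$ is an isometry with respect to $\|\cdot\|_{\eta}$, \emph{not} with respect to $\|\cdot\|_{\HI_\mathbb{H}}$, so it cannot be used to conclude that $A_{\eta}=\W A\W^{-1}$ and $A_{\eta}^{-1}=\W A^{-1}\W^{-1}$ are bounded on the Hilbert space $(\HI_{\eta},\|\cdot\|_{\HI_\mathbb{H}})$ --- and it is on this space, complete by Proposition \ref{RP2}, that Propositions \ref{Pr5}, \ref{Pr7} and Remark \ref{Re1} must be applied; applying them on $(\HI_{\eta},\|\cdot\|_{\eta})$ instead would be circular, since the completeness of that space is precisely part (c). The correct source is the frame inequality of Theorem \ref{T4} together with Remark \ref{Re1}, which give $\sqrt{m(A)}\,\|\phi\|\leq\|\W\phi\|_{\HI_\mathbb{H}}\leq\sqrt{M(A)}\,\|\phi\|$ with $m(A)>0$; hence $\W^{-1}$ is bounded from $(\HI_{\eta},\|\cdot\|_{\HI_\mathbb{H}})$ to $V_\mathbb{H}^R$, and $A_{\eta}^{\pm1}$ are bounded there, so $A_{\eta}\in GL(\HI_{\eta})$ as you need. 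With that substitution your argument is sound.
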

\begin{proof}
Let $\Phi\in \HI_{\eta}$. Then, by the Cauchy-Schwarz inequality, we have
\begin{equation}\label{neq1}
\|\Phi\|_{\eta}^{2}
=\langle\Phi|A_{\eta}^{-1}\Phi\rangle_{\HI_\mathbb{H}}\leq\|A_{\eta}^{-1}\|\,\|\Phi\|_{\HI_\mathbb{H}}^{2}.
\end{equation}
Since there is a $\Psi\in\HI_\mathbb{H}$ such that $\Phi=A_{\eta}\Psi$, by the proposition (\ref{LPr4}), we have
$$\|A_{\eta}\Psi\|_{\HI_\mathbb{H}}^{2}\leq\|A_{\eta}\|_{\HI_\mathbb{H}}\langle A_{\eta}\Psi|\Psi\rangle_{\HI_\mathbb{H}}.$$ This implies that
\begin{equation}\label{neq2}
\frac{1}{\|A_{\eta}\|_{\HI_\mathbb{H}}}\|\Phi\|_{\HI_\mathbb{H}}^{2}\leq\langle \Phi|A_{\eta}^{-1}\Phi\rangle_{\HI_{\mathbb{H}}}=\|\Phi\|_{\eta}^{2}.
\end{equation}
The inequalities (\ref{neq1}) and (\ref{neq2}) prove (a).
Part (a) together with the proposition (\ref{RP2}) proves (b) and (c). 
\end{proof}
\begin{proposition}\label{RP5}
Let $F(\eta_{q}^i, A, n)$ be a rank $n$ continuous frame on $V_\mathbb{H}^R$. Consider the map $K^{\eta}:\mathbb{H}\times \mathbb{H}\longrightarrow M_n(\mathbb{H})$, where $M_n(\mathbb{H})$ is the set of all $n\times n$ quaternionic matrices, defined by $K^{\eta}(p,q)=(K_{ij}^{\eta}(p,q))_{n\times n}$ with $K^{\eta}_{ij}(p,q)=\langle\eta_p^i|A^{-1}\eta_q^j\rangle;~~i,j=1,2,...,n.$ Then $K^{\eta}(p,q)$ is a reproducing kernel on $\HI_{\eta}$ and $\HI_{\eta}$ is the corresponding reproducing kernel Hilbert space.
\end{proposition}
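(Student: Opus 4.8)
The plan is to establish the reproducing relation in its integral form and to identify the kernel sections as genuine elements of $\HI_\eta$. First I would invoke the dual frame of Theorem \ref{CFT1}, $\overline{\eta}_q^j=A^{-1}\eta_q^j\in V_\mathbb{H}^R$, and note the identity $K^\eta_{ij}(p,q)=\langle\eta_p^i\mid A^{-1}\eta_q^j\rangle=(\W\overline{\eta}_q^j)_i(p)$. Hence, for fixed $q$ and $j$, the column $p\mapsto(K^\eta_{ij}(p,q))_{i=1}^n$ is exactly $\W(A^{-1}\eta_q^j)$, which belongs to $\HI_\eta=\W(V_\mathbb{H}^R)$ by Proposition \ref{RP2}; read in the second variable, self-adjointness of $A^{-1}$ together with axiom (i) gives $K^\eta_{ij}(p,\cdot)=\overline{(\W(A^{-1}\eta_p^i))_j}\in L^2(\mathbb{H},\mu)$, so that for $\Phi\in\HI_\eta$ the integrals below converge absolutely by Cauchy--Schwarz.

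Next I would verify the reproducing identity: writing an arbitrary $\Phi\in\HI_\eta$ as $\Phi=\W\phi$ with $\phi\in V_\mathbb{H}^R$ (legitimate since $\W$ is an isometric embedding, Proposition \ref{Isem}), so that $\Phi_j(q)=\langle\eta_q^j\mid\phi\rangle$, I claim
\begin{equation*}
\sum_{j=1}^n\int_{\mathbb{H}}K^\eta_{ij}(p,q)\,\Phi_j(q)\,d\mu(q)=\Phi_i(p).
\end{equation*}
The right-linearity axiom (iv) rewrites each summand as $\langle\eta_p^i\mid A^{-1}(\mid\eta_q^j\rangle\langle\eta_q^j\mid)\phi\rangle$; pulling the bounded operator $A^{-1}$ and the inner product through the sum and the $q$-integral lets the frame identity \eqref{E3}, $\sum_{j}\int_{\mathbb{H}}\mid\eta_q^j\rangle\langle\eta_q^j\mid d\mu(q)=A$, collapse the expression to $\langle\eta_p^i\mid A^{-1}A\phi\rangle=\langle\eta_p^i\mid\phi\rangle=\Phi_i(p)$. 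Equivalently this reads $\langle\W\overline{\eta}_p^i\mid\Phi\rangle_{\HI_\mathbb{H}}=\Phi_i(p)$, exhibiting $\W\overline{\eta}_p^i$ as the reproducing section attached to the pair $(p,i)$.

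Finally I would record the structural properties that certify $K^\eta$ as a reproducing kernel in the sense of \cite{A1}: Hermiticity $K^\eta(p,q)^\dagger=K^\eta(q,p)$, immediate from axiom (i) and the self-adjointness of $A^{-1}$; and idempotence $\int_{\mathbb{H}}K^\eta(p,r)K^\eta(r,q)\,d\mu(r)=K^\eta(p,q)$, obtained by applying the reproducing identity just proved to the section $\W\overline{\eta}_q^j\in\HI_\eta$. Combined with the boundedness of the point evaluations $\Phi\mapsto\Phi_i(p)$, which follows from the lower frame bound of Theorem \ref{T4} together with Proposition \ref{Isem}, this shows that $\HI_\eta$, equipped with the inner product inherited from $\HI_\mathbb{H}$, is the reproducing kernel Hilbert space with kernel $K^\eta$. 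I expect the main obstacle to be the interchange step: to justify moving $A^{-1}$ through the $\mathbb{H}$-integral so that the weak frame identity \eqref{E3} may be used inside the inner product, I would treat \eqref{E3} as a weak (sesquilinear) integral and use the $L^2$-membership of the kernel sections established above to license the Fubini-type exchange of the sum, the $q$-integration, and the inner-product pairing.
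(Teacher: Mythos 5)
Your proposal is correct, but it organizes the argument differently from the paper. The paper's proof is a terse direct verification of three kernel axioms: (a) $K^{\eta}_{ii}(q,q)>0$, (b) $K^{\eta}_{ij}(p,q)=\overline{K^{\eta}_{ji}(q,p)}$, and (c) the idempotence $\sum_{k}\int_{\mathbb{H}}K^{\eta}_{ik}(p,q)K^{\eta}_{kj}(q,r)\,d\mu(q)=K^{\eta}_{ij}(p,r)$, each obtained by inserting the weak frame identity $\sum_{j}\int_{\mathbb{H}}\left|\eta_q^j\right\rangle\left\langle\eta_q^j\right|d\mu(q)=A$ between the two factors of $A^{-1}$; the pointwise evaluation property is deferred to Proposition \ref{RP6}. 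You instead prove the evaluation identity $\sum_{j}\int_{\mathbb{H}}K^{\eta}_{ij}(p,q)\Phi_j(q)\,d\mu(q)=\Phi_i(p)$ for every $\Phi=\W\phi\in\HI_{\eta}$ first --- which is the same collapsing computation, with $\langle\eta_q^j|\phi\rangle$ in place of the second kernel factor --- and then recover the paper's (c) by specializing $\Phi$ to the kernel section $\W\overline{\eta}_r^j$. Your identification $K^{\eta}_{ij}(p,q)=(\W\overline{\eta}_q^j)_i(p)$, exhibiting the columns of the kernel as genuine elements of $\HI_{\eta}$, is a point the paper never makes explicit, and together with the boundedness of point evaluations it is what actually substantiates the second half of the statement (that $\HI_{\eta}$ \emph{is} the reproducing kernel Hilbert space of $K^{\eta}$); in that respect your write-up is more complete than the original. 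Two small remarks: you should still record the paper's property (a), the strict positivity $K^{\eta}_{ii}(q,q)=\langle\eta_q^i|A^{-1}\eta_q^i\rangle=\|A^{-1/2}\eta_q^i\|^{2}>0$, which is immediate since $\eta_q^i\neq 0$ (linear independence) and $A^{-1/2}$ is invertible, but which is part of the kernel definition used in \cite{A1}; and the Fubini-type interchange you flag as the main obstacle is exactly the same weak-integral step that the paper uses silently in (c), since equation (\ref{E3}) is only defined as a weak integral in the first place, so your explicit treatment of it loses nothing relative to the paper.
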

\begin{proof}
Using the definition of dual frame operator $A^{-1}$ (see theorem \ref{CFT1})), for all $p,q,r\in \mathbb{H},~i,j=1,2,...,n$, we can easily verify
\begin{enumerate}
\item[(a)] $K_{ii}^{\eta}(q,q)>0$,
\item[(b)] $K_{ij}^{\eta}(p,q)=\overline{K_{ji}^{\eta}(q,p)}$,
\item[(c)]$\displaystyle\sum_{k=1}^n\int_{\mathbb{H}}K_{ik}^{\eta}(p,q)K_{kj}^{\eta}(q,r)d\mu(q)=K_{ij}^{\eta}(p,r)$
\end{enumerate}
The property (c) is called the reproducing property of the kernel.
\end{proof}
Note that the reproducing property (c) of proposition (\ref{RP5}) of the kernel has the effect of acting as the evaluation map for any vector $\Phi\in\HI_{\eta}$.
\begin{proposition}\label{RP6}
The map $E_{\eta}^i(q):\HI_{\eta}\longrightarrow \mathbb{H}$ given by
\begin{equation}\label{RE3}
E_{\eta}^i(q)\Phi=\sum_{j=1}^n\int_{\mathbb{H}}K_{ij}^{\eta}(q,p)\Phi_j(p)d\mu(p)=\Phi_i(q)
\end{equation}
is linear and it is an evaluation map.
\end{proposition}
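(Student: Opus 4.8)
The plan is to exploit that every element of $\HI_{\eta}$ arises from applying $\W$ to a vector of $V_\mathbb{H}^R$, and then to collapse the kernel integral using the weak defining relation for the frame operator $A$. Since $\HI_{\eta}=\W(V_\mathbb{H}^R)$ and $\W$ is injective (proposition \ref{Isem}), any $\Phi\in\HI_{\eta}$ is uniquely of the form $\Phi=\W\phi$ for some $\phi\in V_\mathbb{H}^R$, so that by (\ref{RE1}) its components are $\Phi_j(p)=\langle\eta_p^j\mid\phi\rangle$. The right linearity of $E_{\eta}^i(q)$ is immediate from the integral definition: for $a\in\mathbb{H}$ one has $(\Phi a)_j(p)=\Phi_j(p)a$, and since $a$ is a constant quaternion it factors out of the kernel integral on the right, giving $E_{\eta}^i(q)(\Phi a)=(E_{\eta}^i(q)\Phi)a$; additivity is similar.

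For the evaluation property, first I would substitute the kernel $K_{ij}^{\eta}(q,p)=\langle\eta_q^i\mid A^{-1}\eta_p^j\rangle$ and the components $\Phi_j(p)$ into the integrand, obtaining the summand $\langle\eta_q^i\mid A^{-1}\eta_p^j\rangle\langle\eta_p^j\mid\phi\rangle$. Since $A$ is positive, self-adjoint and invertible (theorem \ref{T2} together with proposition \ref{Pr7}), $A^{-1}$ is self-adjoint, so this rewrites as $\langle A^{-1}\eta_q^i\mid\eta_p^j\rangle\langle\eta_p^j\mid\phi\rangle$. Using the bra-ket convention on a right quaternionic Hilbert space and property (iv) of the inner product, which keeps the scalar $\langle\eta_p^j\mid\phi\rangle$ on the right, this equals $\langle A^{-1}\eta_q^i\mid(\mid\eta_p^j\rangle\langle\eta_p^j\mid)\phi\rangle$.

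The key step is then to sum over $j$ and integrate over $p$. Setting $\chi=A^{-1}\eta_q^i$ and invoking the weak identity for the frame operator established in the proof of theorem \ref{T2}, namely $\langle\chi\mid A\phi\rangle=\sum_{j=1}^n\int_{\mathbb{H}}\langle\chi\mid\eta_p^j\rangle\langle\eta_p^j\mid\phi\rangle\,d\mu(p)$, the sum-integral collapses to $\langle A^{-1}\eta_q^i\mid A\phi\rangle$. A final application of the self-adjointness of $A^{-1}$ and the relation $A^{-1}A=I_{V_\mathbb{H}^R}$ yields $\langle\eta_q^i\mid\phi\rangle=\Phi_i(q)$, which is precisely $E_{\eta}^i(q)\Phi=\Phi_i(q)$, so that $E_\eta^i(q)$ is the claimed evaluation map.

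The main obstacle I anticipate is bookkeeping the non-commutativity: one must keep the scalar factors $\langle\eta_p^j\mid\phi\rangle$ strictly on the right throughout, so that the recognition of the integrand as $\langle\chi\mid(\mid\eta_p^j\rangle\langle\eta_p^j\mid)\phi\rangle$ is legitimate, and one must ensure that the interchange of the sum and integral with the inner product is exactly the weak definition of $A$ rather than an unjustified Fubini-type manipulation. Once these points are handled carefully, the remaining computation is routine.
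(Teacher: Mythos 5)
Your proposal is correct, and it in fact does more work than the paper's own proof. The paper disposes of Proposition \ref{RP6} in two sentences: linearity is declared straightforward, and the evaluation property is justified only by restating that $E_{\eta}^i(q)$ sends $\Phi$ to $\Phi_i(q)$; the displayed identity $\sum_{j=1}^n\int_{\mathbb{H}}K_{ij}^{\eta}(q,p)\Phi_j(p)\,d\mu(p)=\Phi_i(q)$ is never actually derived there (note that property (c) of Proposition \ref{RP5} is a kernel--kernel reproducing identity, not an evaluation statement for an arbitrary $\Phi\in\HI_{\eta}$). You instead prove the identity: writing $\Phi=\W\phi$, which is legitimate since $\HI_{\eta}=\W(V_\mathbb{H}^R)$ and $\W$ is injective by Proposition \ref{Isem}; substituting $K_{ij}^{\eta}(q,p)=\langle\eta_q^i\mid A^{-1}\eta_p^j\rangle$ and $\Phi_j(p)=\langle\eta_p^j\mid\phi\rangle$; moving $A^{-1}$ across the inner product by its self-adjointness (Theorem \ref{T2} plus Proposition \ref{Pr7}); and collapsing the sum-integral via the weak identity $\langle\chi\mid A\phi\rangle=\sum_{j}\int_{\mathbb{H}}\langle\chi\mid\eta_p^j\rangle\langle\eta_p^j\mid\phi\rangle\,d\mu(p)$ established in the proof of Theorem \ref{T2}, which yields $\langle A^{-1}\eta_q^i\mid A\phi\rangle=\langle\eta_q^i\mid\phi\rangle=\Phi_i(q)$. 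Your treatment of right linearity, keeping the quaternionic scalars strictly on the right so they factor out of the integral, is also sound and is exactly the care the non-commutative setting demands. What your route buys is that the substantive claim of the proposition is verified rather than asserted; what the paper's route buys is brevity, at the cost of leaving the key equality implicit.
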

\begin{proof}
For $\Phi,\Psi\in\HI_{\eta}$ and $\alpha,\beta\in \mathbb{H}$, it is straightforward that $E_{\eta}^i(q)(\Phi\alpha+\Psi\beta)=E_{\eta}^i(q)\Phi\alpha+E_{\eta}^i(q)\Psi\beta.$  It is an evaluation map because for a fixed $q\in\mathbb{H}$, the map $E_{\eta}^i(q)$ sends each vector $\Phi\in\mathfrak{H}_{\eta}$ to the value $\Phi_i(q)\in\mathbb{H}$.
\end{proof}
So far we have established the existence of a reproducing kernel corresponding to a quaternionic rank $n$ continuous frame $F(\eta_q^i,A,n)$ and the corresponding reproducing kernel Hilbert space $\HI_{\eta}$. Let us summerize it as follows.
\begin{definition}\label{RD1}
Let  $F(\eta_q^i,A,n)$ be a rank $n$ continuous frame in $V_\mathbb{H}^R$, then the $n\times n$ matrix valued function $K^{\eta}:\mathbb{H}\times \mathbb{H}\longrightarrow M_n(\mathbb{H})$ defined by
\begin{equation}\label{RE4}
K^{\eta}(p,q)=\left(K_{ij}^{\eta}(p,q)\right)_{n\times n}=\left(\langle\eta_p^i|A^{-1}\eta_q^j\rangle\right)_{n\times n}
\end{equation}
is a reproducing kernel corresponding to the given frame and it is called the frame kernel.
\end{definition}
Now we are in a position to establish various frame equivalencies using the associated reproducing kernels.
\begin{theorem}\label{RT1}
In $V_\mathbb{H}^R$, a frame and its dual frame have the same reproducing kernel.
\end{theorem}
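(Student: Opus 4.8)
The plan is to write down the frame kernel of the dual frame using Definition \ref{RD1} and reduce it to the kernel $K^{\eta}$ of the original frame by a short computation. Recall from Theorem \ref{CFT1} that the dual frame of $F(\eta_q^i,A,n)$ is $F(\overline{\eta}_q^i,A^{-1},n)$, where $\overline{\eta}_q^i=A^{-1}\eta_q^i$ and the associated frame operator is $A^{-1}$. Consequently, when one forms the frame kernel of the dual frame according to Definition \ref{RD1}, the inverse operator appearing in the defining formula is the inverse of the \emph{dual} frame operator, namely $(A^{-1})^{-1}=A$. Thus the $(i,j)$ entry of the dual frame kernel is $K^{\overline{\eta}}_{ij}(p,q)=\langle \overline{\eta}_p^i\mid A\,\overline{\eta}_q^j\rangle$, and the goal becomes showing $K^{\overline{\eta}}_{ij}(p,q)=K^{\eta}_{ij}(p,q)=\langle\eta_p^i\mid A^{-1}\eta_q^j\rangle$ for all $i,j$ and all $p,q\in\mathbb{H}$.

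First I would substitute $\overline{\eta}_p^i=A^{-1}\eta_p^i$ and $\overline{\eta}_q^j=A^{-1}\eta_q^j$ and use $AA^{-1}=I_{V_\mathbb{H}^R}$ to obtain
\begin{equation*}
K^{\overline{\eta}}_{ij}(p,q)=\langle A^{-1}\eta_p^i\mid A\,A^{-1}\eta_q^j\rangle=\langle A^{-1}\eta_p^i\mid \eta_q^j\rangle.
\end{equation*}
Next I would invoke the self-adjointness of $A^{-1}$, which holds by Proposition \ref{Pr7}(c) (since $A$ is positive and invertible, so is $A^{-1}$, with $(A^{-1})^{\dagger}=A^{-1}$), to move $A^{-1}$ across the inner product via the adjoint relation (\ref{Ad1}):
\begin{equation*}
\langle A^{-1}\eta_p^i\mid \eta_q^j\rangle=\langle \eta_p^i\mid A^{-1}\eta_q^j\rangle=K^{\eta}_{ij}(p,q).
\end{equation*}
Since this identity holds entrywise for all $i,j$ and all $p,q\in\mathbb{H}$, I would conclude $K^{\overline{\eta}}(p,q)=K^{\eta}(p,q)$, which is exactly the assertion of the theorem.

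The computation itself is routine; the only point demanding care is that every manipulation respects the right-quaternionic conventions, in particular that the single step transferring $A^{-1}$ from the left slot to the right slot of $\langle\cdot\mid\cdot\rangle$ is legitimate. This is precisely the self-adjointness of $A^{-1}$ guaranteed by Proposition \ref{Pr7}(c), so no genuine obstacle arises, and there is no need to re-examine the reproducing property itself. I would also remark that the symmetry of the statement is transparent from this argument, since the dual of the dual frame has frame operator $(A^{-1})^{-1}=A$ and thus returns the original frame together with its kernel $K^{\eta}$.
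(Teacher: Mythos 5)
Your proof is correct and is essentially the paper's own argument run in reverse: the paper inserts $AA^{-1}$ into $K^{\eta}_{ij}(p,q)=\langle\eta_p^i\mid A^{-1}\eta_q^j\rangle$ and moves $A$ across the inner product by self-adjointness of $A$, whereas you start from $K^{\overline{\eta}}_{ij}(p,q)=\langle\overline{\eta}_p^i\mid A\overline{\eta}_q^j\rangle$, cancel $AA^{-1}$, and move $A^{-1}$ across by self-adjointness of $A^{-1}$ (Proposition \ref{Pr7}(c)). Both hinge on the same two facts --- the dual frame's kernel uses $(A^{-1})^{-1}=A$ and the frame operator is self-adjoint --- so there is no substantive difference.
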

\begin{proof}
Let $F(\eta_q^i,A,n)$ be a frame in $V_\mathbb{H}^R$ and $F(\overline{\eta}_q^i,A^{-1},n)$ is its dual frame, that is $\overline{\eta}_{q}^i=A^{-1}\eta_q^i$. Then for all $p,q\in \mathbb{H},~i,j=1,2,...,n$, as $A$ is self-adjoint, we have
$$K_{ij}^{\eta}(p,q)=\langle\eta_p^i|A^{-1}\eta_q^j\rangle
=\langle AA^{-1}\eta_p^i|A^{-1}\eta_q^j\rangle=\langle A^{-1}\eta_p^i|AA^{-1}\eta_q^j\rangle
=\langle\bar{\eta}_p^i|A\bar{\eta}_q^j\rangle
=K_{ij}^{\bar{\eta}}(p,q).$$
Therefore $K^{\eta}=K^{\bar{\eta}}.$
\end{proof}
\begin{theorem}\label{RT2}
Let $F(\eta_q^i,A,n)$ be a frame in $V_\mathbb{H}^R$. Define $F(\xi_q^i, \tilde{A},n)$ with $$\displaystyle\xi_q^i=\sum_{j=1}^n\eta_q^ju(q)_{ji};~i=1,2,...,n,$$ where for $q\in \mathbb{H}, u(q)=(u(q)_{ij})_{n\times n}$ is a unitary matrix as in (\ref{E33}). Then
\begin{equation}\label{RE5}
K^{\xi}(p,q)=u(p)^{\dagger}K^{\eta}(p,q)u(q).
\end{equation}
\end{theorem}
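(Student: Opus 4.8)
The plan is to compute $K^{\xi}(p,q)$ directly from its definition in Definition \ref{RD1}, namely $K_{ij}^{\xi}(p,q)=\langle\xi_p^i\mid \tilde{A}^{-1}\xi_q^j\rangle$, and substitute the expression $\xi_q^i=\sum_{j=1}^n\eta_q^j u(q)_{ji}$ from \eqref{E33}. The first thing I would clarify is the frame operator $\tilde{A}$ associated to $\{\xi_q^i\}$. By Theorem \ref{CFT3}, the transformation $\xi_q^i=\sum_j\eta_q^j u(q)_{ji}$ with $u(q)$ unitary preserves the local operator $S(q)$ in \eqref{E31}, so $\sum_i\mid\xi_q^i\rangle\langle\xi_q^i\mid=\sum_i\mid\eta_q^i\rangle\langle\eta_q^i\mid=S(q)$ for each $q$. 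Integrating over $\mathbb{H}$ then gives $\tilde{A}=A$, and consequently $\tilde{A}^{-1}=A^{-1}$. This is the key structural observation: despite the change of frame vectors, the frame operator and its inverse are unchanged, so the kernel is computed with the \emph{same} $A^{-1}$ appearing in $K^{\eta}$.

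Next I would expand the matrix entry using the bra-ket conventions of the right quaternionic Hilbert space. Writing
\begin{equation}\label{planeq1}
K_{ij}^{\xi}(p,q)=\left\langle\sum_{k=1}^n\eta_p^k u(p)_{ki}\;\Big|\;A^{-1}\sum_{l=1}^n\eta_q^l u(q)_{lj}\right\rangle,
\end{equation}
I would pull the scalars out of the inner product according to properties (iv) and (v) in the definition of the inner product. Property (v) forces the left-hand scalar $u(p)_{ki}$ to exit as its conjugate $\overline{u(p)_{ki}}$ on the left, while property (iv) lets $u(q)_{lj}$ exit on the right unchanged; here one must use that $A^{-1}$ is right linear so that $A^{-1}(\eta_q^l u(q)_{lj})=(A^{-1}\eta_q^l)u(q)_{lj}$. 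This yields
\begin{equation}\label{planeq2}
K_{ij}^{\xi}(p,q)=\sum_{k,l=1}^n\overline{u(p)_{ki}}\,\langle\eta_p^k\mid A^{-1}\eta_q^l\rangle\,u(q)_{lj}=\sum_{k,l=1}^n\overline{u(p)_{ki}}\,K_{kl}^{\eta}(p,q)\,u(q)_{lj}.
\end{equation}
Recognizing $\overline{u(p)_{ki}}$ as the $(i,k)$ entry of $u(p)^{\dagger}$, the double sum is exactly the $(i,j)$ entry of the matrix product $u(p)^{\dagger}K^{\eta}(p,q)u(q)$, which is \eqref{RE5}.

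The main obstacle I anticipate is purely bookkeeping about the placement and conjugation of the quaternionic scalars $u(p)_{ki}$ and $u(q)_{lj}$, since quaternions do not commute and the inner product is conjugate-linear on the left. One must be careful that the conjugate appears on the bra side and that the right-linearity of $A^{-1}$ is invoked correctly so no scalar gets trapped on the wrong side of $A^{-1}$. Once the scalars are correctly extracted, matching the resulting double sum with the matrix product $u(p)^{\dagger}K^{\eta}(p,q)u(q)$ is immediate from the definition of the conjugate transpose of a quaternionic matrix, $\bigl(u(p)^{\dagger}\bigr)_{ik}=\overline{u(p)_{ki}}$. No deeper result is needed beyond $\tilde{A}=A$ from Theorem \ref{CFT3} and the elementary bra-ket rules.
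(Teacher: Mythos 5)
Your proposal is correct and follows essentially the same route as the paper: both invoke Theorem \ref{CFT3} to conclude $\tilde{A}=A$, then expand $K_{ij}^{\xi}(p,q)=\langle\xi_p^i\mid\tilde{A}^{-1}\xi_q^j\rangle$ via the right quaternionic bra-ket rules to obtain $\sum_{k,l}\overline{u(p)_{ki}}\,K_{kl}^{\eta}(p,q)\,u(q)_{lj}$, i.e.\ the $(i,j)$ entry of $u(p)^{\dagger}K^{\eta}(p,q)u(q)$. Your additional care about the right linearity of $A^{-1}$ and the placement of the noncommuting scalars is exactly the bookkeeping the paper's one-line computation leaves implicit.
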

\begin{proof}
From theorem (\ref{CFT3}) we have $\tilde{A}=A$. From the bra-ket convention of a right quaternionic Hilbert space, (\ref{RE5}) follows as
$$K_{ij}^{\xi}(p,q)=\langle\xi_p^i|\tilde{A}^{-1}\xi_q^j\rangle
=\sum_{k,l=1}^n\overline{u_{ki}(p)}\langle\eta_p^k|A^{-1}\eta_q^l\rangle u_{lj}(q)=
\sum_{k,l=1}^n\overline{u_{ki}(p)}K_{kl}^{\eta}(p,q) u_{lj}(q).$$
\end{proof}
\begin{definition}\label{RD2}
Two frames $F(\eta_q^i,A,n)$ and $F(\xi_q^i, \tilde{A},n)$ in $V_\mathbb{H}^R$ are said to be gauge equivalent if their reproducing kernels are related by (\ref{RE5}) and $\tilde{A}=A$.
\end{definition}
\begin{theorem}\label{RT3}
Let $F(\eta_q^i,A,n)$ and $F(\xi_q^i, \tilde{A},n)$ be gauge equivalent frames in $V_\mathbb{H}^R$. Then the reproducing kernel Hilbert space $\HI_{\xi}$ for the frame $F(\xi_q^i, \tilde{A},n)$ consists of vectors of the type
\begin{equation}\label{RE6}
\Psi(q)=u(q)^{\dagger}\Phi(q);~~q\in \mathbb{H},~~u(q)=(u(q)_{ij})_{n\times n} ~~{\text{ is a unitary matrix as in (\ref{E33}),}}
\end{equation}
and $\Phi\in\HI_{\eta}$, the reproducing kernel Hilbert space of the frame $F(\eta_q^i,A,n)$.
\end{theorem}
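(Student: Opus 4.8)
The plan is to identify $\HI_{\xi}$ concretely as the range of the analysis operator attached to the $\xi$-frame and then to rewrite that operator through the $\eta$-analysis operator $\W$. Writing $\mathcal{W}_{\xi}:V_\mathbb{H}^R\longrightarrow\HI_\mathbb{H}$ for the map $(\mathcal{W}_{\xi}\phi)_i(q)=\langle\xi_q^i|\phi\rangle$, I first note that $F(\xi_q^i,\tilde{A},n)$ is itself a rank $n$ continuous frame (indeed $\tilde{A}=A$ by theorem (\ref{CFT3})), so the arguments of propositions (\ref{RP1}) and (\ref{RP2}) apply verbatim and give $\HI_{\xi}=\mathcal{W}_{\xi}(V_\mathbb{H}^R)$ as a closed reproducing kernel subspace of $\HI_\mathbb{H}$. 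Thus it suffices to show the set equality $\mathcal{W}_{\xi}(V_\mathbb{H}^R)=\{\,u(\cdot)^{\dagger}\Phi(\cdot)~:~\Phi\in\HI_{\eta}\,\}$.

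The heart of the proof is a pointwise computation tying $\mathcal{W}_{\xi}$ to $\W$. Fixing $\phi\in V_\mathbb{H}^R$ and putting $\Phi=\W\phi$, so that $\Phi_j(q)=\langle\eta_q^j|\phi\rangle$, I would use the defining relation $\xi_q^i=\sum_{j=1}^n\eta_q^ju(q)_{ji}$ from (\ref{E33}) together with the right-quaternionic bra-ket convention $\langle\psi p|\phi\rangle=\overline{p}\langle\psi|\phi\rangle$ (property (v) of the inner product) to obtain
\begin{equation*}
(\mathcal{W}_{\xi}\phi)_i(q)=\langle\xi_q^i|\phi\rangle=\sum_{j=1}^n\overline{u(q)_{ji}}\,\langle\eta_q^j|\phi\rangle=\sum_{j=1}^n\overline{u(q)_{ji}}\,\Phi_j(q)=\left(u(q)^{\dagger}\Phi(q)\right)_i,
\end{equation*}
where the last equality uses $(u(q)^{\dagger})_{ij}=\overline{u(q)_{ji}}$. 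Hence $\mathcal{W}_{\xi}\phi=u(\cdot)^{\dagger}(\W\phi)$ as an element of $\HI_\mathbb{H}$.

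With this identity the set equality is immediate. If $\Psi\in\HI_{\xi}$, I choose $\phi$ with $\Psi=\mathcal{W}_{\xi}\phi$; then $\Phi:=\W\phi\in\HI_{\eta}$ and $\Psi(q)=u(q)^{\dagger}\Phi(q)$, which is exactly the form (\ref{RE6}). Conversely, given $\Phi\in\HI_{\eta}$, I write $\Phi=\W\phi$ and observe $u(\cdot)^{\dagger}\Phi=\mathcal{W}_{\xi}\phi\in\HI_{\xi}$; here the unitarity of $u(q)$ guarantees $u(\cdot)^{\dagger}\Phi\in\HI_\mathbb{H}$, since $\sum_i|(u(q)^{\dagger}\Phi(q))_i|^2=\sum_i|\Phi_i(q)|^2$ for each $q$ and square-integrability is preserved.

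The one genuine point requiring care, rather than the passage being purely formal, is the bookkeeping forced by non-commutativity: in expanding $\langle\xi_q^i|\phi\rangle$ the scalars $u(q)_{ji}$ sit to the left of the bra and must be conjugated and pulled out by property (v), and one must check that what emerges is the adjoint matrix $u(q)^{\dagger}$ acting on $\Phi(q)$ and not its transpose. I also expect the only auxiliary verifications to be that $\HI_{\xi}$ really is the range of $\mathcal{W}_{\xi}$ (so that propositions (\ref{RP1}) and (\ref{RP2}) may be reused) and that multiplication by $u(\cdot)^{\dagger}$ keeps vectors inside $\HI_\mathbb{H}$, both of which reduce to the unitarity relations $u(q)^{\dagger}u(q)=u(q)u(q)^{\dagger}=I$.
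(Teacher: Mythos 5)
Your proposal is correct and follows essentially the same route as the paper: both identify $\HI_{\xi}$ and $\HI_{\eta}$ as images of the analysis maps $\mathcal{W}_{\xi}$ and $\W$ on the common domain $V_\mathbb{H}^R$, and both rest on the same pointwise computation $(\mathcal{W}_{\xi}\phi)_i(q)=\langle\xi_q^i|\phi\rangle=\sum_{j=1}^n\overline{u(q)_{ji}}\,\Phi_j(q)=(u(q)^{\dagger}\Phi(q))_i$ via property (v) of the right quaternionic inner product. Your writeup is in fact slightly more careful than the paper's, since you make both inclusions and the square-integrability of $u(\cdot)^{\dagger}\Phi$ explicit, whereas the paper states only the forward direction and leaves the converse implicit.
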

\begin{proof}
From proposition (\ref{RP4}) we have isometries $\W:V_\mathbb{H}^R\longrightarrow \HI_{\eta}$ and $\mathcal{W}_{\xi}:V_\mathbb{H}^R\longrightarrow\HI_{\xi}$. Let $\Psi\in\HI_{\xi}$, then there exists $\phi\in V_\mathbb{H}^R$ such that $\mathcal{W}_{\xi}\phi=\Psi$. Further $\phi\in V_\mathbb{H}^R$ implies $\W\phi=\Phi\in\HI_{\eta}$, therefore
$$\Psi_i(q)=(\mathcal{W}_{\xi}\phi)_i(q)=\langle\xi_q^i|\phi\rangle
=\sum_{j=1}^n\overline{u(q)_{ji}}\langle\eta_q^i|\phi\rangle=\sum_{j=1}^n\overline{u(q)_{ji}}\Phi_j(q)
=(u(q)^{\dagger}\Phi(q))_i.$$
\end{proof}
\begin{theorem}\label{RT4}
Let $F(\tilde{\eta}_q^i, \tilde{A},n)$ and $F(\eta_q^i,A,n)$ be unitary equivalent frames in $V_\mathbb{H}^R$, then $K^{\eta}(p,q)=K^{\tilde{\eta}}(p,q)$ and $\HI_{\eta}=\HI_{\tilde{\eta}}$.
\end{theorem}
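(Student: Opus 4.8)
The plan is to reduce everything to the definitions of unitary equivalence (from Theorem \ref{CFT2}) and of the frame kernel (Definition \ref{RD1}), and then to track how the unitary $U$, with $\tilde\eta_q^i = U\eta_q^i$ and $\tilde A = UAU^\dagger$, passes through the inner products.

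First I would record the one algebraic fact that does the work: since $U$ is unitary, $U^{-1}=U^\dagger$, and hence
$$\tilde A^{-1} = (UAU^\dagger)^{-1} = (U^\dagger)^{-1}A^{-1}U^{-1} = UA^{-1}U^\dagger.$$
With this in hand I would substitute directly into the frame kernel of Definition \ref{RD1}:
$$K_{ij}^{\tilde\eta}(p,q) = \langle \tilde\eta_p^i \mid \tilde A^{-1}\tilde\eta_q^j\rangle = \langle U\eta_p^i \mid UA^{-1}U^\dagger U\eta_q^j\rangle.$$
Using $U^\dagger U = \IV$ in the right slot and the adjoint relation (\ref{Ad1}), in the form $\langle U\psi\mid\chi\rangle = \langle\psi\mid U^\dagger\chi\rangle$, to move the leading $U$ across, the two copies of $U$ cancel against $U^\dagger U$, leaving $\langle\eta_p^i\mid A^{-1}\eta_q^j\rangle = K_{ij}^\eta(p,q)$. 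This yields $K^{\tilde\eta}(p,q) = K^\eta(p,q)$ for all $p,q$ and all $i,j$. The only point requiring care is the non-commutativity: the cancellations must be carried out through the bra-ket adjoint convention rather than by naively commuting scalars, but since $U$ and $U^\dagger$ are genuine operators this is clean.

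For the equality of the reproducing kernel Hilbert spaces I would exhibit the relation between the two analysis maps. From (\ref{RE1}) and the same adjoint relation,
$$(\mathcal{W}_{\tilde\eta}\phi)_i(q) = \langle\tilde\eta_q^i\mid\phi\rangle = \langle U\eta_q^i\mid\phi\rangle = \langle\eta_q^i\mid U^\dagger\phi\rangle = (\W(U^\dagger\phi))_i(q),$$
so $\mathcal{W}_{\tilde\eta} = \W\circ U^\dagger$. Since $U^\dagger\in GL(V_\mathbb{H}^R)$ maps $V_\mathbb{H}^R$ bijectively onto itself, the images coincide:
$$\HI_{\tilde\eta} = \mathcal{W}_{\tilde\eta}(V_\mathbb{H}^R) = \W(U^\dagger V_\mathbb{H}^R) = \W(V_\mathbb{H}^R) = \HI_\eta.$$
To upgrade this from a set equality to an equality of Hilbert spaces I would use that both $\W$ and $\mathcal{W}_{\tilde\eta}$ are isometries (Proposition \ref{Isem}): given $\Phi$ in the common image, writing $\Phi = \W\phi$ forces $\Phi = \mathcal{W}_{\tilde\eta}(U\phi)$, and then $\|\Phi\|_{\tilde\eta} = \|U\phi\| = \|\phi\| = \|\Phi\|_\eta$ because $U$ preserves the norm; hence the two frame norms agree and, by polarization, so do the inner products. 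Alternatively one may simply invoke the uniqueness of the reproducing kernel Hilbert space attached to a given kernel, since the first part already shows $K^{\tilde\eta}=K^\eta$.

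I do not expect a genuine obstacle here: the argument is a substitution plus unitarity. The one place to stay alert is the bookkeeping in the non-commutative bra-ket calculus, in particular the identity $\tilde A^{-1} = UA^{-1}U^\dagger$ and the direction in which $U$ and $U^\dagger$ are moved across the inner product, together with the minor extra step needed to promote the set equality $\HI_{\tilde\eta} = \HI_\eta$ to an equality of Hilbert spaces.
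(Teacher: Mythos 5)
Your proposal is correct and follows essentially the same route as the paper: substitute $\tilde{\eta}_q^i=U\eta_q^i$ and $\tilde{A}^{-1}=UA^{-1}U^{\dagger}$ into the kernel and cancel via unitarity, then compare the analysis maps $\mathcal{W}_{\tilde{\eta}}$ and $\mathcal{W}_{\eta}$ (the paper proves the two inclusions $\HI_{\tilde{\eta}}\subseteq\HI_{\eta}$ and conversely, which is exactly your identity $\mathcal{W}_{\tilde{\eta}}=\mathcal{W}_{\eta}\circ U^{\dagger}$ with $U^{\dagger}$ bijective). Your extra remark on promoting the set equality to an equality of Hilbert-space structures is a small refinement the paper leaves implicit, not a different argument.
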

\begin{proof}
Since the frames are unitary equivalent, there is a unitary operator $U\in GL(V_\mathbb{H}^R)$ such that $\tilde{\eta}_q^i=U\eta_q^i$ and $\tilde{A}=UAU^{\dagger}$. Hence
$$K^{\tilde{\eta}}_{ij}(p,q)=\langle\tilde{\eta}_p^i|\tilde{A}^{-1}\tilde{\eta}_q^j\rangle
=\langle U\eta_p^i|UA^{-1}U^{\dagger}U\eta_q^j\rangle=\langle\eta_p^i|A^{-1}\eta_q^i\rangle=K^{\eta}_{ij}(p,q).$$
From the isometries $\W:V_\mathbb{H}^R\longrightarrow\HI_{\eta}$ and $\mathcal{W}_{\tilde{\eta}}:V_\mathbb{H}^R\longrightarrow \HI_{\tilde{\eta}}$, for $\Phi\in\HI_{\tilde{\eta}}$, for some $\phi\in V_\mathbb{H}^R$, we have
$$\Phi_i(q)=(\mathcal{W}_{\tilde{\eta}}\phi)_i(q)
=\langle\tilde{\eta}_q^i|\phi\rangle=\langle U\eta_q^i|\phi\rangle=\langle \eta_q^i|U^{\dagger}\phi\rangle=\Psi_i(q),$$
for some $\Psi\in\HI_{\eta}$ as $U^{\dagger}\phi\in V_\mathbb{H}^R$. Thus $\HI_{\tilde{\eta}}\subseteq\HI_{\eta}$ and the other inclusion follows similarly.
\end{proof}
\begin{definition}\label{RD3}
Two frames are said to be kernel equivalent in $V_\mathbb{H}^R$ if their kernels are gauge related in the sense of (\ref{RE5}).
\end{definition}
From the above definition gauge equivalent frames are kernel equivalent. However, the converse not necessarily be true. The following theorem somewhat answers the converse.
\begin{theorem}\label{RT5}
Let $T\in GL(V_\mathbb{H}^R)$ and $u(q)=(u(q)_{ij})_{n\times n}~~{\text{ be a quaternion unitary matrix}}$. For two frames $F(\tilde{\eta}_q^i, \tilde{A}, n)$ and $F(\eta_q^i,A, n)$ the relations
\begin{equation}\label{RE7}
\tilde{\eta}_q^i=\sum_{j=0}^n(T\eta_q^i)u(q)_{ji}~~~{\mbox and}~~\tilde{A}=TAT^{\dagger}
\end{equation}
 hold if and only if the frames are kernel equivalent.
\end{theorem}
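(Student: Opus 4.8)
The plan is to prove both implications after first absorbing the unitary matrix factor $u(q)$ into the $GL$-factor. I would introduce the gauge-transformed family $\xi_q^i=\sum_{j=1}^n\eta_q^j u(q)_{ji}$. By Theorem \ref{CFT3} the frame $F(\xi_q^i,A,n)$ has the \emph{same} frame operator $A$, and by Theorem \ref{RT2} its kernel is $K^{\xi}(p,q)=u(p)^{\dagger}K^{\eta}(p,q)u(q)$. Since $T\xi_q^i=\sum_j(T\eta_q^j)u(q)_{ji}$, the pair of relations (\ref{RE7}) is exactly $\tilde\eta_q^i=T\xi_q^i$ together with $\tilde A=TAT^{\dagger}$, i.e. (via Proposition \ref{T-frame}) the statement that $F(\tilde\eta_q^i,\tilde A,n)$ is the $T$-transform of $F(\xi_q^i,A,n)$; meanwhile kernel equivalence becomes $K^{\tilde\eta}=K^{\xi}$. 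Thus the theorem reduces to: two frames sharing a reproducing kernel are related by a single $T\in GL(V_\mathbb{H}^R)$ as in Proposition \ref{T-frame}.

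For the forward implication I would simply compute $K^{\tilde\eta}$ from (\ref{RE7}). Using $\tilde A=TAT^{\dagger}$, hence $\tilde A^{-1}=(T^{\dagger})^{-1}A^{-1}T^{-1}$, the right-linearity of $T$ (which carries the scalars $u(q)_{lj}$ through untouched), the bra--ket rule $\langle\phi q\mid\psi\rangle=\overline q\langle\phi\mid\psi\rangle$, and the adjoint identity $\langle T\psi\mid\phi\rangle=\langle\psi\mid T^{\dagger}\phi\rangle$, one finds $\langle T\eta_p^k\mid(T^{\dagger})^{-1}A^{-1}\eta_q^l\rangle=\langle\eta_p^k\mid A^{-1}\eta_q^l\rangle=K^{\eta}_{kl}(p,q)$, so that $K^{\tilde\eta}_{ij}(p,q)=\sum_{k,l}\overline{u(p)_{ki}}\,K^{\eta}_{kl}(p,q)\,u(q)_{lj}=(u(p)^{\dagger}K^{\eta}(p,q)u(q))_{ij}$. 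This is precisely (\ref{RE5}), so the frames are kernel equivalent.

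The converse is where the work lies. Assume kernel equivalence, so after the reduction $K^{\tilde\eta}=K^{\xi}$. The key observation is that the canonical tight self-dual frames of Proposition \ref{CPr1} preserve the kernel: for $\hat\eta_q^i=A^{-1/2}\eta_q^i$ one checks $K^{\hat\eta}_{ij}(p,q)=\langle A^{-1/2}\eta_p^i\mid A^{-1/2}\eta_q^j\rangle=\langle\eta_p^i\mid A^{-1}\eta_q^j\rangle=K^{\eta}_{ij}(p,q)$, and likewise $K^{\hat{\tilde\eta}}=K^{\tilde\eta}$ and $K^{\hat\xi}=K^{\xi}$. Hence $\hat\xi_q^i=A^{-1/2}\xi_q^i$ and $\hat{\tilde\eta}_q^i=\tilde A^{-1/2}\tilde\eta_q^i$ are two self-dual tight frames (frame operator $I_{V_\mathbb{H}^R}$) with identical kernels; since for such a frame the kernel reduces to the Gram matrix $K^{\hat\xi}_{ij}(p,q)=\langle\hat\xi_p^i\mid\hat\xi_q^j\rangle$, the two families have equal Gram matrices. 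I would then define $U$ on the (dense) linear span of $\{\hat\xi_q^i\}$ by $U\hat\xi_q^i=\hat{\tilde\eta}_q^i$; equality of Gram matrices makes $U$ well defined and inner-product preserving, and totality of both frames lets it extend to a unitary $U\in\mathcal U(V_\mathbb{H}^R)$. Finally set $T=\tilde A^{1/2}UA^{-1/2}$, which lies in $GL(V_\mathbb{H}^R)$ by Theorem \ref{IT1} and Remark \ref{Re1}. Then $T\xi_q^i=\tilde A^{1/2}U\hat\xi_q^i=\tilde A^{1/2}\hat{\tilde\eta}_q^i=\tilde\eta_q^i$ and $TAT^{\dagger}=\tilde A^{1/2}U(A^{-1/2}AA^{-1/2})U^{\dagger}\tilde A^{1/2}=\tilde A^{1/2}UU^{\dagger}\tilde A^{1/2}=\tilde A$, which recovers (\ref{RE7}).

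The main obstacle I anticipate is the construction of the unitary $U$ in the continuous quaternionic setting: converting the pointwise equality of reproducing kernels into a genuinely well-defined isometry on the closed linear span and extending it to all of $V_\mathbb{H}^R$ requires the totality of the frame vectors and some care. Throughout, one must also keep the quaternionic scalars on the correct side (right-linearity of the operators and $\langle\phi q\mid\psi\rangle=\overline q\langle\phi\mid\psi\rangle$), so that the matrix identity (\ref{RE5}) emerges with $u(p)^{\dagger}$ on the left and $u(q)$ on the right and not conjugated incorrectly.
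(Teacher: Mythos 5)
Your proof is correct, but your converse direction follows a genuinely different route from the paper's. The forward implication coincides with the paper's computation: expand $K^{\tilde\eta}_{ij}(p,q)=\langle\tilde\eta_p^i\mid\tilde A^{-1}\tilde\eta_q^j\rangle$ using $\tilde A^{-1}=(T^{\dagger})^{-1}A^{-1}T^{-1}$, cancel $T^{\dagger}$ against $(T^{\dagger})^{-1}$, and collect the unitary entries on either side. For the converse, however, the paper never leaves the reproducing-kernel formalism: starting from the resolution of the identity $\sum_{i}\int_{\mathbb{H}}\vert\tilde\eta_p^i\rangle\langle\tilde\eta_p^i\vert\tilde A^{-1}\,d\mu(p)=I_{V_\mathbb{H}^R}$ and substituting the gauge relation (\ref{RE5}), it reads off an explicit candidate $T=\sum_{i,k}\int_{\mathbb{H}}\vert\tilde\eta_p^i\rangle\overline{u(p)_{ki}}\langle\eta_p^k\vert A^{-1}\,d\mu(p)$ satisfying $\tilde\eta_q^j=\sum_l(T\eta_q^l)u(q)_{lj}$, deduces $\tilde A=TAT^{\dagger}$ from unitarity of $u(q)$, and then proves invertibility by constructing the mirror-image operator $T'=\sum_{i,k}\int_{\mathbb{H}}\vert\eta_p^i\rangle u(p)_{ik}\langle\tilde\eta_p^k\vert\tilde A^{-1}\,d\mu(p)$ and invoking totality of both frame families to conclude $T'=T^{-1}$. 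You instead absorb $u(q)$ via the gauge transform $\xi_q^i=\sum_j\eta_q^ju(q)_{ji}$ (Theorems \ref{CFT3} and \ref{RT2}), pass to the canonical self-dual tight frames $\hat\xi_q^i=A^{-1/2}\xi_q^i$ and $\hat{\tilde\eta}_q^i=\tilde A^{-1/2}\tilde\eta_q^i$, whose kernels reduce to Gram matrices and are unchanged by this passage, build a unitary intertwiner $U$ from the equality of those Gram matrices, and set $T=\tilde A^{1/2}UA^{-1/2}$. Each approach buys something: the paper's $T$ is explicit as a superposition of rank-one operators, but its boundedness is merely asserted (``obviously a bounded operator'') and its invertibility consumes the entire second half of the paper's proof; your $T$ is manifestly in $GL(V_\mathbb{H}^R)$ as a product of boundedly invertible operators, so invertibility comes for free, and your argument exposes the structural fact underlying Theorem \ref{CFT4} --- kernel-equivalent frames share their canonical tight frame up to a unitary. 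The price you pay is the well-definedness/isometry/extension argument for $U$ (which you correctly flag, and which is settled by the Gram identity together with totality of the tight frame vectors, itself immediate from the tight frame condition $\sum_i\int_{\mathbb{H}}\vert\langle\hat\xi_q^i\mid\phi\rangle\vert^2d\mu(q)=\Vert\phi\Vert^2$), plus the tacit bounded invertibility of $A^{\pm1/2}$ --- a point the paper itself uses without comment in Proposition \ref{CPr1} and Theorem \ref{CFT4}, so your proof sits at the same level of rigor as the original.
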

\begin{proof}
Suppose (\ref{RE7}) holds. Also note that $(T^{\dagger})^{-1}=(T^{-1})^{\dagger}$ (see Theorem 2.15 in \cite{Ric}). Then
\begin{eqnarray*}
& &K_{ij}^{\tilde{\eta}}(p,q)=\langle\tilde{\eta}_p^i|\tilde{A}^{-1}\tilde{\eta}_q^j\rangle
=\sum_{k,l=1}^n\langle (T\eta_p^k)u(p)_{ki}\mid (T^{\dagger})^{-1}A^{-1}T^{-1}(T\eta_q^l)u(q)_{lj}\rangle\\
&=&\sum_{k,l=1}^n\overline{u(p)_{ki}}\langle (T\eta_p^k)\mid (T^{\dagger})^{-1}A^{-1}T^{-1}(T\eta_q^l)\rangle u(q)_{lj}
=\sum_{k,l=1}^n\overline{u(p)_{ki}}\langle \eta_p^k\mid A^{-1}\eta_q^l\rangle u(q)_{lj},
\end{eqnarray*}
and therefore $K^{\tilde{\eta}}=u(p)^{\dagger}K^{\eta}(p,q)u(q)$. Conversely suppose that the frames are kernel equivalent. Therefore, there are $n\times n$ quaternion unitary matrices $u(p)$ and $u(q)$ such that $K^{\tilde{\eta}}=u(p)^{\dagger}K^{\eta}(p,q)u(q)$ holds, and hence
\begin{eqnarray*}
\tilde{\eta}_q^j&=&\sum_{i=1}^{n}\int_{\mathbb{H}}\mid\tilde{\eta}_p^i\rangle\langle
\tilde{\eta}_p^i\mid\tilde{A}^{-1}\tilde{\eta}_q^j\rangle d\mu(p)
=\sum_{i,k,l=1}^n\int_{\mathbb{H}}|\tilde{\eta}_p^i\rangle\overline{u(p)_{ki}}\langle \eta_p^k|A^{-1}\eta_q^l\rangle u(q)_{lj}d\mu(p)\\
&=&\sum_{l=1}^n(T\eta_q^l)u(q)_{lj},
\end{eqnarray*}
where
$\displaystyle T=\sum_{i,k=1}^n\int_{\mathbb{H}}|\tilde{\eta}_p^i\rangle\overline{u(p)_{ki}}\langle \eta_p^k|A^{-1}d\mu(p)$
is, obviously, a bounded operator on $V_\mathbb{H}^R$. As $u(q)$ is unitary, we have
\begin{eqnarray*}
\tilde{A}&=&\sum_{i=1}^{n}\int_{\mathbb{H}}|\tilde{\eta}_q^i\rangle\langle\tilde{\eta}_q^i|d\mu(q)
=\sum_{i,k,l=1}^n\int_{\mathbb{H}}|(T\eta_q^l)u(q)_{li}\rangle\langle (T\eta_q^k)u(q)_{ki}|d\mu(q)\\
&=&\sum_{i,l,k=1}^n T\int_{\mathbb{H}}|\eta_q^l\rangle u(q)_{li}\overline{u(q)_{ki}}\langle\eta_q^k|d\mu(q)T^{\dagger}
=T\sum_{i=1}^{n}\int_{\mathbb{H}}|\eta_q^i\rangle\langle\eta_q^i|d\mu(q)T^{\dagger}
=TAT^{\dagger}.
\end{eqnarray*}
Now let us show that $T$ is invertible. For $\Phi,\Psi\in\mathfrak{H}_K$, the reproducing kernel Hilbert space $(\mathfrak{H}_K=\mathfrak{H}_{\eta}=\mathfrak{H}_{\tilde{\eta}})$, and $\phi,\psi\in V_\mathbb{H}^R$, set $\mathcal{W}_{\eta}\phi=\Phi_{\eta}$, $\mathcal{W}_{\eta}\psi=\Psi_{\eta}$, then
\begin{eqnarray*}
\langle\phi|T\psi\rangle&=&
\sum_{i,k=1}^n\int_{\mathbb{H}}\langle\phi|\tilde{\eta}_p^i\rangle \overline{u(p)_{ki}}\langle\eta_p^k| A^{-1}\psi\rangle d\mu(p)\\
&=&
\sum_{i,k=1}^n\int_{\mathbb{H}} \overline{(\mathcal{W}_{\tilde{\eta}}\phi)_i(p)}~~\overline{u(p)_{ki}}(\mathcal{W}_{\eta}A^{-1}\psi)_k(p) d\mu(p),
\end{eqnarray*}
where we have used 
$$\langle\phi|\tilde{\eta}_p^i\rangle=\overline{(\mathcal{W}_{\tilde{\eta}}\phi)_i(p)}
\quad\text{and}\quad
\langle\eta_p^k|A^{-1}\psi\rangle=(\mathcal{W}_{\eta}A^{-1}\psi)_k(p).$$
Since $A_{\eta}=\mathcal{W}_{\eta}A\mathcal{W}_{\eta}^{-1}$ we have $A_{\eta}^{-1}\mathcal{W}_{\eta}=\W A^{-1}$. Further, we have$(\mathcal{W}_{\tilde{\eta}}\phi)_i(p)=(\Phi_{\tilde{\eta}})_i(p)$ and $(A_{\eta}^{-1}\W \psi)_k=(A_{\eta}^{-1}\Psi_{\eta})_k(p)$. Therefore
\begin{eqnarray*}
\langle\phi|T\psi\rangle&=&\sum_{i,k=1}^n\int_{\mathbb{H}} \overline{(\mathcal{W}_{\tilde{\eta}}\phi)_i(p)}~~\overline{u(p)_{ki}}(A_{\eta}^{-1}\mathcal{W}_{\eta}\psi)_k(p) d\mu(p)\\
&=&\sum_{i,k=1}^n\int_{\mathbb{H}} \overline{(\Phi_{\tilde{\eta}})_i(p)}~~\overline{u(p)_{ki}}(A_{\eta}^{-1}\Psi_{\eta})_k(p)d\mu(p).
\end{eqnarray*}
Since $(\Phi_{\tilde{\eta}})_i(p)$ and $\displaystyle\sum_{k=1}^n\overline{u(p)_{ki}}(A_{\eta}^{-1}\Psi_{\eta})_k(p);~~i=1,2,...,n$ define vectors in $L^2_{\mathbb{H}^n}(\mathbb{H}, d\mu)$, the last integral converges. Further $u(p)$ is unitary, therefore from the expression of $\langle\tilde{\eta}_p^i|\tilde{A}^{-1}\tilde{\eta}_q^j\rangle$ we get
$$\langle\eta_p^i|A^{-1}\eta_q^j\rangle
=\sum_{k,l=1}^n u(p)_{ik}\langle\tilde{\eta}_p^k|\tilde{A}^{-1}\tilde{\eta}_q^l\rangle\overline{u(q)_{lj}}.$$
By setting
$$T'=\sum_{i,k=1}^n\int_{\mathbb{H}}|\eta_p^i\rangle u(p)_{ik}\langle\tilde{\eta}_p^k|\tilde{A}^{-1}d\mu(p)$$
we see that
$$T'\tilde{\eta}_q^l=\sum_{i,k=1}^n\int_{\mathbb{H}}|\eta_p^i\rangle u(p)_{ik}\langle\tilde{\eta}_p^k|\tilde{A}^{-1}\tilde{\eta}_q^l\rangle d\mu(p).$$
Therefore
\begin{eqnarray*}
\sum_{l=1}^n(T'\tilde{\eta}_q^l)\overline{u(q)_{lj}}&=&
\sum_{i,k,l=1}^n\int_{\mathbb{H}}|\eta_p^i\rangle u(p)_{ik}\langle\tilde{\eta}_p^k|\tilde{A}^{-1}\tilde{\eta}_q^l\rangle\overline{u(q)_{lj}} d\mu(p)\\
&=&\sum_{i=1}^n|\int_{\mathbb{H}}\eta_p^i\rangle\langle\eta_p^i|A^{-1}\eta_q^j\rangle d\mu(p)=\eta_q^j.
\end{eqnarray*}
That is, we have
\begin{equation}\label{In}
\tilde{\eta}_q^j=\sum_{l=1}^n(T\eta_q^l)u(q)_{lj}\quad\text{and}\quad \eta_q^j=\sum_{l=1}^n(T'\tilde{\eta}_q^l)\overline{u(q)_{lj}}.
\end{equation}
Since $\{\eta_q^i~|~i=1,2,...,n;~q\in \mathbb{H}\}$ and $\{\tilde{\eta}_q^i~|~i=1,2,...,n;~q\in \mathbb{H}\}$ are total in $V_\mathbb{H}^R$, (\ref{In}) implies that $T'=T^{-1}$.
\end{proof}
Let $\mathfrak{F}$ be the set of all rank-$n$ continuous frames in $V_\mathbb{H}^R$. Define a relation $\sim$ on $\mathfrak{F}$ by
$$F_1\sim F_2\Longleftrightarrow F_1~~\text{and}~~F_2~~\text{are kernel equivalent frames}.$$
Then it can easily be verified that $\sim$ is an equivalent relation. For any fixed $F_0\in\mathfrak{F}$, let $[F_0]$ denotes the equivalent class of $F_0$.
\begin{definition}
An equivalent class $[F_0]$ is said to be self-dual if for a given $F_1\in[F_0]$ there exists $F_2\in[F_0]$ such that $F_2$ is the dual of $F_1$.
\end{definition}
\begin{proposition}\label{Pr9}
Every class of kernel equivalent frames is self-dual.
\end{proposition}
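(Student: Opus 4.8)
The plan is to take an arbitrary frame $F_1\in[F_0]$, form its dual frame, and show that this dual already lies in the same kernel-equivalence class, so that it can serve as the required $F_2$. Write $F_1=F(\eta_q^i,A,n)$. By Theorem \ref{CFT1} the dual frame $F(\overline{\eta}_q^i,A^{-1},n)$, with $\overline{\eta}_q^i=A^{-1}\eta_q^i$, is again a rank $n$ continuous frame and hence belongs to $\mathfrak{F}$; in particular there is nothing to check about $A^{-1}$ being positive and invertible, since that is precisely the content of Theorem \ref{CFT1}.

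The key step is to verify that $F_1$ and its dual are kernel equivalent in the sense of Definition \ref{RD3}. Here Theorem \ref{RT1} does the work for us: a frame and its dual have the same reproducing kernel, that is $K^{\overline{\eta}}(p,q)=K^{\eta}(p,q)$ for all $p,q\in\mathbb{H}$. Now recall that kernel equivalence requires only that the two kernels be gauge related as in (\ref{RE5}), with no condition relating the frame operators (this is where Definition \ref{RD3} differs from gauge equivalence, Definition \ref{RD2}). Choosing the quaternion unitary matrix $u(q)=I_n$, the $n\times n$ identity (which trivially satisfies $\sum_j u(q)_{ij}\overline{u(q)_{kj}}=\delta_{ik}$), the gauge relation (\ref{RE5}) reads $u(p)^{\dagger}K^{\eta}(p,q)u(q)=K^{\eta}(p,q)=K^{\overline{\eta}}(p,q)$. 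Thus the kernels of $F_1$ and its dual are gauge related, so the two frames are kernel equivalent.

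It follows that $F(\overline{\eta}_q^i,A^{-1},n)\sim F_1$. Since $F_1\in[F_0]$ we have $[F_1]=[F_0]$, and therefore the dual frame lies in $[F_0]$ as well. Setting $F_2=F(\overline{\eta}_q^i,A^{-1},n)$ exhibits, for the arbitrarily chosen $F_1\in[F_0]$, a frame $F_2\in[F_0]$ that is the dual of $F_1$, which is exactly the defining property of a self-dual class. Hence $[F_0]$ is self-dual.

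I do not expect a genuine analytic obstacle in this argument: it rests entirely on the already-established preservation of the reproducing kernel under duality (Theorem \ref{RT1}) together with the elementary observation that the identity matrix is an admissible choice of the unitary matrix in the gauge relation. The only point deserving a line of care is confirming that the dual frame is a bona fide member of $\mathfrak{F}$, but this is guaranteed by Theorem \ref{CFT1} (and, if one wishes to see $0<m(A),M(A)<\infty$ explicitly, by Remark \ref{Re1}), so no additional estimate is needed.
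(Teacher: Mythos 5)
Your proposal is correct and follows essentially the same route as the paper: construct the dual frame via Theorem \ref{CFT1} and observe that its kernel coincides with the original one, so the gauge relation (\ref{RE5}) holds with $u(p)=u(q)=\mathbb{I}_n$. The only cosmetic differences are that you invoke Theorem \ref{RT1} where the paper redoes that one-line kernel computation inline, and that you handle the quantifier over an arbitrary $F_1\in[F_0]$ slightly more explicitly.
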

\begin{proof}
Fix $F_0(\eta_q^i, A, n)$ and set $\overline{\eta}_q^i=A^{-1}\eta_q^i$. Then by theorem (\ref{CFT1}),$ F(\overline{\eta}_q^i, A^{-1}, n)$ is dual to $F_0$. The kernel equivalence follows as
$$K_{ij}^{\overline{\eta}}(p,q)=\langle\overline{\eta}_p^i|A\overline{\eta}_q^j\rangle
=\langle A^{-1}\eta_p^i|\eta_q^i\rangle=\langle\eta_p^i|A^{-1}\eta_q^j\rangle=K_{ij}^{\eta}(p,q).$$
That is $K^{\overline{\eta}}(p,q)=u(p)^{\dagger}K^{\eta}(p,q)u(q)$ with $u(p)=u(q)=\mathbb{I}_n$, the $n\times n$ identity matrix.
\end{proof}
\begin{theorem}\label{CFT4}
Each kernel equivalence class contains a unique subclass of self-dual tight frames which can be generated by the joint action, on any fixed member of this subclass, of the set of all $n\times n$ unitary matrices, $\mathcal{U}(n,\mathbb{H})$, and the set of all unitary operators, $\mathcal{U}(V_\mathbb{H}^R)$.
\end{theorem}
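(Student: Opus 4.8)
The plan is to realize the self-dual tight frames inside a kernel equivalence class $[F_0]$ as a single orbit of the joint action of $\mathcal{U}(n,\mathbb{H})$ and $\mathcal{U}(V_\mathbb{H}^R)$, and to deduce uniqueness from this description. Recall that a frame $F(\zeta_q^i, B, n)$ is self-dual and tight exactly when $B = I_{V_\mathbb{H}^R}$. I would first show the subclass is non-empty: starting from any representative $F(\eta_q^i, A, n) \in [F_0]$, Theorem \ref{IT1} furnishes the positive self-adjoint root $A^{1/2}$, and since $A \in GL(V_\mathbb{H}^R)$ the operator $A^{-1/2}$ lies in $GL(V_\mathbb{H}^R)$ as well. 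Setting $\hat\eta_q^i = A^{-1/2}\eta_q^i$, Proposition \ref{CPr1} produces the self-dual tight frame $F(\hat\eta_q^i, I_{V_\mathbb{H}^R}, n)$. Taking $T = A^{-1/2}$ and $u(q) = \mathbb{I}_n$ in Theorem \ref{RT5}, so that $\hat\eta_q^i = \sum_{j=1}^n (T\eta_q^j) u(q)_{ji}$ and $\hat A = A^{-1/2} A (A^{-1/2})^\dagger = I_{V_\mathbb{H}^R}$, shows this frame is kernel equivalent to $F(\eta_q^i, A, n)$ and hence lies in $[F_0]$.

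Next I would verify that the joint action keeps us inside this subclass. Fix a self-dual tight frame $F(\hat\eta_q^i, I_{V_\mathbb{H}^R}, n) \in [F_0]$, choose $U \in \mathcal{U}(V_\mathbb{H}^R)$ and a unitary matrix $u(q) = (u(q)_{ij})_{n\times n} \in \mathcal{U}(n,\mathbb{H})$, and set $\check\eta_q^i = \sum_{j=1}^n (U\hat\eta_q^j) u(q)_{ji}$. Applying Theorem \ref{RT5} with $T = U$ gives $\check A = U\, I_{V_\mathbb{H}^R}\, U^\dagger = U U^\dagger = I_{V_\mathbb{H}^R}$, so $F(\check\eta_q^i, I_{V_\mathbb{H}^R}, n)$ is again self-dual and tight, and the same theorem certifies that it is kernel equivalent to $F(\hat\eta_q^i, I_{V_\mathbb{H}^R}, n)$, hence remains in $[F_0]$.

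The heart of the argument, and the step I expect to demand the most care, is the converse: any two self-dual tight frames in $[F_0]$ differ by such a joint action. Let $F(\hat\eta_q^i, I_{V_\mathbb{H}^R}, n)$ and $F(\check\eta_q^i, I_{V_\mathbb{H}^R}, n)$ both belong to $[F_0]$. By Theorem \ref{RT5} there are $T \in GL(V_\mathbb{H}^R)$ and unitary matrices $u(q)$ with $\check\eta_q^i = \sum_{j=1}^n (T\hat\eta_q^j) u(q)_{ji}$ and $\check A = T \hat A T^\dagger$, that is $I_{V_\mathbb{H}^R} = T\, I_{V_\mathbb{H}^R}\, T^\dagger = T T^\dagger$. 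Because $T \in GL(V_\mathbb{H}^R)$ is invertible with bounded inverse, $T T^\dagger = I_{V_\mathbb{H}^R}$ forces $T^\dagger = T^{-1}$ and therefore $T^\dagger T = I_{V_\mathbb{H}^R}$ too, so $T \in \mathcal{U}(V_\mathbb{H}^R)$. Consequently the relation $\check\eta_q^i = \sum_{j=1}^n (T\hat\eta_q^j) u(q)_{ji}$ exhibits $F(\check\eta_q^i, I_{V_\mathbb{H}^R}, n)$ as the image of $F(\hat\eta_q^i, I_{V_\mathbb{H}^R}, n)$ under the joint action of $U = T \in \mathcal{U}(V_\mathbb{H}^R)$ and $u(\cdot) \in \mathcal{U}(n,\mathbb{H})$.

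Putting the three steps together, the self-dual tight frames contained in $[F_0]$ form precisely one orbit under the joint action, generated from any fixed member of the subclass. Since this set is nothing but the collection of all self-dual tight frames lying in the given kernel equivalence class, it is uniquely determined by $[F_0]$, which yields the claimed uniqueness.
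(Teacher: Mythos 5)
Your proposal is correct, and it in fact proves more than the paper's own argument does. The paper proceeds exactly as in your first two steps: fix $F(\eta_q^i,A,n)$, set $\hat{\eta}_q^i=A^{-1/2}\eta_q^i$, and apply the joint action $\xi_q^i=\sum_{j}(U\hat{\eta}_q^j)u(q)_{ji}$; it then verifies by direct computation that the resulting family has frame operator $I_{V_\mathbb{H}^R}$ and that its kernel is gauge related to $K^{\eta}$ in the sense of (\ref{RE5}), and stops there. That exhibits the orbit $\mathfrak{S}$ as a self-dual tight subclass of the kernel equivalence class, but it never addresses why this subclass is \emph{unique}, i.e., why every self-dual tight frame in the class lies in this one orbit. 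Your third step supplies precisely that missing ingredient: for two self-dual tight frames in the same class, the converse direction of Theorem \ref{RT5} produces $T\in GL(V_\mathbb{H}^R)$ and unitary $u(q)$ with $\check{\eta}_q^i=\sum_{j}(T\hat{\eta}_q^j)u(q)_{ji}$ and $I_{V_\mathbb{H}^R}=TT^{\dagger}$, and invertibility of $T$ upgrades $TT^{\dagger}=I_{V_\mathbb{H}^R}$ to $T^{\dagger}=T^{-1}$, hence $T\in\mathcal{U}(V_\mathbb{H}^R)$; so the self-dual tight frames in the class form exactly one orbit of the joint action, which is the uniqueness asserted in the statement. The trade-off between the two routes is clear: the paper's computations are self-contained, while you lean on Theorem \ref{RT5} in both directions—legitimate, since that theorem is stated as an equivalence—and this reliance delivers the converse essentially for free. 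One detail to make explicit in your second step: Theorem \ref{RT5} presupposes that both families are already frames, so before invoking it you should note (via Proposition \ref{T-frame} and Theorem \ref{CFT3}, or the paper's one-line computation) that the jointly transformed family is indeed a frame with frame operator $UI_{V_\mathbb{H}^R}U^{\dagger}=I_{V_\mathbb{H}^R}$.
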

\begin{proof}
Fix $F(\eta_q^i, A, n)$ a rank-$n$ frame. Let $\hat{\eta}_q^i=A^{-\frac{1}{2}}\eta_q^i$, $u(q)\in \mathcal{U}(n,\mathbb{H})$, $U\in \mathcal{U}(V_\mathbb{H}^R)$, and 
\begin{equation}\label{X}
\xi_q^i=\sum_{j=0}^n(U\hat{\eta}_q^j)u(q)_{ji}.
\end{equation}
Then
\begin{eqnarray*}
\sum_{i=1}^{n}\int_{\mathbb{H}}|\xi_q^i\rangle\langle\xi_q^i|d\mu(q)&=&
\sum_{i,j,k=1}^n\int_{\mathbb{H}}|U\hat{\eta}_q^j\rangle u(q)_{ji}\overline{u(q)_{ki}}\langle U\hat{\eta}_q^k|d\mu(q)\\
&=&U\sum_{i=1}^{n}\int_{\mathbb{H}}|A^{-1/2}\eta_q^i\rangle\langle A^{-1/2}\eta_q^i|U^{\dagger}d\mu(q)=\IV
\end{eqnarray*}
and
\begin{eqnarray*}
K_{ij}^{\xi}(p,q)&=&\langle\xi_p^i|\xi_q^j\rangle=\sum_{k,l=1}^n
\langle (U\hat{\eta}_q^k)u(q)_{ki}|(U\hat{\eta}_q^l)u(q)_{lj}\rangle\\
&=&\sum_{k,l=1}^n
\overline{u(q)_{ki}} \langle(U\hat{\eta}_q^k)|(U\hat{\eta}_q^l)\rangle u(q)_{lj}
=\sum_{k,l=1}^n
\overline{u(q)_{ki}}\langle \hat{\eta}_q^k|\hat{\eta}_q^l\rangle u(q)_{lj}\\
&=&\sum_{k,l=1}^n
\overline{u(q)_{ki}}\langle \eta_q^k|A^{-1}\eta_q^l\rangle u(q)_{lj}=\sum_{k,l=1}^n
\overline{u(q)_{ki}}K_{kl}^{\eta}(p,q)u(q)_{lj}.
\end{eqnarray*}
Therefore $\mathfrak{S}=\{F(\xi_q^i, \IV, n)~|~\xi_q^i$~~\text{is given by (\ref{X})}\}$\subset [F(\eta_q^i, A,n)]$ and the class $\mathfrak{S}$ is self-dual and tight.
\end{proof}
\begin{definition}\label{CFD2}
Two frames $F(\eta_q^i, A, n)$ and $F(\tilde{\eta}_q^i, \tilde{A}, n)$ are said to be bundle equivalent if there exists a rank-$n$ operator $T(q),~q\in \mathbb{H}$ on $V_\mathbb{H}^R$ such that
\begin{equation}\label{B}
\tilde{\eta}_q^i=\sum_{j=1}^n (T(q)\eta_q^j)u(q)_{ji}.
\end{equation}
\end{definition}
Note that in the above definition we do not impose any connection between $A$ and $\tilde{A}$. In this regard, this kind of equivalence is different from kernel equivalence. In fact, we do not necessarily have any connection between the corresponding reproducing kernels.
\begin{theorem}\label{CFT5}
Two bundle equivalent frames $F(\eta_q^i, A, n)$ and $F(\tilde{\eta}_q^i, \tilde{A}, n)$ are kernel equivalent if there exists $T\in GL(V_\mathbb{H}^R)$ such that
$T(q)=TP(q)\quad\text{and}\quad \tilde{A}=TAT^{\dagger},$
where $P(q)$ is the projection operator onto the range of $S(q)$, the operator as in equation (\ref{E32}).
\end{theorem}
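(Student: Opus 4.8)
The plan is to reduce the hypotheses of this theorem to those of Theorem \ref{RT5}, whose ``if'' direction already delivers kernel equivalence; the only genuinely new point is to extract the geometric meaning of the factor $P(q)$.

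First I would identify the range of $S(q)$. Since $S(q)=\sum_{i=1}^n|\eta_q^i\rangle\langle\eta_q^i|$ is positive and self-adjoint, one has $\langle S(q)\phi|\phi\rangle=\sum_{i=1}^n|\langle\eta_q^i|\phi\rangle|^2$, so $\phi\in\ker S(q)$ precisely when $\phi\perp\eta_q^i$ for every $i$; that is, $\ker S(q)=(\mbox{span}\{\eta_q^i\})^{\perp}$. As $\mbox{ran}(S(q))\subseteq\mbox{span}\{\eta_q^i\}$ is finite dimensional, hence closed, the self-adjoint identity $\mbox{ran}(S(q))=\overline{\mbox{ran}(S(q))}=(\ker S(q))^{\perp}$ gives $\mbox{ran}(S(q))=\mbox{span}\{\eta_q^i~|~i=1,2,\dots,n\}$ (as already recorded in the proof of Theorem \ref{CFT3}). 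In particular $P(q)\eta_q^j=\eta_q^j$ for every $j$.

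Next, substituting $T(q)=TP(q)$ into the bundle-equivalence relation (\ref{B}) and using $P(q)\eta_q^j=\eta_q^j$ yields $T(q)\eta_q^j=T\eta_q^j$, so that
$$\tilde{\eta}_q^i=\sum_{j=1}^n(T\eta_q^j)u(q)_{ji};\quad i=1,2,\dots,n.$$
Combined with the standing assumption $\tilde{A}=TAT^{\dagger}$, these are exactly the relations (\ref{RE7}) required in Theorem \ref{RT5}. Invoking the already-proven ``if'' direction of that theorem, the frames $F(\eta_q^i,A,n)$ and $F(\tilde{\eta}_q^i,\tilde{A},n)$ are kernel equivalent, as claimed.

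The main (and essentially only) obstacle is the range identification $\mbox{ran}(S(q))=\mbox{span}\{\eta_q^i\}$, equivalently $P(q)\eta_q^j=\eta_q^j$: without it the factor $P(q)$ in $T(q)=TP(q)$ would not drop out and the hypotheses of Theorem \ref{RT5} could not be matched. Once this is settled, the remainder is a one-line substitution followed by a citation of Theorem \ref{RT5}.
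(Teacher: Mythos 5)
Your proof is correct and rests on essentially the same ideas as the paper's: both proofs hinge on the identity $P(q)\eta_q^j=\eta_q^j$ (equivalently $\mathrm{ran}\,S(q)=\mathrm{span}\{\eta_q^i\}$) together with the cancellation of $T$ through $\tilde{A}^{-1}=(T^{\dagger})^{-1}A^{-1}T^{-1}$; the paper simply inlines the kernel computation of Theorem \ref{RT5} with $P(q)$ inserted, whereas you factor the argument through that theorem by first stripping off $P(q)$ and then citing its ``if'' direction. Your explicit verification of the range identification is a nice refinement, since the paper dismisses this step with the terse remark ``as $P(q)$ is a projection operator and $\{\eta_q^i\}$ is a linearly independent set.''
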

\begin{proof}
Consider
\begin{eqnarray*}
K_{ij}^{\tilde{\eta}}(p,q)&=&\langle\tilde{\eta}_p^i|\tilde{A}^{-1}\tilde{\eta}_q^j\rangle
=\sum_{k,l=1}^n\overline{u(p)_{ki}}\langle TP(p)\eta_p^k|(T^{\dagger})^{-1}A^{-1}T^{-1}TP(q)\eta_q^l\rangle u(q)_{lj}\\
&=&\sum_{k,l=1}^n\overline{u(p)_{ki}}\langle P(p)\eta_p^k|A^{-1}P(q)\eta_q^l\rangle u(q)_{lj}
=\sum_{k,l=1}^n\overline{u(p)_{ki}}\langle \eta_p^k|A^{-1}\eta_q^l\rangle u(q)_{lj}
\end{eqnarray*}
as $P(q)$ is a projection operator and $\{\eta_q^i~|~i=1,2,...,n\}$ is a linearly independent set.
\end{proof}
\subsection*{Acknowledgement} MK would like to thank S. Srisatkunarajah for his interest and encouragement. 


\begin{thebibliography}{XXXX}
\bibitem{Ad}Adler, S.L., {\em Quaternionic quantum mechanics and quantum fields}, Oxford University Press, New York, 1995.

\bibitem{A1}Ali, S.T, Antoine, J-P, Gazeau, J-P., \textit{Continuous frames in Hilbert spaces}, Ann.Phys., {\bf 222} (1993), 1-37.

\bibitem{Alibk} Ali, S.T., Antoine, J-P, Gazeau, J-P, {\em Coherent states, wavelets, and their generalization}, (2 nd ed.) Springer-Verlag, New York, 2014.

\bibitem{AC} Alpay, D., Colombo, F., Kimsey, D.P., {\em The spectral theorem for quaternionic unbounded normal operators based on the $S$-spectrum}, arXiv:1409.7010.

\bibitem{AF}Alpay, D., Colombo, F., Gantner,J., Sabadini,I., \textit{A new resolvent equation for the S-functional calculus},  J. Geom. Anal. (2014), DOI 10.1007/s12220-014-9499-9. 

\bibitem{Fab}Colombo, F., Sabadini, I., \textit{On Some Properties of the Quaternionic Functional Calculus}, J. Geom. Anal., {\bf 19} (2009), 601-627.

\bibitem{Ole} Christensen, O., {\em An Introduction to Frames and Riesz Bases}, Birkh\"auser, Boston, 2003.

\bibitem{D1}Daubechies, I., Grossmann, A., Meyer, Y., \textit{Painless nonorthogonal expansions}, J. Math. Phys. {\bf 27} (1986) 1271-1283. 

\bibitem{D2}Daubechies, I., \textit{Ten lectures on wavelets}, SIAM. Philadelphia, 1992.

\bibitem{DU} Duffin, R.J., Schaeffer, A.C., \textit{A class of nonharmonic Fourier series}, Trans. Amer. Math. Soc.. {\bf 72} (1952) 341-366.

\bibitem{Fa} Fashandi, M., {\em Some properties of bounded linear operators on quaternionic Hilbert spaces},  Kochi J. Math. {\bf 9} (2014) 127-135.
    
\bibitem{Ric} Ghiloni, R.,  Moretti, V., Perotti, A., {\em Continuous slice functional calculus in quaternionic Hilbert spaces}, Rev. Math. Phys. {\bf 25} (2013), 1350006.
 
\bibitem{GR} Grochenig, K.H., \textit{Foundations of time-frequency analysis}, Birkh\"auser, Boston, 2000.

\bibitem{Hel} Helmberg, G., {\em Introduction to spectral theory in Hilbert spaces}, John Wiley, New York, 1969.

\bibitem {KH} Khokulan, M., Thirulogasanthar, K., Srisatkunarajah, S., {\em Discrete frames on finite dimensional quaternion Hilbert spaces},   arXiv:1302.2836.

\bibitem{Mu} Muraleetharan, B., Thirulogasanthar, K., {\em  Coherent state quantization of quaternions}, arXiv:1406.0028.

\bibitem{Sha} Sharma, S.C., Almeida, F.D., {\em Additive functionals and operators on a quaternion Hilbert space}, J. Math. Phys. {\bf 30}  (1989) 369-375.

\bibitem{Thi1}Thirulogasanthar, K., Twareque Ali, S., {\em Regular subspaces of a quaternionic Hilbert space from quaternionic Hermite polynomials and associated coherent states}, J. Math. Phys., {\bf 54} (2013), 013506.

\bibitem{TH} Thirulogasanthar, K., Honnouvo, G., Krzyzak, A. \textit{Coherent states and Hermite polynomials on quaternionic Hilbert spaces}, J.Phys.A: Math. Theor., {\bf 43} (2010) 385205.

\end{thebibliography}
\end{document}